\documentclass[preprint,3p]{elsarticle}
\makeatletter
\usepackage{setspace}
\onehalfspacing
\oddsidemargin .0in \evensidemargin .0in \textwidth 6.5in
\topmargin-.25in \textheight 22cm
\usepackage{epsfig}
\usepackage{rotating}
\usepackage{amssymb}
\usepackage{xcolor}
\usepackage[T1]{fontenc}
\usepackage{lmodern}
\usepackage[utf8]{inputenc}
\newenvironment{proof}{\paragraph{Proof:}}{\hfill(Proved)}
\usepackage{booktabs}
\usepackage{multirow}
\usepackage{enumerate}
\usepackage{caption}
\usepackage{subcaption}
\usepackage{lscape,float}
\usepackage{graphicx}
\usepackage[superscript,biblabel]{cite}
\usepackage{xcolor}
\usepackage{soul}
\usepackage[toc]{appendix}

\usepackage{natbib,hyperref,doi}
\usepackage[superscript,biblabel]{cite}
\setcitestyle{authoryear,round}
\makeatletter
\renewcommand\@biblabel[1]{#1.}
\makeatother
\usepackage{algorithm}
\usepackage[noend]{algpseudocode}


\usepackage[fleqn]{amsmath}

\usepackage{gensymb}
\usepackage{mathtools}
\newtheorem{theorem}{Theorem}

\newtheorem{lemma}{Lemma}
\makeatletter
\renewcommand\@biblabel[1]{#1.}
\makeatother

\begin{document}
\newcommand{\ve}[1]{\mbox{\boldmath$#1$}}
\newcommand{\IR}{\mbox{$I\!\!R$}}
\newcommand{\1}{\Rightarrow}
\newcommand{\bs}{\baselineskip}
\newcommand{\esp}{\end{sloppypar}}
\newcommand{\be}{\begin{equation}}
\newcommand{\ee}{\end{equation}}
\newcommand{\beano}{\begin{eqnarray*}}
\newcommand{\inp}[2]{\left( {#1} ,\,{#2} \right)}
\newcommand{\eeano}{\end{eqnarray*}}
\newcommand{\bea}{\begin{eqnarray}}
\newcommand{\eea}{\end{eqnarray}}
\newcommand{\ba}{\begin{array}}
\newcommand{\ea}{\end{array}}
\newcommand{\nno}{\nonumber}
\newcommand{\dou}{\partial}
\newcommand{\bc}{\begin{center}}
\newcommand{\ec}{\end{center}}
\newcommand{\2}{\subseteq}
\newcommand{\cl}{\centerline}
\newcommand{\ds}{\displaystyle}

\def\refhg{\hangindent=20pt\hangafter=1}
\def\refmark{\par\vskip 2.50mm\noindent\refhg}

\title{Reliability Acceptance Sampling Plans under Progressive Type-I Interval Censoring Schemes in Presence of Dependent Competing Risks}
\author[label1]{Rathin Das\corref{cor1}}
\cortext[cor1]{Corresponding author}
\ead{rathindas65@gmail.com}
\author[label2]{Soumya Roy}
\author[label1]{Biswabrata Pradhan}
\address[label1]{SQC and OR Unit, Indian Statistical Institute, Kolkata, India}
\address[label2]{Indian Institute of Management Kozhikode, Kozhikode, Pin 673570, India}


\date{}

\begin{abstract}
We discuss the development of reliability acceptance sampling plans under progressive Type-I interval censoring schemes in the presence of competing causes of failure. We consider a general framework to accommodate the presence of independent or dependent competing risks and derive the expression for the Fisher information matrix under this framework. We also discuss the asymptotic properties of the maximum likelihood estimators, which are essential in obtaining the sampling plans. Subsequently, we specialize in a frailty model, which allows us to accommodate the dependence among the potential causes of failure. The frailty model provides an independent competing risks model as a limiting case. We then present the traditional sampling plans for both independent and dependent competing risks models using producer’s and consumer’s risks. We also consider the design of optimal PIC-I schemes in this context and use a c-optimal design criterion, which helps us to obtain more useful reliability acceptance sampling plans in the presence of budgetary constraints. We conduct a comprehensive numerical experiment to examine the impact of the level of dependence among the potential failure times on the resulting sampling plans. We demonstrate an application of the developed methodology using a real-life example and perform a simulation study to study the finite sample properties of the developed sampling plans. The methodology developed in this article has the potential to improve the design of optimal censoring schemes in the presence of competing risks while taking into account budgetary constraints.   
\end{abstract}

\begin{keyword}
Competing Risks, Gamma Frailty, Maximum Likelihood Estimators, Reliability Acceptance Sampling Plan, Weibull Distribution
\end{keyword}
\maketitle

\section{Introduction}\label{INTRO}
 \indent In the present-day business world, characterized by intense competition, prioritizing the reliability of products is paramount for manufacturers. Although reliability is an important consideration during the product development phase, achieving the target reliability levels throughout the post-production stages is equally critical. A standard approach is to achieve this through life tests \citep[pp.~2--3]{ME_98}. However, these tests are often performed under various censoring schemes, resulting in incomplete data sets. Among various censoring schemes, Type-I and Type-II schemes are perhaps the two most widely adopted schemes, primarily because of their inherent simplicity and ease of implementation.  These traditional schemes require continuous inspection during the experiment, which may not be feasible due to insufficient resources. In view of this, interval censoring schemes are proposed in the literature, resulting in grouped failure data sets \citep{COUSINO2024110424}. \\
\indent A major drawback of these interval censoring schemes is their inability to accommodate intermediate withdrawals of test units during the course of an experiment. However, such intermediate withdrawals may be unavoidable for various practical reasons \citep{LEE2024109843}. For example, consider the life test performed on 68 battery cells, as discussed by \citet[][p.~633]{ME_98}. Some battery cells are withdrawn from the ongoing experiment to assess the extent of physical degradation. To accommodate such intermediate withdrawals of test units from the life test, progressive Type-I interval censoring (henceforth, PIC-I) schemes are proposed in the reliability literature \citep{aggar_01}. \\
\indent In the existing literature, there is a considerable number of work on statistical inference of various important lifetime models based on such PIC-I data sets. See, for example, \citet{aggar_01, Ng_09, chen_10, roy_17, roy_ijqrm_21}. 
These works mainly focused on the items that have a single failure mode and discussed classical and Bayesian inference of lifetime parameters based on PIC-I data sets. However, as in the case of the battery cell data set mentioned above, an item often experiences multiple failure modes, causing it to fail when one of these failure modes occurs. These failure modes are commonly referred to as competing risks in the literature \citep{SARHAN2010953}. There is limited literature on statistical inference of lifetime data sets under PIC-I schemes in the presence of competing risks. See \citet{ROY_BP_23} for a more detailed account of related literature in this regard. \\
\indent However, an important issue in this context is the optimal planning of life-test experiments under these PIC-I schemes. Recently, this has attracted a lot of interest in the literature. \citet{lin_09} and \citet{lin_11} provided $A$- and $D$-optimal designs for log-normal and Weibull models, respectively. \citet{roy_17} presented Bayesian $D$-optimal design for the log-normal lifetime model. This was further extended to Bayesian $c$-optimal designs by \citet{roy_19} for Weibull and log-normal lifetime models. \citet{sonal_17} provided a cost function-based approach in this context. These works mainly focused on determining the optimal design parameters, such as optimal inspection times and the corresponding number of test unit withdrawals from the experiment. However, a more pertinent question in this context may be the design of optimal reliability acceptance sampling plans (henceforth, RASPs) \citep{ZHENG2023108877}. RASPs are essentially variations of variable acceptance sampling plans, where we use the reliability of a product as a measurable quality characteristic of interest \citep{FERNANDEZ2012719,BHATTACHARYA201591}. \citet{huang_08} presented a cost function-based approach to obtain the RASPs for the exponential lifetime model. \citet{Wu_14} further extended their work in the competing risks framework, assuming that potential failure times follow independent exponential lifetime distributions. \\
\indent In this article, we present the RASPs under PIC-I schemes in the presence of competing risks. Unlike existing work, we develop a general framework for modeling the PIC-I data set in the presence of competing risks. This framework allows us to incorporate the presence of both independent and dependent competing risks. Potential failure times under competing failure modes are typically assumed to be independent in the existing literature. However, these failure times are often correlated in practice (see, for example, \citet{ZHANG2024109796,LIU_2012_TM}).  Subsequently, we obtain the Fisher information matrix under this general framework. The Fisher information matrix depends only on the first-order partial derivatives, which reduces the computational complexity to a great extent. We also present the asymptotic properties of the maximum likelihood estimators (henceforth, MLEs) under a set of regularity conditions. These asymptotic properties of the MLEs are essential for the development of subsequent RASPs.\\
\indent We then consider a frailty-based dependent competing risks model as a special case of the general model presented in this article \citep{Hougaard_1995}. A frailty model is a type of random effects model for failure times. As is standard in the existing literature, we assume that the random frailty variable exerts a multiplicative effect on the hazard rates. We further consider a gamma distribution for the frailty variable and assume that the potential failure times under competing failure modes are conditionally independent, given the frailty variable. A major advantage of this frailty-based dependent competing risks model is that it allows us to obtain an independent competing risks model as a special case. Subsequently, we present the optimal RASPs, providing the optimal sample size and a lower acceptance limit under both independent and dependent competing risks frameworks. We also present the popular $c$-optimal RASPs under PIC-I schemes. We then introduce a generic cost function and develop an approach to obtain the $c$-optimal RASPs under a cost constraint.  We perform a detailed numerical experiment to evaluate the effect of incorporating the dependence among the competing failure modes on the resulting RASPs. We also provide an example to illustrate an application of the developed methodology. Furthermore, we conduct a simulation exercise to study the finite-sample properties of the optimal RASPs.\\
\indent Thus, this article contributes to the existing literature in three important directions. First, we present a comprehensive framework to model PIC-I datasets while accounting for the presence of competing failure modes.  This framework allows us to incorporate the dependence structure among the competing failure modes, which may be useful for modeling the interval-censored data sets often found in reliability applications. Second, we provide a generic expression of the Fisher information matrix and also derive the asymptotic properties of the MLEs under a set of regularity conditions. These results may be useful for performing statistical inference based on interval-censored data sets. Third, we develop the RASPs under three practical scenarios and highlight the importance of incorporating the dependence among the potential failure times in the presence of competing failure modes.\\
\indent The remainder of this article is organized as follows. The general framework for modeling PIC-I data sets is provided in Section \ref{gen_frame}. Also, the asymptotic properties of the MLEs are presented in Section \ref{gen_frame}. Subsequently, the frailty model is considered a special case in Section \ref{Frail_Mod}. The optimal RASPs under the PIC-I schemes are developed in Section \ref{dev}. The methodology for obtaining traditional $c$-optimal sampling plans with and without cost constraints is discussed in Section \ref{OPT}. Subsequently, the effect of incorporating the dependence among the competing failure modes with a detailed numerical experiment is examined in Section \ref{num_exp}. An application of the developed methodology using a real-life example is presented in Section \ref{Exam}. A simulation study is undertaken to investigate the finite sample properties of the developed RASPs in Section \ref{Exam}. Finally, the contributions of this work and outline of potential future extensions are highlighted in Section \ref{con}.

\section{General Framework}\label{gen_frame}
\indent In this section, we first present the general model and likelihood function in \S \ref{mod}. 
Subsequently, we discuss the asymptotic properties of the MLEs in \S \ref{Fisher}. 
\subsection{Model and Likelihood Function}\label{mod}
\indent Suppose that in a life test experiment, a test unit fails as soon as one of the competing failure modes $J$ occurs. Let $X_j$ denote the potential failure time due to the $j$th cause of failure for $j=1,\ldots, J$. Let $T$ denote the overall failure time of the test unit, and the index $C$ denote the cause of failure. It is easy to understand that $T=\min\{X_1,\ldots, X_J\}=X_C$.  Let $f_{X_j}(x\ |\ \boldsymbol{\theta}_j)$ and ${F}_{X_j}(x\ |\ {\boldsymbol{\theta}_j})$ be the probability density function (henceforth, p.d.f.) and cumulative distribution function (henceforth, c.d.f.) of $X_j$, respectively, where $\boldsymbol{\theta}_j$ is the vector of parameters corresponding to the $j$th cause of failure. Also, denote the joint reliability function (henceforth, r.f.) of $\boldsymbol{X}=(X_1,\ldots, X_J)$ by $\overline{F}_{\boldsymbol{X}}(\boldsymbol{x}\ |\ \boldsymbol{\theta})$, where $\boldsymbol{\theta}=(\theta_1,\ldots,\theta_s)$ is a vector of parameters, with $s$ being the number of model parameters.  \\
\indent In a competing risk setup, we collect data on $C$ and $T$ from a life test. Typically, the joint distribution of $C$ and $T$ is specified in terms of the sub-distribution function $G(j,t\ |\ {\boldsymbol{\theta}})=P(C=j,T\le t)$ or equivalently, by the sub-survivor function $\overline{G}(j,t\ |\ {\boldsymbol{\theta}})=P(C=j,T> t)$ \citep[Chapter 3]{crowder2001classical}. The sub-density function corresponding to the observed pair $\{C=j, T=t\}$ is then given by
\allowdisplaybreaks\bea
g(j,t\ |\ {\boldsymbol{\theta}})&=&\left[-\frac{\delta \overline{F}_{\boldsymbol{X}}\left(\boldsymbol{x}\ |\ \boldsymbol{\theta}\right)}{\delta x_j}\right]_{t\boldsymbol{1}_J},\nonumber
\eea
where $\boldsymbol{1}_J$ is a unit vector of $J$ components and  $[\ldots]_{t\boldsymbol{1}_J}$ signifies that the enclosed partial derivative is to be computed at $t\boldsymbol{1}_J=(t,\ldots,t)$ for $j=1,\ldots,J$. Also, let the p.d.f., c.d.f. and r.f. of $T$ be given by  $f_T(t\ |\ {\boldsymbol{\theta}})$, $F_T(t\ |\ {\boldsymbol{\theta}})$ and $\overline{F}_T(t\ |\ {\boldsymbol{\theta}})$, respectively. \\
\indent Now suppose that $n$ such identical test units having $J$ competing failure modes are subjected to a life test under a $M$-point PIC-I scheme. In a $M$-point PIC-I scheme, the units are monitored at the pre-decided inspection times $L_1<L_2<\ldots<L_M$, where $L_M$ is the pre-fixed termination time of the experiment. Let $N_i$ denote the number of units that are at risk of failure at the start of the $i$-th interval $(L_{i-1}, L_i]$ and $D_{ij}$ be the number of failures occurring in this interval due to the $j$th cause of failure for $i=1,\dots, M$ and $j=1,\ldots, J$. Note that these failures are observed only at $L_i$. Furthermore, in a $M$ point PIC-I scheme, $R_i$ surviving items are removed from the life test at each intermediate time $L_i$, for $i=1,2\ldots, M-1$. Subsequently, all remaining test units, say $R_M$, are removed from the life test at $L_M$.  For convenience, let $\boldsymbol{D}_M=(D_{11},\ldots, D_{1J},\ldots, D_{M1},\ldots, D_{MJ})$ and $\boldsymbol{R}_M=(R_1,\ldots, R_M)$ denote the number of failures and number of withdrawals for the $M$-point PIC-scheme, respectively.\\ 
\indent It is easy to see that $N_1=n$ and $N_i=N_{i-1}-\sum_{j=1}^J D_{i-1,j}-R_{i-1}$, for $i= 2,\ldots, M$. Note that $R_i$s should not be greater than $N_i$s. Furthermore, $R_i$s are typically specified as the proportion of surviving units at $L_i$.  Thus, for $i=1,\ldots, M-1$, $R_i=\lfloor p_i \times (N_{i}-\sum_{j=1}^J D_{ij})\rfloor$, where $p_i$ is the withdrawal proportion, with $\lfloor x\rfloor$ denoting the greatest integer less than or equal to $x$. It is obvious that $p_M=1$. If $p_i=0$ for $i=1,\ldots, M-1$, the PIC-I scheme reduces to a traditional interval censoring scheme. Thus, in order to conduct a life test under a PIC-I scheme, we must decide $n$, $M$, $\boldsymbol{L}_M$ and $\boldsymbol{p}_M$, where $\boldsymbol{L}_M=(L_1,\ldots, L_M)$ and $\boldsymbol{p}_M=(p_1,\ldots, p_{M-1},p_M=1)$ represent the sequence of inspection times and corresponding withdrawal proportions for the $M$-point PIC scheme, respectively. For convenience, we denote these decision variables by $\boldsymbol{\boldsymbol{\zeta}} = (n, M, \boldsymbol{L}_M, \boldsymbol{p}_M)$.\\
\indent Let $q_{ij}$ be the probability that a test unit fails in the $i$th interval due to the $j$th failure mode. It is easy to see that
\begin{align}\label{qij}
    q_{ij}=\frac{P\left(C=j,T\le L_i\right)-P\left(C=j, T\le L_{i-1}\right)}{1-P\left(T\le L_{i-1}\right)}=&\frac{G\left(j,L_{i}\ | \ {\boldsymbol{\theta}}\right)-G\left(j,L_{i-1}\ | \ {\boldsymbol{\theta}}\right)}{\overline{F}_T(L_{i-1}\ | \ {\boldsymbol{\theta}})}\nonumber\\
    =&\frac{\overline{G}\left(j,L_{i-1}\ | \ {\boldsymbol{\theta}}\right)-\overline{G}\left(j,L_{i}\ | \ {\boldsymbol{\theta}}\right)}{\overline{F}_T(L_{i-1}\ | \ {\boldsymbol{\theta}})}.
\end{align}
Furthermore, let $q_i$ represent the probability that a test unit fails in the $i$th interval. Then, we have
\begin{align}\label{qi}
q_i=\frac{P\left(T \le L_{i}\right)-P\left(T \le L_{i-1}\right)}{1-P\left(T \le L_{i-1}\right)}&=\frac{F_T(L_{i}\ | \ {\boldsymbol{\theta}})-F_T\left(L_{i-1}\ | \ {\boldsymbol{\theta}}\right)}{\overline{F}_T(L_{i-1}\ | \ {\boldsymbol{\theta}})}\nonumber\\ &=\frac{\overline{F}_T(L_{i-1}\ | \ {\boldsymbol{\theta}})-\overline{F}_T(L_{i}\ | \ {\boldsymbol{\theta}})}{\overline{F}_T(L_{i-1}\ | \ {\boldsymbol{\theta}})}.   
\end{align}
Note that $\sum_{j=1}^J q_{ij}=q_i$. Now, it is easy to see that
$$\left(D_{i1},\ldots, D_{iJ}\right)|\boldsymbol{D}_{i-1},\boldsymbol{R}_{i-1} \sim \text{multinomial} \left(N_i, q_{i1},\ldots, q_{iJ},1-q_i\right).$$
Then, we can recursively obtain the expressions for $E[N_i]$, $E[{D_{ij}}]$ and $E[R_i]$, which are are as follows:

\begin{align}
    E\left[N_i\right]=n\prod_{l=0}^{i-1} (1-q_l)(1-p_l),\nonumber
\end{align}
\begin{align}
E\left[{D_{ij}}\right]=n \left\{\prod_{l=0}^{i-1} \left[(1-q_l)(1-p_l)\right]\right\} q_{ij},\nonumber  
\end{align}
and
\begin{align}
    E\left[R_{i}\right]=n \left\{\prod_{l=0}^{i-1} \left[(1-q_l)(1-p_l)\right]\right\} (1-q_i)p_{i},\nonumber
\end{align}
for $i=1,\ldots, M$ and $j=1,\ldots,J$, with $p_0=0$ and $q_0=0$ (see \citep{roy_19, Wu_14} for details). Note that $D_{i+}=\sum_{j=1}^J D_{ij}$ is the number of failures that occur in the time interval $(L_{i-1}, L_i]$ and $D=\sum_{i=1}^MD_{i+}$ is the total number of failures in the life test.  Thus, we have
\begin{align}\label{dsum}
    E[D]=&\sum_{i=1}^M E[D_{i+}],
\end{align}   
where
\begin{align} 
    E\left[{D_{i+}}\right]=n \left\{\prod_{l=0}^{i-1} \left[(1-q_l)(1-p_l)\right]\right\}q_i. \nonumber
\end{align}


\indent We now briefly introduce the observed data from a $M$-point PIC-I scheme. Let $n_i$ be the observed value of $N_i$, for $i=1,\ldots, M$. Furthermore, let $\delta_{ikj}$ represent an indicator function that takes value 1, if $k$th surviving test unit fails in the $i$-th interval $(L_{i-1}, L_i]$ due to the $j$th failure mode, and 0, otherwise, for $k=1, \ldots, n_i$, and $j=1,2,\ldots, J$. Then, $\sum_{k=1}^{n_i}\delta_{ikj}=d_{ij}$, where $d_{ij}$ is the observed value of $D_{ij}$. Let 
$$\mathcal{D}_M=\{(d_{11},d_{12},\ldots,d_{1J},r_1),(d_{21},d_{22},\ldots,d_{2J},r_2),\ldots,(d_{M1},d_{M2},\ldots,d_{MJ},r_M)\}$$
denote the observed data under the PIC-I scheme, where $r_i$ is the observed value of $R_i$, for $i=1,\ldots,M$.\\
\indent Based on the observed data $\mathcal{D}_M$, it can be shown that the likelihood function of $\boldsymbol\theta$ is given by
\begin{align*}
L(\boldsymbol{\theta}\ | \ \mathcal{D}_M,\boldsymbol{\boldsymbol{\zeta}})\propto\prod_{i=1}^M\prod_{k=1}^{n_i}\left[\prod_{j=1}^J q_{ij}^{\delta_{ikj}}\right](1-q_i)^{1-\delta_{ik+}},
\end{align*}
where $\delta_{ik+}=\sum_{j=1}^J\delta_{ikj}$ and $\sum_{k=1}^{n_i}(1-\delta_{ik+})=n_i-d_{i+}$, where $d_{i+}$ is the observed value of $D_{i+}$. 
The log-likelihood function, after ignoring the proportionality constant, can be written as
\begin{align}
 l(\boldsymbol{\theta}\ | \ \mathcal{D}_M,\boldsymbol{\boldsymbol{\zeta}})&=\log L(\boldsymbol{\theta}\ | \ \mathcal{D}_M,\boldsymbol{\boldsymbol{\zeta}})\nonumber\\
&=\sum_{i=1}^M\sum_{k=1}^{n_i}\left[\sum_{j=1}^J{\delta_{ikj}}\log(q_{ij})+(1-{\delta_{ik+}})\log(1-q_i)\right].  \label{log_like_eqn}
\end{align}
We can easily obtain the MLEs of the model parameters by numerically maximizing the log-likelihood function $l(\boldsymbol{\theta}\ | \ \mathcal{D}_M,\boldsymbol{\boldsymbol{\zeta}})$.
\subsection{Asymptotic Properties of MLEs} \label{Fisher}
\indent We now present the asymptotic properties of the MLEs, which are essential in the development of RASPs. Let $\widehat{\boldsymbol{\theta}}$ denote the MLE of $\boldsymbol{\theta}$. Toward this end, we state a set of regularity conditions required to establish the asymptotic properties of $\widehat{\boldsymbol{\theta}}$. \\
\textit{\textbf{Regularity Conditions}} 
\begin{enumerate}[I.]
    \item The unobserved lifetimes $T_1,T_2,\ldots,T_n$ are independent and identically distributed (henceforth, i.i.d.) with common c.d.f. $F_T(.\ | \ {\boldsymbol{\theta}})$ and p.d.f. $f_T(.\ | \ {\boldsymbol{\theta}})$.
    \item The number of causes of failure of a test unit is finite.
\item  $(T_1,C_1),(T_2,C_2),\ldots,(T_n,C_n)$ are i.i.d. with common joint sub-survivor function $G(\cdot,\cdot\ | \ {\boldsymbol{\theta}})$ and sub-density function $g(\cdot,\cdot\ | \ {\boldsymbol{\theta}})$, where $C_j$ is the cause of the failure of the $j^{th}$ item for $j=1,\ldots,n$.
\item The supports of $f_T$ and $g$ are independent of $\boldsymbol{\theta}$.
    \item The parameter space $\boldsymbol{\Theta}$ contains an open set $\boldsymbol{\Theta}_0$ of which the true parameter $\boldsymbol{\theta}^0$ is an interior point.
    \item For almost all $t$ and for all $j=1,2,\ldots,J$, both $F_T(t\ | \ \boldsymbol{\theta})$ and $G(j,t\ | \ \boldsymbol{\theta})$ admit all third-order derivatives, $\frac{\partial^3 F_T(t\ | \ \boldsymbol{\theta})}{\partial\theta_u\partial\theta_v\partial\theta_w}$ and $\frac{\partial^3 G(j,t\ | \ \boldsymbol{\theta})}{\partial\theta_u\partial\theta_v\partial\theta_w}$, respectively, for all $\boldsymbol{\theta}\in \boldsymbol{\Theta}_0$ and $u,v,w=1,2,\ldots,s$. In addition, all first, second, and third order derivatives of $F_T(t\ | \ \boldsymbol{\theta})$ and $G(j,t\ | \ \boldsymbol{\theta})$ with respect to the parameters are bounded for all $\boldsymbol{\theta}\in\boldsymbol{\Theta}_0$.
    \item $L_i$s are fixed in such a way that
    \begin{enumerate}[(a)]
        \item $0<q_{ij}<1$ and $0<q_i<1$ for $i=1,2,\ldots,M$ and $j=1,2\ldots,J$.
        \item $\nabla_{\boldsymbol{\theta}}\boldsymbol{q}$ is a matrix of rank $s$, where $\nabla_{\boldsymbol{\theta}}\boldsymbol{q}=\left(\frac{\partial q_{i}}{\partial\theta_u}\right)_{s\times M}$ for $u=1,2,\ldots,s$ and $i=1,2,\ldots,M$.
    \end{enumerate}
    \end{enumerate}
\indent The regularity conditions (I)-(VI) presented above are essentially standard requirements for establishing the asymptotic properties of the MLEs in the competing risks set up \citep[see, for example,][] {MUKHOPADHYAY2006803}, whereas condition (VII) ensures the positive definiteness of the Fisher information matrix. We now present the following results, which are required to obtain the Fisher information matrix.
\begin{lemma}\label{l1}
    The first derivative of the log-likelihood satisfies the following:
    \begin{align*}
        E\left[\frac{\partial l(\boldsymbol{\theta}\ | \ \mathcal{D}_M,\boldsymbol{\boldsymbol{\zeta}})}{\partial \theta_u}\right]=0,\,\,\text{for}\,\, u=1,\ldots,s.
    \end{align*}
\end{lemma}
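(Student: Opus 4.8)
The plan is to establish the standard ``score has zero mean'' identity, but through a conditioning argument rather than through differentiation under an integral sign, since the inner summation in the log-likelihood \eqref{log_like_eqn} runs up to the \emph{random} number at risk $N_i$. First I would differentiate \eqref{log_like_eqn} with respect to $\theta_u$ and write the score as a function of the random failure indicators,
\begin{align*}
\frac{\partial l(\boldsymbol{\theta}\ | \ \mathcal{D}_M,\boldsymbol{\boldsymbol{\zeta}})}{\partial \theta_u}
= \sum_{i=1}^M\sum_{k=1}^{N_i}\left[\sum_{j=1}^J \frac{\delta_{ikj}}{q_{ij}}\,\frac{\partial q_{ij}}{\partial \theta_u}
\;-\;\frac{1-\delta_{ik+}}{1-q_i}\,\frac{\partial q_i}{\partial \theta_u}\right].
\end{align*}
By regularity condition (VII)(a) the probabilities $q_{ij}$ and $1-q_i$ are bounded away from $0$, and by condition (VI) their first-order derivatives exist and are bounded, so every summand is integrable and this term-by-term representation is legitimate.

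Next I would introduce the natural filtration $\mathcal{H}_{i-1}=\sigma(\boldsymbol{D}_{i-1},\boldsymbol{R}_{i-1})$ generated by all failure counts and withdrawals recorded up to the end of the $(i-1)$-th interval, and note that $N_i$ is $\mathcal{H}_{i-1}$-measurable. Conditionally on $\mathcal{H}_{i-1}$, the vectors $(\delta_{ik1},\ldots,\delta_{ikJ})$, $k=1,\ldots,N_i$, are i.i.d.\ $\text{multinomial}(1;q_{i1},\ldots,q_{iJ},1-q_i)$ — the per-unit form of the conditional multinomial law for $(D_{i1},\ldots,D_{iJ})$ recorded in \S\ref{mod} just after \eqref{qi} — so that $E[\delta_{ikj}\mid\mathcal{H}_{i-1}]=q_{ij}$ and $E[\delta_{ik+}\mid\mathcal{H}_{i-1}]=q_i$. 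Taking the conditional expectation of the $i$-th block then gives
\begin{align*}
E\!\left[\sum_{k=1}^{N_i}\left(\sum_{j=1}^J \frac{\delta_{ikj}}{q_{ij}}\frac{\partial q_{ij}}{\partial \theta_u}-\frac{1-\delta_{ik+}}{1-q_i}\frac{\partial q_i}{\partial \theta_u}\right)\Bigg|\,\mathcal{H}_{i-1}\right]
= N_i\left[\sum_{j=1}^J \frac{\partial q_{ij}}{\partial \theta_u}-\frac{\partial q_i}{\partial \theta_u}\right].
\end{align*}

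Differentiating the identity $\sum_{j=1}^J q_{ij}=q_i$ (also noted in \S\ref{mod}) shows the bracket on the right is $0$, so this conditional expectation vanishes for every $i$. Applying the tower property $E[\,\cdot\,]=E\big[E[\,\cdot\mid\mathcal{H}_{i-1}]\big]$ then makes the unconditional expectation of each block zero, and summing over $i=1,\ldots,M$ delivers $E[\partial l/\partial\theta_u]=0$. The only point requiring any care is the integrability needed to justify the tower property uniformly over the $M$ blocks; this follows from the boundedness of the first-order derivatives in condition (VI) together with $E[N_i]\le n$. I therefore expect no genuine obstacle: the whole argument is the classical regularity computation for the score, adapted to the progressive interval-censoring and competing-risks structure via the filtration $\{\mathcal{H}_{i-1}\}$.
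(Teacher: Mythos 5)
Your proof is correct and follows essentially the same route as the paper's: both differentiate the log-likelihood term by term, condition on the information determining the number at risk $N_i$, use $E[\delta_{ikj}\mid\cdot\,]=q_{ij}$ and $E[1-\delta_{ik+}\mid\cdot\,]=1-q_i$ to reduce the $i$-th block to $N_i\bigl[\sum_{j}\partial q_{ij}/\partial\theta_u-\partial q_i/\partial\theta_u\bigr]$, and kill it by differentiating the identity $\sum_{j}q_{ij}=q_i$. Your version is marginally more explicit about the filtration and the integrability needed for the tower property, but the decomposition and the key cancellation are identical to the paper's.
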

\begin{proof}
  The proof of the lemma is given  in the \ref{l1app}.  
\end{proof}

\begin{lemma} \label{lemma_2}
    The second derivative of the log-likelihood function satisfies the following
    \begin{align*}
        E\left[ \frac{\partial^2 l(\boldsymbol{\theta}\ | \ \mathcal{D}_M,\boldsymbol{\boldsymbol{\zeta}})}{\partial\theta_u\partial\theta_v}\right]=E\left[ \left(\frac{\partial l(\boldsymbol{\theta}\ | \ \mathcal{D}_M,\boldsymbol{\boldsymbol{\zeta}})}{\partial\theta_u}\right)\left(\frac{\partial l(\boldsymbol{\theta}\ | \ \mathcal{D}_M,\boldsymbol{\boldsymbol{\zeta}})}{\partial\theta_v}\right)\right], \,\,\text{for}\,\, u,v=1,\ldots,s.
    \end{align*}
\end{lemma}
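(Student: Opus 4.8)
The statement is the second-order (Bartlett-type) information identity for the PIC-I competing-risks likelihood, and my plan is to prove it by a direct moment computation that mirrors the proof of Lemma~\ref{l1}: write down the score and the Hessian of the log-likelihood in \eqref{log_like_eqn} explicitly, take expectations by iterating over the interval histories using the conditional multinomial law of the failure counts, and simplify using the linear constraint $\sum_{j=1}^J q_{ij}=q_i$ (recorded just after \eqref{qi}) together with its first- and second-derivative consequences $\sum_{j=1}^J \partial_u q_{ij}=\partial_u q_i$ and $\sum_{j=1}^J \partial_u\partial_v q_{ij}=\partial_u\partial_v q_i$, where $\partial_u$ abbreviates $\partial/\partial\theta_u$. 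Regularity condition~(VI) supplies the boundedness of all first- and second-order partial derivatives of the $q_{ij}$ and $q_i$ that is needed to interchange the expectations, summations and differentiations below.

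Differentiating \eqref{log_like_eqn} gives
\[
\frac{\partial l}{\partial\theta_u}=\sum_{i=1}^M\sum_{k=1}^{n_i}A_{iku},\qquad A_{iku}=\sum_{j=1}^J\delta_{ikj}\,\frac{\partial_u q_{ij}}{q_{ij}}-(1-\delta_{ik+})\,\frac{\partial_u q_i}{1-q_i},
\]
and a second differentiation yields a Hessian whose $(i,k)$-term is a combination of $\partial_u\partial_v q_{ij}/q_{ij}$, $\partial_u q_{ij}\partial_v q_{ij}/q_{ij}^2$, $\partial_u\partial_v q_i/(1-q_i)$ and $\partial_u q_i\partial_v q_i/(1-q_i)^2$. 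I would then take expectations by iterating over the filtration $\mathcal{F}_{i-1}$ generated by $(\boldsymbol{D}_{i-1},\boldsymbol{R}_{i-1})$, exactly as in the proof of Lemma~\ref{l1}: given $\mathcal{F}_{i-1}$ the at-risk count $N_i$ is known and $(D_{i1},\dots,D_{iJ})\mid\mathcal{F}_{i-1}\sim\text{multinomial}(N_i,q_{i1},\dots,q_{iJ},1-q_i)$, so $E[\delta_{ikj}\mid\mathcal{F}_{i-1}]=q_{ij}$, $E[1-\delta_{ik+}\mid\mathcal{F}_{i-1}]=1-q_i$, and the per-unit summands $A_{iku}$ are, conditionally on $\mathcal{F}_{i-1}$, i.i.d.\ across $k$ with mean zero (the interval-level form of Lemma~\ref{l1}). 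Substituting these conditional means into the Hessian, the $\partial_u\partial_v q_{ij}$ contributions sum over $j$ to $\partial_u\partial_v q_i$ and cancel the $\partial_u\partial_v q_i$ contribution, and after taking total expectations (using $E[N_i]=n\prod_{l=0}^{i-1}(1-q_l)(1-p_l)$) one is left with an expression equal, up to its overall sign, to $\sum_{i=1}^M E[N_i]\big(\sum_{j=1}^J \partial_u q_{ij}\partial_v q_{ij}/q_{ij}+\partial_u q_i\partial_v q_i/(1-q_i)\big)$.

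For the product of scores I would exploit the fact that the interval-wise contributions $S_i^{(u)}=\sum_{k=1}^{n_i}A_{iku}$ satisfy $E[S_i^{(u)}\mid\mathcal{F}_{i-1}]=0$, so $\{S_i^{(u)}\}_{i=1}^M$ is a martingale-difference sequence and $E[S_i^{(u)}S_{i'}^{(v)}]=0$ whenever $i\ne i'$ (equivalently, a term-by-term tower-property computation); within a fixed interval the conditional independence of the $A_{iku}$ across units kills the $k\ne k'$ cross terms; and a one-line evaluation of $E[A_{iku}A_{ikv}\mid\mathcal{F}_{i-1}]$ over the $J+1$ possible outcomes of a single test unit returns exactly $\sum_{j}\partial_u q_{ij}\partial_v q_{ij}/q_{ij}+\partial_u q_i\partial_v q_i/(1-q_i)$. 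Hence $E[(\partial_u l)(\partial_v l)]=\sum_{i=1}^M E[N_i]\big(\sum_{j=1}^J \partial_u q_{ij}\partial_v q_{ij}/q_{ij}+\partial_u q_i\partial_v q_i/(1-q_i)\big)$, which matches the expression obtained for $E[\partial^2 l/\partial\theta_u\partial\theta_v]$ (up to the sign convention adopted for the Fisher information matrix), giving the asserted identity. An equivalent, slightly quicker route is to note that $L(\boldsymbol{\theta}\mid\mathcal{D}_M,\boldsymbol{\zeta})$ is, up to a $\boldsymbol{\theta}$-free constant, the probability mass function of $\boldsymbol{D}_M$; differentiating the resulting identity $\sum_{\mathcal{D}_M}(\partial_u l)\,L=0$ once more in $\theta_v$, with condition~(VI) justifying term-by-term differentiation, delivers the identity directly.

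I expect the main obstacle to be technical bookkeeping rather than anything conceptual: handling the randomness of the at-risk counts $N_i$ (which are $\mathcal{F}_{i-1}$-measurable but not deterministic) so that the conditioning/iterated-expectation argument is airtight, and rigorously justifying the interchange of expectation, summation and the two differentiations — which is exactly where regularity conditions~(VI) and (II) (finiteness of the number of causes) enter. Once those are in place, the proof reduces to the algebraic cancellation driven by $\sum_{j=1}^J q_{ij}=q_i$ and to the elementary single-unit moment computation, both of which are routine.
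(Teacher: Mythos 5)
Your proposal is correct and follows essentially the same route as the paper's proof: both obtain the expected Hessian by cancelling the second-derivative terms via $\sum_{j=1}^J q_{ij}=q_i$, and both evaluate $E[(\partial l/\partial\theta_u)(\partial l/\partial\theta_v)]$ by killing the $i\ne i'$ and $k\ne k'$ cross terms through iterated conditioning (your martingale-difference phrasing is just a cleaner statement of the paper's conditioning on $n_i$ and $n_{i'}$) and then computing the single-unit second moment from the multinomial outcome structure. Your parenthetical about the sign is well taken: both your derivation and the paper's actually establish $E\left[-\frac{\partial^2 l}{\partial\theta_u\partial\theta_v}\right]=E\left[\frac{\partial l}{\partial\theta_u}\frac{\partial l}{\partial\theta_v}\right]$, so the identity as displayed in the lemma statement is missing a minus sign.
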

\begin{proof}
  The proof of the lemma is given in \ref{l2app}.
\end{proof}

 Using Lemmas \ref{l1} and \ref{lemma_2}, we readily obtain the Fisher information matrix, which is provided in the following theorem.
\begin{theorem}\label{theorem1}
        The Fisher information matrix for $\boldsymbol{\theta}$ is given by 
    \begin{align*}
        I(\boldsymbol{\theta}\ | \ \boldsymbol{\boldsymbol{\zeta}})=\sum_{i=1}^M\left[\sum_{j=1}^J\frac{E[N_{i}]} {q_{ij}}\nabla_{\boldsymbol{\theta}}(q_{ij})\nabla_{\boldsymbol{\theta}}(q_{ij})^T+\frac{E[N_{i}]}{(1-q_{i})}\nabla_{\boldsymbol{\theta}}(q_{i})\nabla_{\boldsymbol{\theta}}(q_{i})^T\right],
    \end{align*}
\end{theorem}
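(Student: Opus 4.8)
The plan is to obtain the Fisher information matrix as the negative expected Hessian of the log-likelihood \eqref{log_like_eqn}; by Lemmas~\ref{l1} and~\ref{lemma_2} this coincides with the expected outer product of the score, but working with the Hessian is the shorter route because it is a single sum over intervals and units rather than a product, so no cross-interval covariances of the (nested) multinomial counts ever appear.

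First I would differentiate \eqref{log_like_eqn}. Since $\sum_{k=1}^{n_i}\delta_{ikj}=D_{ij}$ and $\sum_{k=1}^{n_i}(1-\delta_{ik+})=N_i-D_{i+}$, the score is
\[
\frac{\partial l}{\partial\theta_u}=\sum_{i=1}^{M}\left[\sum_{j=1}^{J}\frac{D_{ij}}{q_{ij}}\frac{\partial q_{ij}}{\partial\theta_u}-\frac{N_i-D_{i+}}{1-q_i}\frac{\partial q_i}{\partial\theta_u}\right],
\]
and differentiating once more,
\[
\frac{\partial^2 l}{\partial\theta_u\partial\theta_v}=\sum_{i=1}^{M}\left[\sum_{j=1}^{J}D_{ij}\!\left(\frac{1}{q_{ij}}\frac{\partial^2 q_{ij}}{\partial\theta_u\partial\theta_v}-\frac{1}{q_{ij}^{2}}\frac{\partial q_{ij}}{\partial\theta_u}\frac{\partial q_{ij}}{\partial\theta_v}\right)-(N_i-D_{i+})\!\left(\frac{1}{1-q_i}\frac{\partial^2 q_i}{\partial\theta_u\partial\theta_v}+\frac{1}{(1-q_i)^{2}}\frac{\partial q_i}{\partial\theta_u}\frac{\partial q_i}{\partial\theta_v}\right)\right].
\]

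Next I would take expectations. Conditional on $\boldsymbol{D}_{i-1},\boldsymbol{R}_{i-1}$ (and hence on $N_i$), $(D_{i1},\dots,D_{iJ})$ is multinomial$(N_i,q_{i1},\dots,q_{iJ},1-q_i)$, so $E[D_{ij}\mid\boldsymbol{D}_{i-1},\boldsymbol{R}_{i-1}]=N_iq_{ij}$ and $E[N_i-D_{i+}\mid\boldsymbol{D}_{i-1},\boldsymbol{R}_{i-1}]=N_i(1-q_i)$; the tower property together with the expression already derived for $E[N_i]$ gives $E[D_{ij}]=E[N_i]\,q_{ij}$ and $E[N_i-D_{i+}]=E[N_i](1-q_i)$. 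Substituting these and negating,
\[
-E\!\left[\frac{\partial^2 l}{\partial\theta_u\partial\theta_v}\right]=\sum_{i=1}^{M}\left[\sum_{j=1}^{J}\left(\frac{E[N_i]}{q_{ij}}\frac{\partial q_{ij}}{\partial\theta_u}\frac{\partial q_{ij}}{\partial\theta_v}-E[N_i]\frac{\partial^2 q_{ij}}{\partial\theta_u\partial\theta_v}\right)+E[N_i]\frac{\partial^2 q_i}{\partial\theta_u\partial\theta_v}+\frac{E[N_i]}{1-q_i}\frac{\partial q_i}{\partial\theta_u}\frac{\partial q_i}{\partial\theta_v}\right].
\]
Because $\sum_{j=1}^{J}q_{ij}=q_i$, we have $\sum_{j=1}^{J}\partial^2 q_{ij}/\partial\theta_u\partial\theta_v=\partial^2 q_i/\partial\theta_u\partial\theta_v$, so the second-order-derivative terms cancel within each interval, and what remains is exactly the $(u,v)$ entry of the asserted matrix; assembling the entries as outer products $\nabla_{\boldsymbol{\theta}}(q_{ij})\nabla_{\boldsymbol{\theta}}(q_{ij})^{T}$ and $\nabla_{\boldsymbol{\theta}}(q_i)\nabla_{\boldsymbol{\theta}}(q_i)^{T}$ yields the stated form.

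The differentiation and the multinomial moment identities are routine; the point needing care is the random, sequentially defined sample sizes $N_i$. One must justify interchanging differentiation with the finite sum $\sum_{k=1}^{n_i}$ (immediate, as $n_i$ does not depend on $\boldsymbol{\theta}$) and with the expectation (covered by the boundedness of the derivatives in regularity condition~(VI)), and then resolve the randomness of $N_i$ and $D_{ij}$ through the conditioning argument above; condition~(VII) is what makes the resulting matrix positive definite. Taking instead the route through Lemma~\ref{lemma_2}, i.e.\ evaluating $E[(\partial l/\partial\theta_u)(\partial l/\partial\theta_v)]$ directly, would additionally require the covariances $\mathrm{Cov}(D_{ij},D_{i'j'})$ across intervals, so the expected-Hessian computation is the cleaner path.
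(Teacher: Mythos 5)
Your proposal is correct and follows essentially the same route as the paper: the paper's proof of Lemma~\ref{lemma_2} (see \ref{l2app}) derives exactly your negative expected Hessian in equation~(\ref{fish}), using the same reduction $E[D_{ij}]=E[N_i]\,q_{ij}$, $E[N_i-D_{i+}]=E[N_i](1-q_i)$ via conditioning on $N_i$, and the same cancellation of the second-derivative terms through $\sum_{j=1}^{J}q_{ij}=q_i$. The only difference is that the paper additionally verifies the score outer-product form (handling the cross-interval terms you avoid) in order to state the information equality of Lemma~\ref{lemma_2}, but that extra computation is not needed to obtain the formula in Theorem~\ref{theorem1} itself.
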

\indent It is interesting to note that the Fisher information matrix involves only the first-order partial derivatives, thus simplifying its computation to a great extent, as illustrated in the subsequent sections. We now present the asymptotic properties of the MLEs $\widehat{\boldsymbol{\theta}}$ in the following theorem.
\begin{theorem}\label{th2}
    If the regularity conditions (I)-(VII) (as stated above) hold, then
    \begin{enumerate}[(I)]
        \item $\widehat{\boldsymbol{\theta}}$ is consistent for $\boldsymbol{\theta}$.
    \item  $\widehat{\boldsymbol{\theta}}$ is asymptotically normal with mean $\boldsymbol{\theta}^0$ and variance-covariance matrix $[I(\boldsymbol{\theta}^0\ |  \ \boldsymbol{\boldsymbol{\zeta}})]^{-1},$ where $\boldsymbol{\theta}^0$ denotes the true value $\boldsymbol{\theta}$. 
    \end{enumerate}
\end{theorem}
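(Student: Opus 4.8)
The plan is to combine the classical Cram\'er local argument for consistency with a martingale central limit theorem for asymptotic normality, exploiting the sequential multinomial structure of the $M$-point PIC-I scheme. Write the score as $U_n(\boldsymbol{\theta})=\nabla_{\boldsymbol{\theta}}\,l(\boldsymbol{\theta}\ |\ \mathcal{D}_M,\boldsymbol{\zeta})$; by (\ref{log_like_eqn}), $l$ is additive over the $M$ inspection intervals, with the $i$-th block depending on the data only through $(D_{i1},\ldots,D_{iJ},N_i)$. Let $\mathcal{F}_i$ be the $\sigma$-field generated by all failures and withdrawals observed through the $i$-th inspection, so that $N_i$ is $\mathcal{F}_{i-1}$-measurable and $(D_{i1},\ldots,D_{iJ})\mid\mathcal{F}_{i-1}\sim\mathrm{multinomial}(N_i,q_{i1},\ldots,q_{iJ},1-q_i)$. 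Since $\sum_j q_{ij}=q_i$, the $i$-th block of $U_n$ has conditional mean $0$ given $\mathcal{F}_{i-1}$, so $U_n(\boldsymbol{\theta})$ is a sum of martingale differences with respect to $\{\mathcal{F}_i\}$; this is the structural replacement for the usual i.i.d.\ decomposition, which is unavailable because the withdrawals $R_i$ couple the $n$ units. First I would record the law of large numbers $N_i/n\to\pi_i:=\prod_{l=0}^{i-1}(1-q_l)(1-p_l)$ a.s.\ (induction on $i$ using the conditional SLLN for $D_{i+}\mid N_i$ and the fact that the floor in $R_i=\lfloor p_i(N_i-D_{i+})\rfloor$ costs at most one unit), so that $n^{-1}I(\boldsymbol{\theta}\ |\ \boldsymbol{\zeta})\to I_1(\boldsymbol{\theta})$, the per-unit information obtained by substituting $\pi_i$ for $E[N_i]$ in Theorem \ref{theorem1}. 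Conditions (VII)(a)--(b) make $I_1(\boldsymbol{\theta}^0)$ positive definite, since its second summand is $(\nabla_{\boldsymbol{\theta}}\boldsymbol{q})\,\mathrm{diag}\!\big(\pi_i/(1-q_i)\big)\,(\nabla_{\boldsymbol{\theta}}\boldsymbol{q})^{T}$, of full rank $s$, while the first summand is nonnegative definite.

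For part (I) I would run the Cram\'er argument on a sphere $S_\varepsilon=\{\boldsymbol{\theta}:\|\boldsymbol{\theta}-\boldsymbol{\theta}^0\|=\varepsilon\}$, which lies in $\boldsymbol{\Theta}_0$ for $\varepsilon$ small by condition (V). A third-order Taylor expansion of $n^{-1}\{l(\boldsymbol{\theta})-l(\boldsymbol{\theta}^0)\}$ about $\boldsymbol{\theta}^0$ has: linear term $n^{-1}U_n(\boldsymbol{\theta}^0)^{T}(\boldsymbol{\theta}-\boldsymbol{\theta}^0)=o_p(1)$ on $S_\varepsilon$, because $E[U_n(\boldsymbol{\theta}^0)]=0$ (Lemma \ref{l1}) and $\mathrm{Var}(U_n(\boldsymbol{\theta}^0))=I(\boldsymbol{\theta}^0\ |\ \boldsymbol{\zeta})=O(n)$ (Lemma \ref{lemma_2}); quadratic term converging in probability to $-\tfrac12(\boldsymbol{\theta}-\boldsymbol{\theta}^0)^{T}I_1(\boldsymbol{\theta}^0)(\boldsymbol{\theta}-\boldsymbol{\theta}^0)\le-\tfrac12\lambda_{\min}\!\left(I_1(\boldsymbol{\theta}^0)\right)\varepsilon^{2}<0$, since the entries of $-n^{-1}\nabla_{\boldsymbol{\theta}}^{2} l(\boldsymbol{\theta}^0)$ are affine in $D_{ij}/n$ and $N_i/n$ with bounded constant coefficients (condition (VI)) and hence converge a.s.; and cubic remainder bounded uniformly by $c\varepsilon^{3}$ on $S_\varepsilon$, using the bounded-third-derivative part of (VI) together with $D_{ij},N_i\le n$ and $q_{ij},q_i$ bounded away from $0$ and $1$ (condition (VII)(a)). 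Choosing $\varepsilon$ small makes $n^{-1}\{l(\boldsymbol{\theta})-l(\boldsymbol{\theta}^0)\}<0$ simultaneously on all of $S_\varepsilon$ with probability tending to $1$, so $l$ has an interior local maximum in the ball, i.e.\ a root $\widehat{\boldsymbol{\theta}}$ of the likelihood equations with $\widehat{\boldsymbol{\theta}}\xrightarrow{p}\boldsymbol{\theta}^0$; conditions (I)--(IV) justify the score and Hessian manipulations and identify this consistent root with the MLE. An alternative to part (I) is a Wald-type argument establishing uniform convergence of $n^{-1}l(\boldsymbol{\theta})$ on $\boldsymbol{\Theta}_0$ to a limit uniquely maximized at $\boldsymbol{\theta}^0$; I prefer the Cram\'er route, as it reuses Lemmas \ref{l1}--\ref{lemma_2} and feeds directly into part (II).

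For part (II) I would apply a martingale central limit theorem to $n^{-1/2}U_n(\boldsymbol{\theta}^0)$. Write the $i$-th block of the score as $\sum_{u=1}^{N_i}\eta_{iu}$ over the units still at risk at $L_{i-1}$, where $\eta_{iu}$ is the single-unit score contribution; given $\mathcal{F}_{i-1}$ these are bounded and conditionally i.i.d.\ (the withdrawals being made at random among survivors, independently of future failure times). Then the ordered collection $\{n^{-1/2}\eta_{iu}\}$ is a square-integrable martingale-difference array whose conditional Lindeberg condition is immediate from boundedness of $\eta_{iu}$ (derivatives of $q_{ij},q_i$ bounded by (VI); $q_{ij},q_i$ bounded away from $0$ and $1$ by (VII)(a)), and whose sum of conditional covariances is $n^{-1}\sum_{i=1}^{M}N_i\Sigma_i\to\sum_{i=1}^{M}\pi_i\Sigma_i=I_1(\boldsymbol{\theta}^0)$ by the SLLN $N_i/n\to\pi_i$, with $\Sigma_i=\sum_{j}q_{ij}^{-1}\nabla_{\boldsymbol{\theta}}(q_{ij})\nabla_{\boldsymbol{\theta}}(q_{ij})^{T}+(1-q_i)^{-1}\nabla_{\boldsymbol{\theta}}(q_i)\nabla_{\boldsymbol{\theta}}(q_i)^{T}$ the single-unit information from interval $i$. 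Hence $n^{-1/2}U_n(\boldsymbol{\theta}^0)\xrightarrow{d}N(\boldsymbol{0},I_1(\boldsymbol{\theta}^0))$. Expanding $\boldsymbol{0}=U_n(\widehat{\boldsymbol{\theta}})=U_n(\boldsymbol{\theta}^0)+\nabla_{\boldsymbol{\theta}}^{2} l(\widetilde{\boldsymbol{\theta}})(\widehat{\boldsymbol{\theta}}-\boldsymbol{\theta}^0)$ with $\widetilde{\boldsymbol{\theta}}$ on the segment between $\widehat{\boldsymbol{\theta}}$ and $\boldsymbol{\theta}^0$, and using $\widehat{\boldsymbol{\theta}}\xrightarrow{p}\boldsymbol{\theta}^0$, $-n^{-1}\nabla_{\boldsymbol{\theta}}^{2} l(\widetilde{\boldsymbol{\theta}})\xrightarrow{p}I_1(\boldsymbol{\theta}^0)$ (local uniform continuity, again by (VI)), and Slutsky's theorem, gives $\sqrt{n}\,(\widehat{\boldsymbol{\theta}}-\boldsymbol{\theta}^0)\xrightarrow{d}N(\boldsymbol{0},I_1(\boldsymbol{\theta}^0)^{-1})$; since $n^{-1}I(\boldsymbol{\theta}^0\ |\ \boldsymbol{\zeta})\to I_1(\boldsymbol{\theta}^0)$, this is the stated normality with variance-covariance matrix $[I(\boldsymbol{\theta}^0\ |\ \boldsymbol{\zeta})]^{-1}$. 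The main obstacle is exactly that the data are \emph{not} an i.i.d.\ sample, $N_i$ being random and the withdrawals coupling the units, so everything rests on the $\{\mathcal{F}_i\}$-martingale representation of the score and on the law of large numbers $N_i/n\to\pi_i$; within that, showing that the sum of conditional covariances of the martingale array converges to the \emph{deterministic} matrix $I_1(\boldsymbol{\theta}^0)$, rather than to a random limit, is the most delicate step, and is precisely what regularity condition (VII) is designed to secure.
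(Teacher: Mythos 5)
Your argument is correct, and at the level of overall architecture it matches the paper's: both run the classical Cram\'er-type MLE asymptotics, and both rest on the same three pillars, namely (i) the convergence $N_i/n\to\prod_{l}(1-p_l)(1-q_l)$ (your $\pi_i$ is the paper's $b_i$ in Lemma \ref{l4}), (ii) uniform boundedness of the first-, second-, and third-order derivatives of $q_{ij}$ and $q_i$ (the paper's Lemma \ref{l3} and Lemma \ref{l5}, which you invoke via condition (VI) together with $q_{ij},q_i$ bounded away from $0$ and $1$), and (iii) positive definiteness of the information secured by condition (VII) (the paper's Lemma \ref{l_new}; note the paper derives it from the rank condition on $\nabla_{\boldsymbol{\theta}}\boldsymbol{q}$ exactly as you do). Where you genuinely diverge is in how the two parts are discharged. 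For consistency, the paper does not run the sphere argument itself: it reduces part (I) to the two convergences $n^{-1}\partial l/\partial\theta_u\xrightarrow{P}0$ and $n^{-1}\partial^2 l/\partial\theta_u\partial\theta_v\xrightarrow{P}-\sum_i b_i\sigma^2_{i,uv}$, proved by applying the WLLN to the conditionally i.i.d.\ blocks $U_{ik,u}$ and $V_{ik,uv}$ given $n_i$ and then invoking Lemma \ref{l4}; you instead carry out the full third-order Taylor expansion on $S_\varepsilon$ explicitly, which is more self-contained but proves the same thing. For asymptotic normality, the paper simply defers to Lehmann--Casella's Theorem 5.1, whose hypotheses are framed for i.i.d.\ observations, whereas you replace the i.i.d.\ CLT with a martingale CLT for the score with respect to the inspection-time filtration $\{\mathcal{F}_i\}$. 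This is the more substantive difference: your route confronts head-on the fact that the withdrawals make the $n$ units dependent and $N_i$ random, and shows that the sum of conditional covariances converges to the deterministic limit $I_1(\boldsymbol{\theta}^0)=\sum_i\pi_i\Sigma_i$, which is precisely the point the paper leaves implicit when it cites the i.i.d.\ theorem. The cost of your approach is that you must verify the conditional Lindeberg condition and the conditional i.i.d.\ structure of the unit-level score contributions within each interval (which the paper also uses, in Lemma \ref{lemma_2}); the benefit is a proof of part (II) that does not lean on a theorem whose stated hypotheses the data do not literally satisfy. One shared caveat, which neither you nor the paper resolves: the Cram\'er argument produces a consistent root of the likelihood equations rather than the global maximizer, so the identification of that root with $\widehat{\boldsymbol{\theta}}$ is asserted rather than proved in both treatments.
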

\begin{proof}
    The proof of the theorem is given in the \ref{th2app}.
\end{proof}
\section{Frailty Model}\label{Frail_Mod}
\indent In this section, we consider a dependent competing risks model as a special case of the generic model presented above, primarily due to two reasons. First, as discussed in Section \ref{INTRO}, one of the main objectives of this article is to develop RASPs under PIC-I schemes in the presence of competing risks. This necessitates us to assume a specific parametric model, as discussed below. Second, the dependent competing risks model allows us to incorporate the dependence among the potential failure times. The model considered here allows us to obtain an independent competing risks model as a special case. Thus, we can also study the relevance of incorporating the dependence structure among the competing failure modes while deciding the RASPs.\\
\indent We assume that $X_j$, the potential failure time due to the $j$th failure mode, follows a Weibull model, having the p.d.f.
$$f_{X_j}(x\ | \ \boldsymbol{\theta}_j)=\left(\frac{\gamma_j}{\eta_j}\right)\left(\frac{x}{\eta_j}\right)^{\gamma_j-1}\exp\left[-\left(\frac{x}{\eta_j}\right)^{\gamma_j}\right],\,\,\, \mbox{}x>0,$$
where $\boldsymbol{\theta}_j=(\eta_j,\gamma_j)$, with $\eta_j\, (>0)$ and $\gamma_j\, (>0)$ being the scale and shape parameters, respectively, for $j=1,\ldots, J$. As in \citet{Roy_14}, we further assume $\gamma_1=\cdots=\gamma_J=\gamma$, where $\gamma$ is the common shape parameter. Toward this end, we should also note that this assumption often holds true for real data sets (see, for example, the numerical illustration provided by \citep{LIU_2012_TM}). Then, the r.f. of $X_j$ is given by
$$\overline{F}_{X_j}(x\ | \ \boldsymbol{\theta}_j)=\exp\left[-\left(\frac{x}{\eta_j}\right)^\gamma\right]=\exp\left(-\Lambda_j(x)\right),$$
where $\Lambda_j(\cdot)$ is the cumulative hazard function corresponding to the $j$th failure mode. \\
\indent A common approach to introduce the dependence between the potential failure times $X_j$s is through the unobserved random factor $Z$, commonly known as the frailty in the existing literature \citep[][pp.~42-45]{crowder2001classical}. Thus, we assume that $\Lambda_1(x),\ldots,\Lambda_J(x)$ share common random frailty $Z$ and substitute $\Lambda_j(x)$ by $z\Lambda_j(x)$. We further assume that the potential failure times $X_j$s are conditionally independent, given $Z$. Thus, given $Z=z$, the conditional joint r.f. of $\boldsymbol{X}=(X_1,\ldots,X_J)$ is given by
$$\overline{F}_{\boldsymbol{X}|Z}(\boldsymbol{x}\ | \ z, \boldsymbol{\theta})=\exp\left(-z\sum_{j=1}^J\Lambda_j(x_j)\right).$$
Toward this end, we assume that $Z$ follows a Gamma distribution with mean 1 and variance $\nu$ \citep[]{Hougaard_1995}. Then, the unconditional joint r.f. of $\boldsymbol{X}$ is given by
\begin{align}\label{joint_surv}
\overline{F}_{\boldsymbol{X}}(\boldsymbol{x}\ | \ \boldsymbol{\theta}
)=\left(1+\nu \Delta\right)^{-\frac{1}{\nu}},     
\end{align}   
where $\Delta=\sum_{j=1}^J \Lambda_j(x_j)=\sum_{j=1}^J\left(\frac{x_j}{\eta_j}\right)^\gamma$ and $\boldsymbol{\theta}=(\eta_1,\ldots,\eta_J,\gamma,\nu)$. Then, the sub-density function is given by
\allowdisplaybreaks\bea
g(j,t\ | \ \boldsymbol{\theta})&=&\left(\frac{\gamma}{\eta_j}\right)\left(\frac{t}{\eta_j}\right)^{\gamma-1}\left[1+\nu \sum_{j'=1}^J\left(\frac{t}{\eta_{j'}}\right)^\gamma\right]^{-\frac{1}{\nu}-1}.\nonumber
\eea
Furthermore, the sub-survivor function is given by
\allowdisplaybreaks\bea\label{sub}
\overline{G}(j,t\ | \ \boldsymbol{\theta})&=&\int_t^{\infty} g(j,s) \mathrm{d}s\nonumber\\
&=&\int_t^{\infty} \left(\frac{\gamma}{\eta_j}\right)\left(\frac{s}{\eta_j}\right)^{\gamma-1}\left[1+\nu \sum_{j'=1}^J\left(\frac{s}{\eta_{j'}}\right)^\gamma\right]^{-\frac{1}{\nu}-1} \mathrm{d}s\nonumber\\
&=& \frac{1}{\sum_{j'=1 }^J\left(\frac{\eta_{j}}{\eta_{j'}}\right)^{\gamma}}\left[1+\nu\sum_{j'=1}^J\left(\frac{t}{\eta_{j'}}\right)^{\gamma}\right]^{-\frac{1}{\nu}}.
\eea
A common issue that arises with these frailty models is the problem of identifiability of the model parameters. Using Theorem 2.1 of \citet{ghosh_2024}, it can be easily shown that the model presented above is identifiable from the observed competing risks data on $(C, T)$.
Toward this end, we obtain the r.f. of the lifetime $T$. It is easy to see that the r.f. of $T$ is given by
\begin{align}\label{surv}  
\overline{F}_T(t\ |\  \boldsymbol{\theta})=P(T>t)=P(X_1>t,\ldots, X_J>t)=\overline{F}_{\boldsymbol{X}}(t\boldsymbol{1}|\boldsymbol{\theta})=\left[1+\nu\sum_{j=1}^J\left(\frac{t}{\eta_j}\right)^{\gamma}\right]^{-\frac{1}{\nu}}.
\end{align}
Interestingly, as $\nu\rightarrow 0$, we have
\bea
\overline{F}_{T}({t}\ |\ \boldsymbol{\theta})=\exp\left[-\sum_{j=1}^J\left(\frac{t}{\eta_j}\right)^{\gamma}\right],\label{surv1}
\eea
which is essentially the r.f. of $T$ under the assumption of independence among the potential failure times \citep{Roy_14}. Thus, the independent competing risks model can be obtained as a special case of the shared gamma frailty model presented here. Accordingly, the sub-survivor function is given by 
\begin{align}\label{sub1}
  \overline{G}(j,t\ | \ {\boldsymbol{\theta}})=  \frac{1}{\sum_{j'=1}^J\left(\frac{\eta_{j}}{\eta_{j'}}\right)^\gamma}\exp\left[-\sum_{j'=1}^J\left(\frac{t}{\eta_{j'}}\right)^{\gamma}\right].
\end{align}
The degree of dependence among potential failure times $X_j$s is often measured by the following ratio
$$\rho(t)=\frac{\overline{F}_{\boldsymbol{X}}\left(\boldsymbol{x}|\boldsymbol{\theta}\right)}{\prod_{j=1}^J \overline{F}_{X_j}\left(x_j|\boldsymbol{\theta}_j\right)},$$
where $\overline{F}_{\boldsymbol{X}}(\cdot)$ and $\overline{F}_{X_j}(\cdot)$ are as introduced above  \citep[][pp.~37-38]{crowder2001classical}. If $\rho(t)>1$, there is a positive dependence among the potential failure times $X_j$s, for all $t$. It is easy to show that $\rho(t)$ is an increasing function of $\nu$, and it attains a value of 1 only when $\nu \to 0$. Thus, any positive value of $\nu$ (i.e., $\nu>0$) induces positive dependence among the potential failure times $X_j$s. Fig. \ref{sys_rel} further shows the plot of the r.f. $\overline{F}_T(t|\boldsymbol{\theta})$ for different values of $\nu$.  For convenience, we 
represent the independent competing risks model using $\nu = 0$. It is easy to see that the reliability of a unit improves as $\nu$ increases for a given $t$. 

\begin{figure}[hbt!]
\centering
\includegraphics[width=0.5\textwidth, height=0.25\textheight]{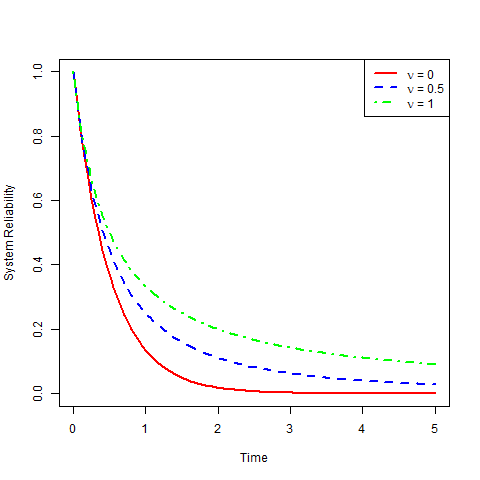}
\caption{Reliability function for different $\nu$ with $\eta_1=\eta_2=\gamma=1$}
\label{sys_rel}
\end{figure}

\section{Development of Sampling Plans}\label{dev}
Suppose $n$ test units, randomly selected from a lot, are put on a life test under a $M$-point PIC-I scheme to decide the acceptability of the lot. In practice, the decision to accept or reject such a lot is often based on the product's reliability at some pre-decided time point mutually agreed upon by the producer and consumer. Thus, a lot is accepted if the reliability of the unit $\overline{F}_T(t)$ at some specified time point $t_0$ is greater than $\pi_0$ and it is rejected if it is lower than $\pi_1$ \citep{Lam_90}.  Obviously, $\pi_0\ge\pi_1$. The appropriate values of $\pi_0$ and $\pi_1$ are decided based on an agreement between the producer and the consumer. It is easy to see that the above decision rule is equivalent to testing the following set of simple hypotheses:
\begin{align}\label{Hyp}
H_0: \overline{F}_T(t_0)=\pi_0 \,\,\,\,\text{vs}\,\,\,\, H_A: \overline{F}_T(t_0)=\pi_1.
\end{align}
In practice, $\overline{F}_T(t_0)$ needs to be estimated. Let $\widehat{\overline{F}}_T(t_0)$ be the MLE of $\overline{F}_{T}(t_0)$. The lot is accepted if $\widehat{\overline{F}}_T(t_0)>\pi_c$ and rejected if $\widehat{\overline{F}}_T(t_0)\le \pi_c$, where $\pi_c$ is an appropriately chosen acceptance limit. In traditional reliability sampling plans, we need to decide $n$ and $\pi_c$.
 Let $\alpha$ and $\beta$ be the producer's risk and consumer's risk, respectively. We have
\begin{align}\label{SP1}
P\left(\widehat{\overline{F}}_{T}(t_0)>\pi_c\ | \ H_0\right)=1-\alpha
\end{align}
and
\begin{align}\label{SP2}
P\left(\widehat{\overline{F}}_{T}(t_0)>\pi_c\ |\ H_A\right)=\beta.
\end{align}

\indent Now, from Theorem \ref{th2}, we get that $\sqrt{n}(\widehat{\boldsymbol{\theta}}-\boldsymbol{\theta})$ follows asymptotically normal distribution with mean $\boldsymbol{0}$ and variance-covariance $I^1(\boldsymbol{\theta}|\boldsymbol{\zeta} )^{-1}$, where $I^1(\boldsymbol{\theta}|\boldsymbol{\zeta})=I(\boldsymbol{\theta}|\boldsymbol{\zeta})/n$. Furthermore, using the Delta method, $\sqrt{n}[\widehat{\overline{F}}_{T}(t_0)-\overline{F}_{T}(t_0)]$ follows asymptotically normal distribution with mean 0 and variance $S^2$, where $S^2=\boldsymbol{C}_T'[I^1(\boldsymbol{\theta}|\boldsymbol{\zeta})]^{-1}\boldsymbol{C}_T$ and $\boldsymbol{C}_T=\nabla_{\boldsymbol{\theta}}\overline{F}_{T}(t_0)$ \citep[][pp. ~619-620]{ME_98}. Hence
$$Z=\frac{\sqrt{n}[\widehat{\overline{F}}_{T}(t_0)-\overline{F}_{T}(t_0)]}{S}$$
is asymptotically normal with a mean of 0 and a standard deviation of 1. Now, as Theorem \ref{theorem1} suggests, we need to compute $\nabla_{\boldsymbol{\theta}}(q_{ij})$ and $\nabla_{\boldsymbol{\theta}}(q_{i})$, which are given in the \ref{computation}.\\
\indent Towards this end, let us denote by $S_0$ and $S_1$ the values of $S$, under $H_0$ and $H_A$, respectively. Then, from (\ref{SP1}) and (\ref{SP2}), we have 
$$\Phi\left(\frac{\sqrt{n}[\pi_c-\pi_0]}{S_0}\right)=\alpha$$
and
$$\Phi\left(\frac{\sqrt{n}[\pi_c-\pi_1]}{S_1}\right)=1-\beta.$$
Solving these two equations, we get
\begin{align}\label{optp}
    \pi_c=\frac{\pi_0 S_1 z_{\beta}-\pi_1 S_0 z_{1-\alpha}}{S_1z_{\beta}-S_0z_{1-\alpha}},
\end{align}
where $z_{\delta}$ is such that $P(Z>z_{\delta})=\delta$. Moreover, using the  expression of $\pi_c$, we get
$n$ as follows:
\begin{equation}\label{optn}
    n=\left[\frac{S_1 z_{\beta}-S_0z_{1-\alpha}}{\pi_0-\pi_1}\right]^2.  
\end{equation}

\indent For convenience, we reproduce the entire process of finding $n$ and $\pi_c$ in Algorithm \ref{alg:PIC-I}.

\begin{algorithm}
\caption{Determination of $n$ and $\pi_c$}\label{alg:PIC-I}
\begin{algorithmic}[1]
\State Fix $\alpha$ and $\beta$, the producer's and consumer's risks, respectively.
\State Fix $t_0$.
\State Fix $\pi_0$ and $\pi_1$, the acceptable and rejectable reliability limits, respectively.
\State Compute $S_0$ and $S_1$, the standard deviations of $\widehat{\overline{F}}(t_0|\boldsymbol{\theta})$ under $H_0$ and $H_A$, respectively. 
\State Compute the acceptance limit $\pi_c$ using (\ref{optp}).
\State Calculate the value of $n$ using (\ref{optn}).
\State The required sample size is taken as $n^*=\lfloor n\rfloor$.

\end{algorithmic}
\end{algorithm}


\section{Optimal PIC-I Schemes}\label{OPT}
\indent In the previous section, we presented the method for obtaining the optimal $n$ and $\pi_c$, for given $\alpha$, $\beta$, $\pi_0$ and $\pi_1$. In practice, however, it is essential to suitably determine inspection times and the proportions of withdrawals to efficiently conduct life tests under PIC-I schemes. In general, the current literature presents two primary approaches to resolve this issue. The first approach deals with achieving the estimation precision of some important reliability characteristics and presents the optimal design parameters considering a suitably constructed design criterion that aligns with this goal. This approach typically ignores the cost constraints that are ubiquitous in the context of life tests. Thus, the second approach proposes a cost-based objective function and determines the optimal design parameters by minimizing such a function. In this context, it is also important to highlight that there are a select number of studies that integrate both these approaches. However, such work is largely absent in the context of PIC-I schemes and, more importantly, in the presence of competing causes of failure. Towards this end, we present the first approach in \S \ref{OPT_WOC} and subsequently deal with the combined approach in \S \ref{OPT_WC}.\\
\subsection{Optimal RASP without Cost Constraints} \label{OPT_WOC}
\indent  Since our main focus is on reliability-based sampling plans in this article, we consider a design criterion that focuses on precise estimation of $\widehat{\overline{F}}_T(t|\boldsymbol{\theta})$. This, in turn, would be helpful in accurately deciding the lot acceptance criteria presented in the previous section. Thus, we consider a design criterion that aims to minimize the variance of $\widehat{\overline{F}}_T(t|\boldsymbol{\theta})$ at a specific time point $t_0$. In particular, our design criterion is given by
\begin{align}\label{des_crit_2}
\phi(\boldsymbol{\zeta})=S^2,
\end{align}
where $S^2$ is the standardized variance, as in the previous section \citep[p.~121]{atk_07}. This design criterion is more commonly known as the $c$-optimality criterion in the literature. It is easy to see that $\phi(\boldsymbol{\zeta})$ depends on the unknown true parameter $\boldsymbol{\theta}$. Thus, in order to compute it, a suitable set of guess values for the parameters is typically used. Note that such guess values are often available in practice from similar life tests performed in the past \citep[Chap.~10]{ME_98}.\\
\indent It is easy to see that $\phi(\zeta)$ is independent of $n$. To compute the optimal $\boldsymbol{\zeta}$, say $\boldsymbol{\zeta}^*$, we minimize $\phi(\boldsymbol{\zeta})$ with respect to $M$, $\boldsymbol{L}_M$ and $\boldsymbol{p}_M$. Toward this end, as is standard in the existing literature, we assume that inspections are equispaced, i.e., for $i=1,\ldots, M$, $L_i-L_{i-1}=h$, where $h$ is the common time interval between two successive inspections. Furthermore, we also assume that $p_1=\ldots=p_{M-1}=p$, where $p$ is the common proportion of withdrawals at each $L_i$. Note that such PIC-I schemes are particularly popular among reliability engineers, mainly because of the straightforwardness they offer in practical applications. Thus, we have three decision variables in this context, resulting in $\boldsymbol{\boldsymbol{\zeta}}=(M, h, p)$. However, as the following theorem shows, the complexity of the optimization problem reduces further due to the following monotonicity properties of $\phi(\boldsymbol{\zeta})$.  


\begin{theorem}\label{theorem5}
    The design criterion $\phi(\boldsymbol{\boldsymbol{\zeta}})$, given by (\ref{des_crit_2}), satisfies the following monotonicity properties:
    \begin{enumerate}[(I)]
        \item For fixed $(h,p)$, $\phi(\boldsymbol{\boldsymbol{\zeta}})$ is decreasing in $M$.
        \item For fixed $(M, h)$, $\phi(\boldsymbol{\boldsymbol{\zeta}})$ is increasing in $p$.       
    \end{enumerate}
\end{theorem}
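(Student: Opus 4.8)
The plan is to exploit the additive, positive-semidefinite decomposition of the Fisher information given by Theorem \ref{theorem1}, together with the order-reversing behaviour of the map $A\mapsto\boldsymbol{C}_T'A^{-1}\boldsymbol{C}_T$ on positive definite matrices. I would first set $\tilde I(\boldsymbol{\zeta}):=I^1(\boldsymbol{\theta}\mid\boldsymbol{\zeta})=I(\boldsymbol{\theta}\mid\boldsymbol{\zeta})/n$ and observe that, since $E[N_i]=n\,w_i(\boldsymbol{\zeta})$ with $w_i(\boldsymbol{\zeta})=\prod_{l=0}^{i-1}(1-q_l)(1-p_l)$, Theorem \ref{theorem1} reads $\tilde I(\boldsymbol{\zeta})=\sum_{i=1}^{M} w_i(\boldsymbol{\zeta})\,B_i$, where $B_i=\sum_{j=1}^{J}q_{ij}^{-1}\nabla_{\boldsymbol{\theta}}(q_{ij})\nabla_{\boldsymbol{\theta}}(q_{ij})^{T}+(1-q_i)^{-1}\nabla_{\boldsymbol{\theta}}(q_i)\nabla_{\boldsymbol{\theta}}(q_i)^{T}$ is positive semidefinite and $\phi(\boldsymbol{\zeta})=\boldsymbol{C}_T'[\tilde I(\boldsymbol{\zeta})]^{-1}\boldsymbol{C}_T$ with $\boldsymbol{C}_T=\nabla_{\boldsymbol{\theta}}\overline{F}_T(t_0)$ independent of $\boldsymbol{\zeta}$. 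Two structural facts, both read off from \eqref{qij}--\eqref{qi} under the equispaced/common-$p$ restriction, carry the argument: (i) $q_{ij}$, $q_i$ and their $\boldsymbol{\theta}$-gradients — hence the matrices $B_i$ — are determined only by the endpoints $L_{i-1}=(i-1)h$ and $L_i=ih$, so they depend on $i$ and $h$ but on neither $M$ nor $p$; and (ii) $w_1=1$ and, for $i\ge 2$, $w_i(\boldsymbol{\zeta})=(1-p)^{i-1}\prod_{l=1}^{i-1}(1-q_l)$, because $p_0=0$ and $p_l=p$ for $1\le l\le M-1$.

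Next I would record a monotonicity lemma: if $A_1-A_2$ is positive semidefinite and $A_2$ is positive definite, then $\boldsymbol{C}_T'A_1^{-1}\boldsymbol{C}_T\le\boldsymbol{C}_T'A_2^{-1}\boldsymbol{C}_T$. This is immediate from the identity $\boldsymbol{C}_T'A^{-1}\boldsymbol{C}_T=\sup_{\boldsymbol{x}\neq\boldsymbol{0}}(\boldsymbol{x}'\boldsymbol{C}_T)^2/(\boldsymbol{x}'A\boldsymbol{x})$ (Cauchy--Schwarz, attained at $\boldsymbol{x}=A^{-1}\boldsymbol{C}_T$) together with $\boldsymbol{x}'A_1\boldsymbol{x}\ge\boldsymbol{x}'A_2\boldsymbol{x}$; regularity condition (VII) guarantees that $\tilde I(\boldsymbol{\zeta})$ is positive definite, so the relevant inverses exist.

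For part (I), fix $(h,p)$ and compare the $M$-point scheme $\boldsymbol{\zeta}_M$ with the $(M+1)$-point scheme $\boldsymbol{\zeta}_{M+1}$. Their inspection grids satisfy $\{h,\dots,Mh\}\subset\{h,\dots,(M+1)h\}$, and since the terminal proportion enters only $w_{M+1}$ and not $w_1,\dots,w_M$, facts (i)--(ii) show that the summands $w_iB_i$, $i=1,\dots,M$, coincide in the two schemes. Hence $\tilde I(\boldsymbol{\zeta}_{M+1})=\tilde I(\boldsymbol{\zeta}_M)+w_{M+1}B_{M+1}$ with $w_{M+1}\ge 0$ and $B_{M+1}$ positive semidefinite, so $\tilde I(\boldsymbol{\zeta}_{M+1})-\tilde I(\boldsymbol{\zeta}_M)$ is positive semidefinite and the lemma gives $\phi(\boldsymbol{\zeta}_{M+1})\le\phi(\boldsymbol{\zeta}_M)$. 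For part (II), fix $(M,h)$ and take $0\le p<p'<1$: fact (i) keeps every $B_i$ fixed, while $w_i(p)-w_i(p')=\bigl[(1-p)^{i-1}-(1-p')^{i-1}\bigr]\prod_{l=1}^{i-1}(1-q_l)\ge 0$ for each $i$. Therefore $\tilde I(\boldsymbol{\zeta}(p))-\tilde I(\boldsymbol{\zeta}(p'))=\sum_{i=1}^{M}[w_i(p)-w_i(p')]B_i$ is positive semidefinite, and the lemma yields $\phi(\boldsymbol{\zeta}(p))\le\phi(\boldsymbol{\zeta}(p'))$. Strict monotonicity in either case holds whenever $\boldsymbol{C}_T$ is not orthogonal to the column space of the added positive-semidefinite term, which is the generic situation.

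The two lemmas are routine; the one step that genuinely needs care is the bookkeeping in part (I) — confirming that enlarging $M$ really does leave the first $M$ terms of $\tilde I$ untouched. This rests on the common-$p$ parametrization, which confines the ``$p_M=1$'' effect to $w_{M+1}$, and on $\boldsymbol{C}_T$ and the $q$'s being functions of the interval endpoints alone, so that the intervals shared by $\boldsymbol{\zeta}_M$ and $\boldsymbol{\zeta}_{M+1}$ contribute identically.
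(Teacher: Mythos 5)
Your proposal is correct and follows essentially the same route as the paper's proof in \ref{theoremapp}: both arguments show that the relevant difference of Fisher information matrices is a non-negative definite sum of the per-interval terms (the added $(M+1)$-st term for part (I), and terms weighted by $E[N_i^1]-E[N_i^2]=n\overline{F}_T(L_{i-1})[(1-p_1)^{i-1}-(1-p_2)^{i-1}]\ge 0$ for part (II)), and then use the order-reversal of matrix inversion to conclude the inequality for $\boldsymbol{C}_T'[\,\cdot\,]^{-1}\boldsymbol{C}_T$. Your Cauchy--Schwarz variational justification of the inversion step and your explicit bookkeeping that the first $M$ summands coincide when passing from $M$ to $M+1$ are welcome elaborations of points the paper leaves implicit, but they do not constitute a different approach.
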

\begin{proof}
    The proof of the theorem is given in \ref{theoremapp}.
\end{proof}

Theorem \ref{theorem5} suggests that $\phi(\boldsymbol{\boldsymbol{\zeta}})$ attains its optimum value for the decision variables $M$ and $p$ at their respective limits. Thus, we must fix them \textit{a priori} and determine the optimal value of $h$, say $h^*$, by minimizing $\phi(\boldsymbol{\zeta})$. Subsequently, we determine $\pi_c$ and $n$ using (\ref{optp}) and (\ref{optn}), respectively. For convenience, we summarize this entire process in Algorithm \ref{alg2:PIC-I}.


\begin{algorithm}
\caption{Determination of Optimal Decision Variables without Cost Constraints} \label{alg2:PIC-I}
\begin{algorithmic}[1]
\State Fix $\alpha$ and $\beta$, the producer's and consumer's risks, respectively.
\State Fix $t_0$.
\State Fix $\pi_0$ and $\pi_1$, the acceptable and rejectable reliability limits, respectively.
\State Fix $M$ and $p$, say $M_0$ and $p_0$, respectively.
\State Obtain $h^*$, the optimal value of $h$, by minimizing $S^2$. \Comment{The expression for $S^2$ is provided in Section \ref{dev}.}
\State Calculate the value of $n$ from (\ref{optn}) using $h^*$.
\State The required sample size is taken as $n^*=\lfloor n\rfloor$.
\State Compute the acceptance limit $\pi_c$ from (\ref{optp}) using  $\boldsymbol{\boldsymbol{\zeta}}^*=(n^*,M_0,h^*,p_0)$.
\end{algorithmic}
\end{algorithm}
\indent Note that the optimization problem involved in Step 5 in Algorithm \ref{alg2:PIC-I} is rather simple and can be easily solved using any standard optimization module (e.g., \textit{optim()} function in \textit{R}). 

\subsection{Optimal RASP with cost constraints} \label{OPT_WC}
\indent Reliability experiments are often conducted under strict budgetary limitations; the approach presented in the previous subsection overlooks these budgetary constraints. To resolve this issue, we now incorporate the budgetary constraints into the optimization problem presented above and provide an enhanced solution that aligns with real-world budgetary constraints.\\
\indent Let $TC(\boldsymbol{\boldsymbol{\zeta}})$ denote the total cost of running a life test under a PIC-I scheme. Note that the total cost depends on the following costs as follows:
\begin{enumerate}[(i)]
    \item Sample Cost: This is the cost of $n$ test units that are subjected to a life test under a PIC-I scheme. Let $C_S$ be the cost of each test unit. Then, the total cost of the sample is given by $nC_S$.
    \item Operation Cost: This involves the cost of running a life test, including the cost of utilities and the salary of the operators. Let $\tau$ denote the duration of the life test. Furthermore, let $C_\tau$ be the operating cost per unit of time. Then the resulting total operating cost is $C_\tau E[\tau]$, where $E[\tau]$ is the expected duration of the PIC-I scheme and is given by
    \begin{align*}
    E[\tau]=\sum_{m=1}^M L_m \mathcal{P}_m, 
    \end{align*}
    where $\mathcal{P}_m$ is the probability that the life test terminates at $L_m$ and is given by
\begin{align*}
   \mathcal{P}_m= P_m-P_{m-1}, 
\end{align*}
with
\begin{align*}
P_m=&\left[\sum_{l_1=1}^m \prod_{l_2=0}^{l_1-1}(1-p_{l_2})(1-q_{l_2})\{q_{l_1}+(1-p_{l_1})q_{l_1}\}\right]^n,
\end{align*}
for $m=1,\ldots,M-1$. Note that $P_0=0$ and $P_M=1$ \citep{sonal_17}.

\item Failure Cost: This refers to the expenses incurred from the failure of a test unit during a life test experiment, encompassing the costs associated with identifying the failure mode, performing a rework, scrapping the defective unit, and any additional procedures necessary to understand the failure. Let $C_D$ be the cost associated with a failed test unit. So, the resulting total cost is given by $C_DE[D]$, where $E[D]$ is given by (\ref{dsum}).  

 
\item Inspection Cost: This cost involves the cost of conducting intermittent inspections during the course of the life test under a PIC-I scheme. Let $I$ denote the number of inspections completed in a $M$-point PIC-I scheme. Furthermore, let $C_I$ be the cost per inspection in such a life test. Then, the total inspection cost is given by $C_I E[I]$, where $E[I]$ is expected number of inspections and is given by
 \begin{align*}
    E[I]=\sum_{m=1}^M m \mathcal{P}_m.
\end{align*}
\end{enumerate}
Thus, the total cost corresponding to a life test under a PIC-I scheme is given by
\begin{align*}
    TC(\boldsymbol{\boldsymbol{\zeta}})=nC_S+C_\tau E[\tau]+C_DE[D]+C_IE[I].
\end{align*}
Let $C_B$ be the available budget for conducting the life test. Then, for given $(\alpha,\beta)$, the decision problem to find the optimal $\boldsymbol{\zeta}=(n,M,h,p)$ is given by
\begin{align}\label{min}
    \text{minimize }\phi(\boldsymbol{\boldsymbol{\zeta}})
\end{align}
subject to
\begin{align}\label{cost}
    TC(\boldsymbol{\zeta})\leq C_B
\end{align}
\begin{align}\label{n1}
  n=\left[\frac{S_1 z_{\beta}-S_0z_{1-\alpha}}{\pi_0-\pi_1}\right]^2 
\end{align}
\begin{align*}
     M\geq s,~n\in\mathbb{N},~M\in \mathbb{N},~h>0 \text{ and } ~ 0\leq p< 1.
\end{align*}

It is obvious that (\ref{cost}) represents the cost constraint, as discussed above. Note that the equality constraint in ($\ref{n1}$) is derived in Section \ref{dev}. Furthermore, as the Regularity Condition VII(b) suggests, $M$ must be greater than $s$, the number of parameters in the model. Toward this end, we note that $s=4$ for the dependent model given by (\ref{sub}). Similarly, for the independent model in (\ref{sub1}), $s=3$. \\
\indent From (\ref{n1}), it is easy to see that $n$ depends on $M$, $h$ and $p$. We reduce the complexity of the decision problem by substituting (\ref{n1}) both in (\ref{min}) and (\ref{cost}). Furthermore, as is standard in the existing literature \citep{huang_08}, we pre-fix the uniform withdrawal proportion $p$. Thus, the optimization problem involves only the decision variables $M$ and $h$. Furthermore, as Theorem \ref{theorem5} suggests, $\phi(\boldsymbol{\zeta})$ is decreasing in $M$, while $TC(\boldsymbol{\zeta})$ is increasing in $M$. Toward this end, we propose Algorithm \ref{alg3:PIC-I} to compute the optimal decision variables.
\begin{algorithm}
\caption{Determination of Optimal Decision Variables with Cost Constraints} \label{alg3:PIC-I}
\begin{algorithmic}[1]
\State Fix $\alpha$ and $\beta$, the producer's and consumer's risks, respectively.
\State Fix $t_0$.
\State Fix $\pi_0$ and $\pi_1$, the acceptable and rejectable reliability limits, respectively. 
\State Fix the uniform withdrawal proportion $p$, say $p_0$.
\State Fix an upper bound for $M$, say $M_0$.
\State $M \gets s$. \Comment{$s$ is the the lower bound for $M$}
\While{$M<M_0$} 
\State Obtain the solution $h^*(M)$ that minimizes $\phi(\boldsymbol{\zeta})$ subject to cost constraint given by (\ref{cost}).
\State Compute $n$ using (\ref{n1}).
 \State Set $n^*(M)=\lfloor n\rfloor$. \Comment{$n^*(M)$ is the optimal value of $n$ for a given $M$.}
 \State Compute $\phi(\boldsymbol{\zeta})$, for $\boldsymbol{\zeta}=(n^*(M), M, h^*(M),p_0)$. 
\State $M \gets M+1$.
\EndWhile
\State Obtain $M^*$ such that \Comment{ $M^*$ is the optimal value of $M$.}
$$\phi(n^*(M^*),M^*,h^*(M^*),p_0)=\min_{s\leq M\leq M_0}\phi(n^*(M),M^*,h^*(M),p_0).$$ 
\State Set $n^* \equiv n^*(M^*)$ and $h^* \equiv h^*(M^*)$. 
\State Compute the acceptance limit $\pi_c$ from (\ref{optp}) using  $\boldsymbol{\boldsymbol{\zeta}}^*=(n^*,M^*,h^*,p_0)$.
\end{algorithmic}
\end{algorithm}

\section{Numerical Experiment} \label{num_exp}
\indent In this section, we conduct a detailed numerical study to assess how the
degree of dependence among the potential failure times impacts the resulting RASPs. We consider the battery failure data set introduced in Section \ref{INTRO} after recalibrating it in units of thousands of ampere-hours to determine the necessary background details.\\
\indent The battery failure data set includes four causes of failure and provides the number of failures attributed to each cause at each inspection point, along with the corresponding number of withdrawals. The limited number of observed failures for each cause leads us to group Causes "1" and "3" together as the first cause of failure. Similarly, Causes "2" and "4" are together considered as the second cause of failure. Therefore, we have $J=2$ for the subsequent analysis. We also use the independent competing risk model presented in (\ref{sub1}) and obtain the MLEs $\widehat{\eta}_1=1.291$, $\widehat{\eta}_2=1.339$ and $\widehat{\gamma}=1.644$ by maximizing the log-likelihood function $l(\boldsymbol{\theta}\ | \ \mathcal{D}_M,\boldsymbol{\boldsymbol{\zeta}})$ in (\ref{log_like_eqn}).\\
\indent As Algorithms \ref{alg:PIC-I},  \ref{alg2:PIC-I} and \ref{alg3:PIC-I} suggest, we further need to compute $S_0$ and $S_1$, the values of $S$ under $H_0$ and $H_A$ as specified in (\ref{Hyp}). However, this would require the specification of the unknown model parameters. Toward this end, we adopt the approach of \citet{Wu_2020}. We note that the reliability functions $\overline{F}_T(t_0|\boldsymbol{\theta})$ in (\ref{surv}) is an increasing function of $\eta_j$ for each $j=1,\ldots,J$. Thus, the hypotheses in (\ref{Hyp}) can be equivalently stated as follows:
$$H_0: \boldsymbol{\eta}=\boldsymbol{\eta}_0\,\,\,\,\text{vs}\,\,\,\, H_A: \boldsymbol{\eta}=\boldsymbol{\eta}_1,$$
where $\boldsymbol{\eta}_0=(\eta_{01},\ldots,\eta_{0J})$ and $\boldsymbol{\eta}_1=(\eta_{11},\ldots,\eta_{1J})$. A lot is accepted when $\boldsymbol{\eta}$ is greater than $\boldsymbol{\eta}_0$ and it is rejected when $\boldsymbol{\eta}$ is less than $\boldsymbol{\eta}_1$. Obviously, $\eta_{0j} \ge \eta_{1j}$, for $j=1,\ldots,J$. For the subsequent analysis, we consider $\boldsymbol{\eta}_0=(\eta_{01},\eta_{02})=(1.291,1.339)$, which are the MLEs of $\eta_1$ and $\eta_2$. To decide $\boldsymbol{\eta}_1$, we further define the discrimination ratio $d_j$ as $d_j=\eta_{0j}/\eta_{1j}$ and set it to a suitable set of values as detailed in the subsequent sub-sections. Throughout the analysis, we have used $\gamma=1.644$, which is the MLE of $\gamma$. Furthermore, we consider $(\alpha,\beta)=(0.05,0.1)$. Also, the specific time point of interest $t_0$ is fixed at 0.5.\\    
\indent The rest of this section is organized as follows. In \S \ref{NUM_OPT}, we provide optimal sampling plans following the methodology presented in Section \ref{dev}. Subsequently, we consider the methodology presented in Section \ref{OPT} and provide the $c$-optimal design parameters along with the optimal sample size and acceptance limit in \S \ref{NUM_C_OPT} without the budgetary constraints and in \ref{NUM_C_OPT} with the budgetary constraints.
\subsection{Optimal Sampling Plans} \label{NUM_OPT}
\indent As evident from Section \ref{dev}, we must specify $M$, $\boldsymbol{L}_M$ and $\boldsymbol{p}_M$ first to obtain the optimal sample size $n$ and the acceptance limit $\pi_c$. For convenience, we consider the equispaced PIC-I schemes with uniform withdrawal proportions introduced in Section \ref{OPT}. In particular, we consider three options for the number of inspections $M$ (i.e., 4, 6, and 8). In each case, we further select three choices for both the common inspection time interval $h$ (i.e., 0.2, 0.3, and 0.4) and the common withdrawal proportion $p$ (i.e., 0.0, 0.2, 0.3). Moreover, we consider the same discrimination ratio for both causes and choose two possible values for the common discrimination ratio $d$ (i.e., 1.5 and 1.8). We then obtain the optimal values of $\pi_c$ and $n$ both for the independent and dependent competing risks model presented in Section \ref{Frail_Mod}. For the dependent competing risks model, we further consider two different levels of dependence considering two different choices for $\nu$ (i.e., 0.5 and 1). Table \ref{Table 1} provides the resulting optimal values of $\pi_c$ and $n$ in each case. For convenience, we represent the independent competing risks model using $\nu=0$ in Table \ref{Table 1}.\\
\begin{table}[hbt!]
   \centering
    \caption{Optimal Sampling Plans}
    \begin{tabular}{|c|cc|cc|cc|cc||cc|cc|cc|}
    \hline
 \multirow{3}{*}{} &\multirow{3}{*}{$\nu$}  &\multirow{3}{*}{$h$}&\multicolumn{6}{c||}{$d=1.5$}&\multicolumn{6}{c|}{$d=1.8$}\\
  \cline{4-15}
  & && \multicolumn{2}{c|}{$M=4$} & \multicolumn{2}{c|}{$M=6$}& \multicolumn{2}{c||}{$M=8$}&\multicolumn{2}{c|}{$M=4$} & \multicolumn{2}{c|}{$M=6$}& \multicolumn{2}{c|}{$M=8$}\\
    \cline{4-15}

  \multirow{9}{*}{\rotatebox{90}{$p=0$}}  
&  &   &$\pi_c$&$n^*$&$\pi_c$&$n^*$&$\pi_c$&$n^*$&$\pi_c$&$n^*$&$\pi_c$&$n^*$&$\pi_c$&$n^*$\\
  \cline{1-15}
&   &    0.2& 0.547& 32& 0.546& 31& 0.547& 31&0.482 &13&0.481& 13&0.481& 13\\
 & 0&    0.3& 0.546& 33& 0.547& 33& 0.547& 33&0.481 &14& 0.482& 13&0.482& 13\\
&& 0.4& 0.548 &36&  0.548& 36& 0.548 &36&0.483&15&0.484&15&0.484 &15\\
  \cline{2-15}
 &    &   0.2&  0.593 & 57& 0.594 &51&0.594 &49& 0.544 &  25 &0.546 &  23&0.547 &  22\\
  &0.5&    0.3& 0.594 &53& 0.594 &50& 0.594 &49&0.547 &  24& 0.547 &  22 &0.547 &  22\\
  &&0.4&0.593& 53&0.594& 53&0.593 &53& 0.546 &  24& 0.545  & 23 & 0.545 &  23\\
    \cline{2-15}
& &   0.2&0.627& 76&0.628& 70& 0.629 &67&0.587& 34&0.589& 32&0.590& 30\\
 &1&    0.3&0.629 &72&0.628& 68&0.628 &67& 0.590& 33& 0.590& 31& 0.590& 30\\
&& 0.4&  0.628& 71&0.628 &70&0.628 &70& 0.589 &32&0.588& 32&0.588& 32\\
   
       \hline
         \hline
 \multirow{9}{*}{\rotatebox{90}{$p=0.2$}}&   &    0.2&  0.548 &38& 0.547& 37&  0.547& 37& 0.484 &16&0.484 &15& 0.484& 15\\
&  0&    0.3& 0.547 &36&0.547& 36&0.547 &36& 0.482 &15&0.482 &15& 0.482& 15\\
&& 0.4&0.547& 37&    0.547 &37&   0.547& 37& 0.482 &15& 0.483& 15& 0.483& 15\\
  \cline{2-15}
  &   &   0.2&0.593& 71& 0.594& 66&  0.595& 63&0.544 &  31& 0.546 &  29&0.547 &  28\\
&  0.5&    0.3& 0.595& 60&0.595& 56&0.595& 55& 0.548 &  27 &0.549 &  25 &0.549&   25\\
&&0.4&  0.594& 56& 0.594& 55&  0.594 &54& 0.547 &  25& 0.546 &  24 &0.546  & 24\\
     \cline{2-15}
            &&   0.2& 0.627& 93& 0.628 &88&0.628 &86&0.587 &42& 0.589 &40& 0.589& 39\\
&1&    0.3&  0.629& 82&0.629& 77& 0.629 &75&0.590& 38& 0.591 &35& 0.591& 35\\
&& 0.4&  0.628& 75&   0.628 &73&0.628 &73&0.589 &34&0.589& 33&0.589 &33\\
     \hline
         \hline
 \multirow{9}{*}{\rotatebox{90}{$p=0.3$}}&   &    0.2&  0.548& 44&  0.548 &42&  0.548&42& 0.486 &18&0.486& 17& 0.486& 17\\
&  0&    0.3& 0.547& 37&0.547& 37&0.547 &37& 0.483 &15&0.483 &15& 0.483& 15\\
&& 0.4&0.547& 38&    0.547 &38&   0.547& 38& 0.482 &16& 0.483& 15& 0.483& 15\\
  \cline{2-15}
  &   &   0.2&0.593& 82& 0.594& 77&   0.595& 74&0.544& 36& 0.546& 34& 0.547& 33 \\
&  0.5&    0.3& 0.595& 66&0.595& 61&0.595& 60& 0.549& 29&0.549& 27&0.550& 27 \\
&&0.4&  0.594&57& 0.594& 56&  0.594 &56& 0.548 &  26& 0.547 &  25 &0.547  & 25\\
     \cline{2-15}
            &&   0.2&  0.627& 107& 0.628& 102&0.628& 100& 0.587& 48& 0.588& 46& 0.589& 45\\
&1&    0.3&  0.629& 89&0.629& 83&  0.629 &82&0.591& 41&0.591 &38& 0.591 &38\\
&& 0.4&  0.628& 77&  0.628 &75&0.628 &75& 0.590& 35& 0.590& 34&0.589 &34\\
     \hline
       \end{tabular}
    \label{Table 1}
\end{table}
\indent It is interesting to note that as $\nu$ increases, both $\pi_c$ and $n$ increase irrespective to the choices of $M$, $h$, $p$, and $d$. Furthermore, the resulting optimal $\pi_c$ and $n$ are higher for the dependent competing risks model compared to the independent competing risks model. This is perhaps due to the fact that the r.f. increases with $\nu$. For given $\nu$, $h$, $M$, the optimal $\pi_c$ is rather insensitive to the choice of withdrawal proportion $p$. However, the optimal $n$ increases with $p$ both for the dependent and independent competing risks model. Furthermore, the effect of $M$ on $\pi_c$ and $n$ seems to be insignificant, particularly for higher $d$. As expected, both $\pi_c$ and $n$ decrease as $d$ increases.\\
\indent As is standard in the existing literature, we now present the Operating Characteristic (OC) curves in Figure \ref{Oc curve} for both the independent and dependent competing risks models for $M$=6, $h=0.3$, and $p=0.2$. Note that the OC curve demonstrates the likelihood of accepting a lot against the proportion of defective items in the lot. As expected, the lot acceptance probability decreases as the proportion of defective items goes up. Moreover, for a fixed proportion of defective items, the acceptance probability is lower for the dependent competing risks model for both $d=1.5$ and $d=1.8$. This reconfirms the importance of incorporating the dependence among the potential risks while modeling competing risks data.
We also observe that the OC curves become flatter as $d$ increases, suggesting that the lot acceptance probability increases with $d$ for a given proportion of defective items.



\begin{figure}[hbt!]
\begin{subfigure}{0.45\textwidth}
\centering
\includegraphics[width=\textwidth, height=0.25\textheight]{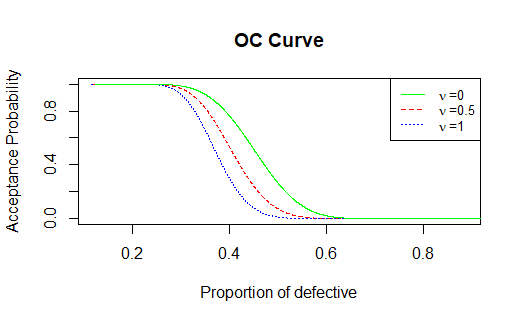}
\caption{$d=1.5$}
\label{Oc1}
\end{subfigure}
\begin{subfigure}{0.45\textwidth}
\centering
\includegraphics[width=\textwidth, height=0.25\textheight]{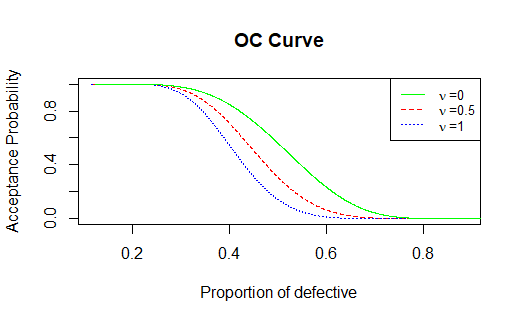}
\caption{$d=1.8$}
\label{Oc2}
\end{subfigure}
    \caption{{OC curve when $h=0.3$ $M=6$ and $p=0.2$}}
    \label{Oc curve}
\end{figure}
\subsection{Optimal RASP without Budgetary Constraint} \label{NUM_C_OPT}
\indent We now consider the determination $h$, $n$ and $\pi_c$ using the $c$-optimality design criterion presented in Section \ref{OPT}. As Algorithm \ref{alg2:PIC-I} suggests, we first fix $M$ and $p$ at the three possible values presented above and obtain the optimal $h$ by minimizing $\phi(\boldsymbol{\zeta})$. 
Table \ref{Table 2} reports the resulting optimal interval duration.  Note that the optimal inspection interval $h$ is independent of the discrimination ratio $d$. However, as (\ref{optp}) and (\ref{optn}) suggest, both $\pi_c$ and $n$ depend on the discrimination ratio $d$. Thus, Table \ref{Table 2} reports $\pi_c$ and $n$ for both $d=1.5$ and $1.8$. Furthermore, Table \ref{Table 2} also provides the optimal value of $S^2$.\\   
\begin{table}[hbt!]
    \centering
\small    \caption{Optimal Plans without Budgetary Constraints }
    \begin{tabular}{|c|c|cccc|cccc|cccc|}
    \hline
 \multirow{2}{*}{$p_0$}& \multirow{2}{*}{$\nu$} & \multicolumn{4}{|c|}{$M_0=4$} & \multicolumn{4}{c|}{$M_0=6$}& \multicolumn{4}{c|}{$M_0=8$}\\
    \cline{3-14}
& &\multirow{2}{*}{$h^*$}&$10\times$& \multirow{2}{*}{$n^*$}&\multirow{2}{*}{$\pi_c$}&\multirow{2}{*}{$h^*$}&$10\times$&\multirow{2}{*}{$n^*$}&\multirow{2}{*}{$\pi_c$}&\multirow{2}{*}{$h^*$}&$10\times $&\multirow{2}{*}{$n^*$}&\multirow{2}{*}{$\pi_c$}\\
&&&$\phi(\boldsymbol{\boldsymbol{\zeta}}^*)$&&&&$\phi(\boldsymbol{\boldsymbol{\zeta}}^*)$&&&&$\phi(\boldsymbol{\boldsymbol{\zeta}}^*)$&&\\
     \hline
 \multirow{3}{*}{0}   &  \multirow{1}{*}{0}&0.197&  1.649 &32&  0.547 &0.143&  1.601& 31& 0.547&  0.112& 1.579 &30&  0.547\\
 &    \multirow{1}{*}{0.5}&0.332&  1.789& 53& 0.594&0.265&  1.695 &50&  0.594&0.226&  1.650& 49&  0.594\\
     &\multirow{1}{*}{1}& 0.348&  1.767& 71&  0.628&0.284& 1.683 &68&  0.629&0.246&  1.643 &66&  0.629\\
    \hline
 \multirow{3}{*}{0.2}      &      \multirow{1}{*}{0}& 0.272&  1.857 &36&  0.547& 0.282& 1.854 &36 & 0.547& 0.284& 1.850& 36&  0.547\\
  &   \multirow{1}{*}{0.5}&0.384&  1.896& 56&  0.594&0.351& 1.852& 55&  0.594&0.338& 1.839& 54& 0.594\\
    & \multirow{1}{*}{1}&0.40&  1.857& 75&  0.620& 0.371& 1.819& 73&  0.628&0.36&  1.806& 73& 0.628\\
    \hline
 \multirow{3}{*}{0.3}      &      \multirow{1}{*}{0}&0.317&  1.931& 37& 0.547&0.325&  1.927& 37&  0.547& 0.325& 1.926 &37&  0.547\\
  &   \multirow{1}{*}{0.5}& 0.407&  1.945& 57&  0.594& 0.385&  1.910 &56&  0.594& 0.378 & 1.901& 56&  0.594\\
    & \multirow{1}{*}{1}& 0.421 & 1.892& 76& 0.628&0.402 & 1.866 &75&  0.628& 0.396&  1.865& 75&  0.628\\
     \hline
    \end{tabular}
    \label{Table 2}
\end{table}
\indent It is evident that the optimal inspection interval $h$ gets larger as $\nu$ increases, irrespective of the choices of $p$ and $M$. Thus, as the level of dependence among the potential causes of failure goes up, the successive inspections will be more infrequent. Furthermore, as the withdrawal proportion $p$ increases, the optimal interval $h$ becomes longer irrespective of the choices $\nu$ and $M$. Thus, the successive inspections in a PIC-I scheme are expected to be less frequent in comparison to a traditional interval censoring scheme. Furthermore, as the number of inspections $M$ increases, the inspection intervals are slightly wider. As observed in the first case, the sample size $n$ increases as the level of dependence gets stronger. This is also true for $\pi_c$. Furthermore, as expected, the sample size $n$ decreases significantly as the discrimination ratio $d$ increases. We also note that the value of the design criterion increases as the withdrawal proportion $p$ increases, whereas it goes down as $M$ increases. Note that this is consistent with Theorem \ref{theorem5}.   
\subsection{Optimal RASP with Budgetary Constraint}\label{NUM_COST}
\indent In this subsection, we consider the budgetary constraint and obtain the optimal decision variables $n$, $h$, and $\pi_c$ following Algorithm \ref{alg3:PIC-I}. For the purpose of illustration, we assume $C_S=0.1$, $C_{\tau}=5$, $C_D=0.025$, and $C_I=10$, where each of these cost components is as in \S \ref{OPT_WC}. Furthermore, we consider four possible values of the available budget $C_B$, i.e., 55, 65, 85, and 95, primarily to check the sensitivity of the design parameters with respect to available budgetary allocation. Table \ref{NUM_COST_TAB_1} presents the optimal values of $n$, $M$, $h$, $\pi_c$ under various scenarios for the common discrimination ratio $d=1.5$. Table \ref{NUM_COST_TAB_1} further reports the expected number of failures $E[D^{*}]$, expected duration of life test experiment $E[\tau^*]$, expected number of inspections $E[I^*]$ and the optimal value of the design criterion $\phi(\zeta^*)$. The corresponding results for the common discrimination ratio $d=1.8$ are
given in Table \ref{NUM_COST_TAB_2}. \\
\indent We observe that the required sample size and acceptance limit increase as the level of dependence gets stronger for both $d=1.5$ and $1.8$. This is true irrespective of the size of the available budget and the withdrawal proportion $p$. Also, the inspections are expected to get infrequent as the level of dependence goes up in each case. However, the optimal number of inspections is typically insensitive to the level of dependence at a fixed budget and withdrawal proportion. However, for a given level of dependence, as the available budget increases, the number of inspections $M$ goes up marginally. This is true for every withdrawal proportion. Interestingly, the sample size is rather insensitive to the available budget for a given level of dependence. However, the inspection interval is expected to be shorter as the available budget increases. As expected, the sample size requirement significantly drops as the discrimination ratio goes up. Furthermore, the acceptance limit also goes down with the increase in the discrimination ratio.\\
\indent It is also interesting to note that the expected number of failures increases as the level of dependence becomes stronger. This holds for every available budget and withdrawal proportion. Furthermore, the experiment duration is expected to get longer as the level of dependence increases. However, the expected number of inspections is rather insensitive to the level of dependence.

\begin{table}[hbt!]
 \centering
     \caption{Optimal Plans with Budgetary Constraints for $d=1.5$}
\begin{tabular}{|c|cccccccc|}
\hline
  $p_0$&$C_B$& $\nu$&$(n^*,M^*,h^*)$&$\pi_c$&$E[D]$&$E[\tau^*]$&$E[I^*]$& $10\times\phi(\boldsymbol{\boldsymbol{\zeta}}^*)$\\
  \hline
  \multirow{12}{*}{0}& &0    & (32,  4,  0.196)& 0.547& 18.394&  0.784&  4.000&  1.650\\
&55&0.5&(53,  4,  0.332)&   0.594& 39.973&  1.328&  4.000&  1.789\\
&&1& (71,  4,  0.330)&  0.628& 47.443&  1.320&  4.000& 1.770\\
\cline{2-9}
& &0    & (31,  5,  0.165)&  0.547& 18.769&  0.825&  5.000& 1.619\\
&65&0.5&(51,  5,  0.293)&  0.594& 40.417&  1.465&  5.000&  1.732\\
&&1&(71,  4,  0.348)&  0.628& 48.797&  1.392&  4.000&  1.766\\
\cline{2-9}

&&0&(30, 7, 0.125)& 0.547& 19.230&  0.875&  7.000&  1.588\\
&85&0.5&(50,  6,  0.265)&0.594& 41.079&  1.590&  6.000&  1.695\\
&&1&(68, 6, 0.284)& 0.629& 51.270&  1.704&  6.000&  1.683\\
\cline{2-9}
&&0&(30,  7,  0.125)& 0.547& 19.230&  0.875&  7.000&  1.588\\
&95&0.5&(49,  7,  0.243)&  0.594& 41.332&  1.701&  7.000&  1.669\\
&&1&(67,  7,  0.263)&  0.629& 52.045&  1.841&  7.000&  1.661\\
\hline
  \hline
  \multirow{12}{*}{0.2}&&0    & (36,  4,  0.272)&  0.547& 20.095&  1.088&  4.000&  1.857  \\
& 55&0.5&(56,  4,  0.384)& 0.594& 35.120&  1.536&  4.000&  1.896\\
&&1&(85,  4,  0.277)& 0.629& 38.166& 1.108&  4.000&  2.087\\
\cline{2-9}
&&0    & (36,  5,  0.274) &  0.547& 21.833 & 1.362 & 4.970 & 1.857\\
& 65&0.5&(56,  4,  0.384)& 0.594&  35.120&  1.536&  4.000&  1.896\\
&&1&(75,  4,  0.400)&  0.628& 43.207&  1.60&  4.000& 1.853\\
\cline{2-9}
&&0&(36,  8,  0.283)&  0.547& 23.350& 1.748&  6.177&  0.1850\\
&85&0.5&(55, 6, 0.350)&  0.594& 35.138&  2.075&  5.930&  1.852\\
&&1& (73,  6,  0.371)& 0.628& 43.406&  2.224&  5.996&  1.819\\
\cline{2-9}
&&0&(36,  8,  0.283)&  0.547&  23.350& 1.748&  6.177& 1.850\\
&95&0.5&(54,  8,  0.337)&  0.594&  34.526&  2.465&  7.314&  1.839\\
&&1&(73,  7,  0.364)&  0.628& 43.638&  2.536&  6.966& 1.810\\
\hline
\hline
  \multirow{12}{*}{0.3}&&0    & (37,  4,  0.316)&  0.547& 19.940&  1.259&  3.985&  1.931\\
& 55&0.5&(57,  4,  0.406) & 0.594& 32.571&  1.623&  3.997&  1.942\\
&&1& (111,   4,   0.178) & 0.627&   27.896&   0.712&   4.000&   2.847\\
\cline{2-9}
& &0    & (37,  6,  0.324) &  0.547 &21.309 & 1.625 & 5.014 & 1.927\\
&65&0.5&(57,  4,  0.406) & 0.594& 32.571&  1.623&  3.997&  1.942\\
&&1&(76,  4,  0.420)&  0.628& 39.972&  1.680&  4.000&  1.897\\
\cline{2-9}
& &0    & (37,  6,  0.324) &  0.547& 21.309 & 1.625 & 5.014 & 1.927\\
&85&0.5&(56, 9, 0.376)& 0.594&  32.014&  2.417&  6.429&  1.904\\
&&1& (75, 6, 0.401)& 0.628&  40.001&  2.372&  5.916&  1.869\\
\cline{2-9}
&&0&(36,  8,  0.283)&  0.547&  23.350& 1.748&  6.177&  1.849\\
&95&0.5&(56,  9,  0.376)&  0.594&  32.014&  2.417&  6.429&  1.904\\
&&1&(75,  7,  0.397)&  0.6280& 40.037&  2.650&  6.675&  1.865\\
\hline
\end{tabular}
\label{NUM_COST_TAB_1}
\end{table}
\begin{table}[hbt!]

 \centering
     \caption{Optimal Plans with Budgetary Constraints for $d=1.8$}
\begin{tabular}{|c|cccccccc|}
\hline
  $p_0$&$C_B$& $\nu$&$(n^*,M^*,h^*)$&$\pi$&$E[D^*]$&$E[\tau^*]$&$E[I^*]$& $\phi(\boldsymbol{\boldsymbol{\zeta}}^*)\times10$\\
  \hline
  \multirow{12}{*}{0}& &0    &  (13,  4,  0.196)&   0.482&  7.473&  0.784&  4.000& 1.650\\
&55&0.5&(23,  4,  0.332)&   0.547& 17.347&  1.328&  4.000&  1.789\\
&&1& (32,  4,  0.3480) &  0.590&21.993&  1.392&  4.000& 1.766\\
\cline{2-9}
& &0    & (13,  5,  0.165)& 0.482&  7.871&  0.825&  5.000&  1.619\\
&65&0.5&(23,  5, 0.293)&  0.5470& 18.2270&  1.4650&  5.0000 & 1.732\\
&&1& (32,  5,  0.311)&   0.590& 23.200&  1.555&  5.000& 1.716\\
\cline{2-9}
&&0& (12,  7,  0.125)&0.482&  7.692&  0.875&  6.999&  1.588\\
&85&0.5&(22,  7, 0.243)&  0.547& 18.557&  1.699&  6.994&  1.669\\
&&1&(31, 7, 0.263)& 0.590& 24.080&  1.841&  7.000&  1.661\\
\cline{2-9}
&&0&(12, 8,  0.112)&  0.482&   7.864&  0.896&  7.999&  1.571\\
&95&0.5&(21, 8, 0.225)& 0.547& 18.069&  1.796&  7.982&  1.649\\
&&1&(30, 8,  0.246)& 0.590& 23.857&  1.968&  8.000&  1.643\\
\hline
  \hline
  \multirow{12}{*}{0.2}&&0    & (15, 4, 0.272)& 0.482&  8.373&  1.079&  3.968&  1.857 \\
& 55&0.5&(25, 4,  0.384)& 0.547& 15.679&  1.531&  3.986&  1.896\\
&&1&(34, 4,  0.400)& 0.589& 19.587&  1.600&  3.999&  1.833\\
\cline{2-9}
 & &0    & (15, 7,  0.283)&  0.482&  9.691& 1.493&  5.277&  1.851\\
&65&0.5&(24, 5, 0.363)&0.548& 15.263&  1.769&  4.874&  1.867\\
&&1& (34,  5,  0.382)&0.590& 20.005&  1.904&  4.984&  1.819\\
\cline{2-9}

& & 0&(15, 7,  0.283)&  0.482&  9.691& 1.493&  5.277&  1.851 \\
&85 &0.5&  (24, 10,  0.332)&  0.548& 15.327&  2.267&  6.829&  1.834\\
&&1& (33, 7,  0.364)&  0.590& 19.727&  2.439&  6.701&  1.811\\
\cline{2-9}
& & 0&(15, 7,  0.283)&  0.482&  9.691& 1.493&  5.277&  1.851 \\
&95 &0.5&  (24, 10,  0.332)&  0.548& 15.327&  2.267&  6.829&  1.834\\
&&1&(33,  8,  0.359)&  0.590& 19.768&  2.641 & 7.356 & 1.806\\
\hline
\hline
  \multirow{12}{*}{0.3}&&0    & (15,  6,  0.324)& 0.483&  8.639&  1.404 & 4.334&  1.927\\
& 55&0.5&(25, 4, 0.406)&  0.547& 14.286&  1.592&  3.920&  1.942\\
&&1&  (35, 4,  0.420)& 0.589& 18.408&  1.675&  3.988&  1.897\\
\cline{2-9}
&&0    &(15,  6,  0.324)& 0.483&  8.639&  1.404 & 4.334&  1.927  \\
& 65&0.5&(25, 6, 0.384)&0.548& 14.345&  1.932&  5.031&  1.912 \\
&&1&(34, 5,  0.408)&  0.589& 18.074&  1.990&  4.876&  1.878\\
\cline{2-9}

& & 0&(15,  6,  0.324)& 0.483&  8.639&  1.404 & 4.334&  1.927 \\
& 85&0.5& (25, 8,  0.377)&  0.548& 14.298&  2.034&  5.394& 1.905\\
&&1&   (34, 10, 0.393)& 0.590& 18.158&  2.623&  6.673&  1.860\\
\cline{2-9}
&&0&(15,  6,  0.324)& 0.483&  8.639&  1.404 & 4.334&  1.927\\
& 95&0.5& (25, 8,  0.377)&  0.548& 14.298&  2.034&  5.394&  1.905\\
&&1&   (34, 10, 0.393)& 0.590& 18.158&  2.623&  6.673&  1.860\\
\hline

\end{tabular}
\label{NUM_COST_TAB_2}
\end{table}
 
\section{Numerical Illustration} \label{Exam}
\indent In this section, we first demonstrate the application of the proposed methodology in \S \ref{appl}. Subsequently, we conduct a simulation experiment to study the finite sample properties of the RASPs in \S \ref{Sim_Dem}.
\subsection{Illustrative Application}\label{appl}
In this section, we demonstrate the application of the RASPs using the background details of a real-life example discussed by \citet{mendenhall1958estimation}. The example pertains to the ARC-1 VHF communication transmitter-receivers and involves two failure modes (i.e., confirmed and unconfirmed failures). For convenience, we recalibrate the related data set by converting the failure times to thousands of hours. \\
\indent Now, we fit both independent and dependent competing risk models for this data set. As discussed in Section \ref{Frail_Mod}, these models assume that the shape parameters are equal. Toward this end, we first examine the validity of this assumption. We obtain the MLEs and the corresponding standard errors (SEs) of the model parameters for the independent and dependent competing risk models with equal and unequal shape parameters by maximizing the resulting log-likelihood function as in (\ref{log_like_eqn}). Table \ref{mod_sum} reports the MLEs and the corresponding standard errors (SEs) of the model parameters for both independent and dependent competing risk models with equal and unequal shape parameters. Furthermore, we also report the log-likelihood value, Akaike's Information Criterion (AIC), and Bayesian information criterion (BIC) in Table \ref{mod_sum}.\\
\indent Looking at the AIC and BIC values, it is obvious that both the dependent competing risks models outperform the independent competing risks models. However, it is not possible to pick a clear winner between the dependent competing risk models. As suggested in the literature in such contexts (see, for example, \citet{Yang_05}), the BIC is typically considered for model selection. Thus, it seems that the dependent model with equal shape parameters is sufficient to model this data set. For the purpose of illustration, we also consider the independent competing risks model with equal shape parameters, primarily for comparative purposes. \\  
\begin{table}[hbt!]
  \centering
  \caption{MLEs (SEs) and Goodness-of-fit for Different Models}
    \begin{tabular}{|c|cc|cc|}
    \hline
          & \multicolumn{4}{c|}{Model} \\
          \cline{2-5}
          & \multicolumn{2}{c|}{Independent } & \multicolumn{2}{c|}{Dependent} \\
          \cline{2-3}\cline{4-5}
          & \multicolumn{1}{c}{Equal} & \multicolumn{1}{c|}{Unequal} & \multicolumn{1}{c}{Equal} & \multicolumn{1}{c|}{Unequal} \\
          \hline
    $\eta_1$ &  0.439 (0.026)     &  0.436 (0.025)     &    0.303 (0.039)   &  0.299 (0.037) \\
    $\eta_2$ &   0.822 (0.076)    &   0.894 (0.107)    &    0.497 (0.086)   &  0.514 (0.096)\\
    $\gamma_1$ &  1.135 (0.052)   &    1.194 (0.066)   &     1.436 (0.130)  &  1.529 (0.145)\\
    $\gamma_2$ &  -     &  1.029 (0.083)     &    -   &  1.311 (0.142) \\
    $\nu$    &    -   &   -    &   0.616 (0.237)    &  0.646 (0.240) \\
    \hline
    log-likelihood &  $-138.404$     &  $-137.214$     &  $-134.124$     &  $-132.571$\\
    AIC   &   282.807    &   282.428    &   276.248    &  275.142\\
    BIC   &   294.539    &   298.071    &   291.891    &  294.696\\
    \hline
    \end{tabular}%
  \label{mod_sum}%
\end{table}\
\indent We now follow a similar approach to that discussed in the previous section to obtain $S_0$ and $S_1$, which are required to obtain the optimal sampling plans. In particular, we consider the specific time point of interest $t_0$ to be 0.15 and use the same discrimination ratio $d=1.5$ for both causes. Subsequently, we determine the optimal inspection interval by minimizing the $c$-optimality criterion, given by (\ref{des_crit_2}), for PIC-I schemes with 4, 6, and 8 inspection points. We further consider three cases\textemdash no intermediate withdrawal (i.e., $p=0$), 20\% withdrawal of surviving units (i.e., $p=0.2$), and 30\% withdrawal of surviving units (i.e., $p=0.3$).\\
\indent We report the optimal inspection interval ($h$), sample size ($n$), and acceptance limit ($\pi_c$) in Table \ref{Table_DA_2} for both independent and dependent competing risk models. It is interesting to note that the inspection interval ($h$) is relatively shorter for the independent model, suggesting that more frequent inspections are necessary for the independent model.  Furthermore, the acceptance limit is slightly on the higher side for the independent model compared to its dependent counterpart. In addition, the required sample size ($n$) increases for both models as the withdrawal proportion increases. \\ 
\begin{table}[hbt!]
    \centering
\small    \caption{Optimal Plans without Budgetary Constraints for Independent (I) and Dependent (D) Competing Risks Models}
    \begin{tabular}{|c|c|cccc|cccc|cccc|}
    \hline
 \multirow{3}{*}{$p_0$}& \multirow{3}{*}{Model} & \multicolumn{4}{|c|}{$M_0=4$} & \multicolumn{4}{c|}{$M_0=6$}& \multicolumn{4}{c|}{$M_0=8$}\\
    \cline{3-14}
& &\multirow{2}{*}{$h^*$}&$10\times$& \multirow{2}{*}{$n^*$}&\multirow{2}{*}{$\pi_c$}&\multirow{2}{*}{$h^*$}&$10\times$&\multirow{2}{*}{$n^*$}&\multirow{2}{*}{$\pi_c$}&\multirow{2}{*}{$h^*$}&$10\times $&\multirow{2}{*}{$n^*$}&\multirow{2}{*}{$\pi_c$}\\
&&&$\phi(\boldsymbol{\boldsymbol{\zeta}}^*)$&&&&$\phi(\boldsymbol{\boldsymbol{\zeta}}^*)$&&&&$\phi(\boldsymbol{\boldsymbol{\zeta}}^*)$&&\\
     \hline
 \multirow{2}{*}{0}   &  I & 0.064 & 1.750 &72& 0.563 &0.048 & 1.690 &69&  0.563&  0.039&  1.659& 68&  0.563\\
 &    D&0.107 & 2.016& 71&  0.538&0.089&  1.918 &68&  0.538& 0.078 & 1.875& 66&  0.538\\
 
    \hline
 \multirow{2}{*}{0.2}      &      I& 0.083&  1.914& 78&  0.563&  0.078& 1.914 &78&  0.563& 0.077&  1.906& 78&  0.563\\
  &   D& 0.121 & 2.116& 74&  0.538&0.112&  2.070& 73&  0.538& 0.109&  2.052& 72&  0.538\\
     \hline
  \multirow{2}{*}{0.3}     &I& 0.091& 1.982& 81& 0.563&0.090&1.979& 81& 0.563 &0.090&   1.979&81&0.563\\
     &D&0.126&2.158&75&0.538&0.121&2.124&74&0.538&0.119&2.115&74&0.538\\
     \hline
     \end{tabular}
    \label{Table_DA_2}
\end{table}
\indent We further consider the budgetary constraints and obtain the optimal RASPs as discussed in \S \ref{OPT_WC}. For the purpose of illustration, we assume the cost components $C_S=0.1$, $c_\tau=5$, $C_D=0.025$, and $C_I=10$. In addition, we consider the available budget as $C_B=65$. Table \ref{optimal1} reports the optimal sample size ($n$), number of inspections ($M$), inspection interval ($h$) along with the acceptance limit ($\pi_c$). Furthermore, Table \ref{optimal1} also provides the expected number of failures ($E[D^*]$), the expected duration of the experiment ($E[\tau^*]$), the expected number of inspections ($E[I^*]$) and the optimal value of the design criterion ($\phi(\zeta^*)$). As observed in the previous case, the acceptance limit is slightly on the higher side for the independent model compared to its dependent counterpart. Also, the inspection interval is slightly longer for the dependent model. Although there is hardly any impact on the number of inspections, the required sample size is slightly on the lower side for the dependent model as the withdrawal proportion increases. Interestingly, the expected number of failures is on the higher side for the dependent model. Moreover, the expected duration seems to be on the higher side for the dependent model.\\
\begin{table}[hbt!]
    \centering
      \caption{Optimal Plans with Budgetary Constraints for Independent (I) and Dependent (D) Competing Risks Models}
    \begin{tabular}{|cccccccc|}
    \hline
$p_0$& Model & $(n^*,M^*,h^*)$&$\pi_c$&$E[D^*]$&$E[\tau^*]$&$E[I^*]$& $10 \times \phi(\boldsymbol{\boldsymbol{\zeta}}^*)$\\
    \hline
 \multirow{2}{*}{0.0}& I& (70,  5, 0.054)& 0.563&  40.337&  0.270&  5.000&  1.713\\
& D&(69,  5,  0.096)&  0.538& 55.867&  0.480&  5.000&  1.961\\
\hline
\multirow{2}{*}{0.2}& I &(78,  5,  0.079)&  0.563& 41.432& 0.395&  5.000&  1.907\\
&D &(73,  5,  0.115) &  0.538& 48.605&  0.575&  4.998&  2.084\\
\hline
\multirow{2}{*}{0.3}&I& (81, 5, 0.091)& 0.563& 40.640&  0.455&  4.998&  1.979\\
&D&(75,  5,  0.122)&  0.538&  45.649&  0.606&  4.967&  2.136\\
\hline
    \end{tabular}
    \label{optimal1}
\end{table}
\indent Toward this end, we briefly demonstrate the process of making a decision regarding the acceptance or rejection of a lot with a simulated data set. We consider the optimal RASP $(73, 5, 0.115)$ along with $p=0.2$ for the purpose of illustration. In this plan, the manufacturer performs a life test with a sample size of $73$. A total of 5 inspections are carried out at an interval of $0.115$ time units. Furthermore, 20\% of the surviving units are withdrawn at each intermediate inspection, with all remaining serving items being removed from the experiment at the 5th inspection. Based on the observed data set, a decision is taken regarding the acceptance or rejection of the lot using the estimated reliability of the unit. Thus, in this case, a lot is accepted if $\widehat{\overline{F}}_T(0.15\ | \ \widehat{\boldsymbol{\theta}})> 0.538$. Using the MLEs presented above, we simulate a data set following the algorithm provided by \citet{ROY_BP_23}. The simulated data set is presented in Table \ref{generated data}.
For this data set, we obtain the MLEs of the model parameters and r.f., which are as follows:
$$\widehat{\eta}_1=0.292, \widehat{\eta}_2=0.374, \widehat{\gamma}=1.779,
\widehat{\nu}=0.668,\,\,\text{and}\,\, \widehat{\overline{F}}_T(0.15\ | \ \widehat{\boldsymbol{\theta}})=0.648.$$
Since the estimated reliability is above the acceptance limit of 0.538, we accept the lot.
 \begin{table}[hbt!]
     \centering
       \caption{Simulated PIC-I Data Set}
     \begin{tabular}{ccccc}
     \hline
        $i$  & Time interval & $d_{i1}$ &$d_{i2}$&$r_i$ \\
        \hline
        1  & (0, 0.115] & 11&7&11\\
        2& (0.115, 0.230]& 10 & 8 & 5\\
3& (0.230, 0.345] &  6 & 4 & 2 \\
4& (0.345, 0.460] & 1 & 0 & 1 \\ 
5 &(0.460, 0.575] & 3 & 1 & 3 \\
\hline
     \end{tabular}
     \label{generated data}
 \end{table}   
\subsection{Simulation Evaluation} \label{Sim_Dem}
\indent In this sub-section, we evaluate the finite sample behavior of the optimal RASPs through a simulation study. For brevity, we consider the optimal RASPs under budgetary constraints for subsequent analysis.\\
\indent For each optimal PIC-I scheme, we simulate 5000 data sets following the algorithm provided by \citet{ROY_BP_23}. In each case, we consider the MLEs reported in the previous subsection as the true values (TVs) of the model parameters. For each data set, we obtain the MLEs of the relevant parameters by maximizing the log-likelihood function in (\ref{log_like_eqn}). Using these MLEs, we then obtain the MLE of the r.f. $\overline{F}_T(t|\boldsymbol{\theta})$, given by (\ref{surv}) and ({\ref{surv1}}). The histograms of the resulting MLEs of $\overline{F}_T(t|\boldsymbol{\theta})$ for both independent and dependent competing risks models with withdrawal proportions $p=0$, 0.2 and 0.3 are given in Figure \ref{fig:prob1_6_1}, Figure \ref{fig:prob1_6_2} and Figure \ref{fig:prob1_6_3}, respectively. As expected, the histograms are symmetric in nature, which is in line with our observation in Theorem \ref{th2}. \\
\begin{figure}[hbt!]
\centering
\begin{subfigure}{0.45\textwidth}
\centering
\includegraphics[width=\textwidth, height=0.25\textheight]{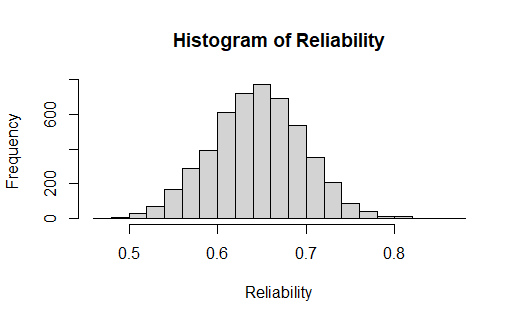}
\caption{Independent}
\label{fig:prob1_6_2.0}
\end{subfigure}
\begin{subfigure}{0.45\textwidth}
\centering
\includegraphics[width=\textwidth, height=0.25\textheight]{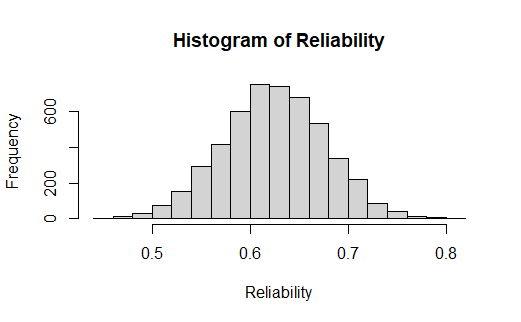}
\caption{Dependent}
\label{fig:prob1_6_1.0}
\end{subfigure}
\caption{Histograms of $\widehat{\overline{F}}_T(t|\boldsymbol{\theta})$ for Independent and Dependent Models when $p=0$}
\label{fig:prob1_6_1}
\end{figure}
\begin{figure}[hbt!]
\centering
\begin{subfigure}{0.45\textwidth}
\centering
\includegraphics[width=\textwidth, height=0.25\textheight]{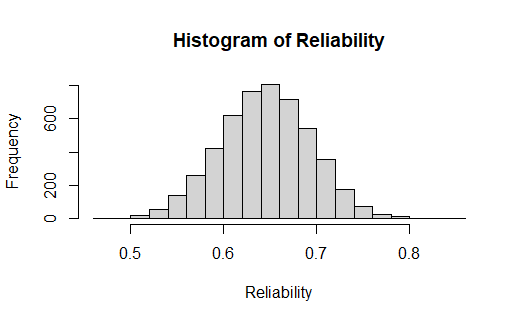}
\caption{Independent}
\label{fig:prob1_6_2.2}
\end{subfigure}
\begin{subfigure}{0.45\textwidth}
\centering
\includegraphics[width=\textwidth, height=0.25\textheight]{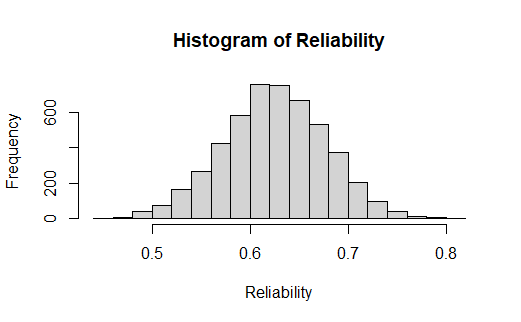}
\caption{Dependent}
\label{fig:prob1_6_1.2}
\end{subfigure}
\caption{Histograms of $\widehat{\overline{F}}_T(t|\boldsymbol{\theta})$ for Independent and Dependent Models when $p=0.2$}
\label{fig:prob1_6_2}
\end{figure}

\begin{figure}[hbt!]
\centering
\begin{subfigure}{0.45\textwidth}
\centering
\includegraphics[width=\textwidth, height=0.25\textheight]{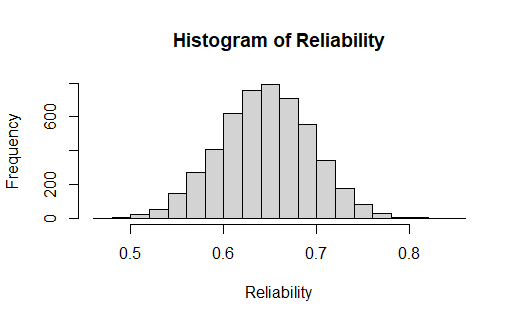}
\caption{Independent}
\label{fig:prob1_6_2.3}
\end{subfigure}
\begin{subfigure}{0.45\textwidth}
\centering
\includegraphics[width=\textwidth, height=0.25\textheight]{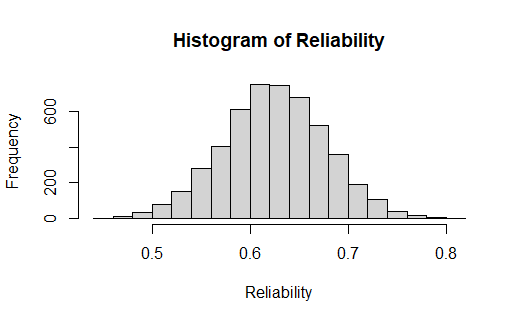}
\caption{Dependent}
\label{fig:prob1_6_1.3}
\end{subfigure}
\caption{Histograms of $\widehat{\overline{F}}_T(t|\boldsymbol{\theta})$ for Independent and Dependent Models when $p=0.3$}
\label{fig:prob1_6_3}
\end{figure}
\indent We further obtain the average estimate of the r.f. and the corresponding root mean square deviation (RMSD) over 5000 simulation runs. Table \ref{simulation} reports the average estimate of the r.f. and the corresponding RMSD along with their respective True Values (TVs). We also compute the estimate of the standardized variance $S^2$ in each case using the MLEs of the model parameters. Table \ref{simulation} also provides the average estimate of standardized variance $S^2$ and the corresponding RMSD. Furthermore, we check whether the lot is accepted or rejected in each case using the decision rule presented in Section \ref{dev} and obtain the estimates of the producer's risk $(\alpha)$ and the consumer's risk $(\beta)$, as given by (\ref{SP1}) and (\ref{SP2}), respectively. Table \ref{simulation} reports the estimated producer's risk ($\widehat{\alpha}$) and the consumer's risk ($\widehat{\beta}$). It is easy to note that the average estimates of the r.f. are close to their TVs in each case. Furthermore, the precision in reliability estimates is hardly affected as the withdrawal proportion increases. A similar observation can also be drawn for $S^2$. We also note that the estimates of the producer's and consumer's risk are reasonably close to their respective specified values.

\begin{table}[hbt!]
    \small  \centering
      \caption{Average Estimates and RMSDs of $\overline{F}_T(t|\boldsymbol{\theta})$ and $S^2$ along with Estimates of Producer's and Consumer's Risks for Independent (I) and Dependent (D) Competing Risks Models}
    \begin{tabular}{|c|c|cccccccc|}
    \hline
      &TV && & Avg. & $100 \times$ RMSD  & $10 \times$ Avg. & $100\times$ RMSD  & &\\
      Model  &of R.F.&$p_0$&$\boldsymbol{\zeta}^*$& R.F. & of R.F. & $S^2$&of $S^2$&$\widehat{\alpha}$&$\widehat{\beta}$\\
     \hline
      \multirow{3}{*}{I}   &  \multirow{3}{*}{0.644}  &0.0&(70, 5, 0.054)&0.645&4.886&1.696&1.122&$0.049$&0.128\\
         \cline{3-10}
      &  &0.2&(78, 5, 0.079)&0.645&4.926&1.889&1.156&$0.051$&0.127\\
      \cline{3-10}
      &&0.3& (81, 5, 0.091) &0.645&4.933&1.959&1.199&0.050&0.123\\
        \hline
         \multirow{3}{*}{D}&\multirow{3}{*}{0.626} &0.0&(69, 5, 0.096)&0.624&5.200&1.957&1.564&$0.050$&0.111\\
         \cline{3-10}
          &&0.2&(73, 5, 0.115)&0.625&5.250&2.081&1.555&$0.054$&0.120\\
           \cline{3-10}
          &&0.3&(75, 5, 0.122)&0.625&5.250&2.131&1.541&0.050&0.121\\
         \hline
    \end{tabular}
    \label{simulation}
 \end{table}

\section{Conclusion}\label{con}
\indent In this work, we consider the development of RASPs under PIC-I schemes in the presence of competing causes of failure. At the outset, we present a general framework for incorporating the presence of competing risks under PIC-I schemes and obtain the Fisher information matrix in this context. We then discuss the asymptotic properties of the MLEs of the model parameters under a set of regularity conditions. Note that such asymptotic properties are essential for developing RASPs. The general framework presented in this work can accommodate both dependent and independent competing risks models and can prove to be really useful in modeling the interval-censored data sets, which are common in several managerial applications.\\
\indent Subsequently, we present a frailty model as a special case of the general framework developed in this work. The frailty model incorporates the dependence among the potential failure times, which is largely neglected in designing the optimal censoring schemes. Moreover, this model allows us to obtain a popular independent competing risks model as a special case, thus enabling us to study the effect of incorporating the dependence structure among the potential failure times on the RASPs. Subsequently, we develop the RASPs in three important contexts for both the dependent and independent competing risks models. First, we present the traditional RASPs using the producer's and consumer's risks. These traditional RASPs require several specific information, including details such as the inspection times and the proportion of intermediate withdrawals, which the decision-makers must decide suitably while conducting the life-test experiments under PIC-I scenes. We thus consider a suitable $c$-optimal design criterion which allows us to determine these design variables. We develop the RASPs both with and without budgetary constraints. Through a detailed numerical study, we find out the level of dependence between the potential failure times has a significant impact on the resulting RASPs. We further demonstrate the application of the developed RASPs in a practical context and study the finite-sample properties of these RASPs through a detailed simulation study. We observe that these RASPs exhibit the desired asymptotic properties even for a modest sample size.\\
\indent This work can be extended in a number of important ways. First, we use a frailty model to incorporate the dependence among the potential failure times. However, the same can also be modeled using popular copula models \citep{LO2025108104}. It will be interesting to develop the RASPs for such models. Second, as discussed by \citet{roy_19}, important prior information is often available in practice, and in such cases, a Bayesian framework may be considered to develop Bayesian RASPs.

\bibliographystyle{apalike}
\bibliography{BP.bib} 
\appendix
\section{Proof of Lemma \ref{l1}:}\label{l1app}
\begin{proof}
By differentiating (\ref{log_like_eqn}) with respect to $\theta_u$, we have
\begin{align}\label{ul}
     \frac{\partial l(\boldsymbol{\theta}\ | \ \mathcal{D}_M,\boldsymbol{\boldsymbol{\zeta}})}{\partial\theta_u}=\sum_{i=1}^M\sum_{k=1}^{n_i}\left[\sum_{j=1}^J\left(\frac{\delta_{ikj}} {q_{ij}}\right)\left(\frac{\partial q_{ij}}{\partial\theta_u}\right)-\frac{(1-{\delta_{ik+}})}{(1-q_{i})}\left(\frac{\partial q_{i}}{\partial\theta_u}\right)\right]=\sum_{i=1}^M\sum_{k=1}^{n_i}U_{ik,u},
\end{align}
where $U_{ik,u}=\sum_{j=1}^J\left(\frac{\delta_{ikj}} {q_{ij}}\right)\left(\frac{\partial q_{ij}}{\partial\theta_u}\right)-\frac{(1-{\delta_{ik+}})}{(1-q_{i})}\left(\frac{\partial q_{i}}{\partial\theta_u}\right)$ and $\delta_{ik+}=\sum_{j=1}^J  \delta_{ikj}$. Then, we get 
\begin{align*}
    E\left[ \frac{\partial l(\boldsymbol{\theta}\ | \ \mathcal{D}_M,\boldsymbol{\boldsymbol{\zeta}})}{\partial\theta_u}\right]=\sum_{i=1}^M E_{N_i}\left[\sum_{k=1}^{N_i}E\left[U_{ik,u}\ | \ N_i\right]\right].
\end{align*}
It is sufficient to prove that $E\left[U_{ik,u}\ | \ N_i\right]=0$.
We know $E[\delta_{ikj}]=q_{ij}$ and $E[1-\delta_{ik+}]=(1-q_i)$. Thus, we have 
\begin{align*}
    E\left[U_{ik,u}\ | \ N_i\right]=\sum_{j=1}^J\frac{\partial q_{ij}}{\partial\theta_u}-\frac{\partial q_{i}}{\partial\theta_u}.
\end{align*}
We know $\sum_{j=1}^Jq_{ij}=q_i$. This implies that
\begin{align*}
    \sum_{j=1}^J\frac{\partial q_{ij}}{\partial\theta_u}=\frac{\partial q_{i}}{\partial\theta_u}.
\end{align*}
Hence, $E\left[U_{ik,u}\ | \ N_i\right]=0$.  
\end{proof}
\section{Proof of Lemma \ref{lemma_2}:}\label{l2app}
\begin{proof}
By differentiating (\ref{ul}) with respect to $\theta_v$, we have
\begin{align}\label{lemma_2_dl}
     \frac{\partial^2 l(\boldsymbol{\theta}\ | \ \mathcal{D}_M,\boldsymbol{\boldsymbol{\zeta}})}{\partial\theta_u\partial\theta_v}=&\sum_{i=1}^M \sum_{k=1}^{n_i}\left[\sum_{j=1}^J\left(\frac{\delta_{ikj}} {q_{ij}}\right)\left(\frac{\partial^2 q_{ij}}{\partial\theta_u\partial\theta_v}\right)-\frac{(1-{\delta_{ik+}})}{(1-q_{i})}\left(\frac{\partial^2 q_{i}}{\partial\theta_u\partial\theta_v}\right)\right]\nonumber\\
     &-\sum_{i=1}^M \sum_{k=1}^{n_i}\left[\sum_{j=1}^J\left(\frac{\delta_{ikj}} {q_{ij}^2}\right)\left(\frac{\partial q_{ij}}{\partial\theta_u}\right)\left(\frac{\partial q_{ij}}{\partial\theta_v}\right)+\frac{(1-{\delta_{ik+}})}{(1-q_{i})^2}\left(\frac{\partial q_{i}}{\partial\theta_u}\right)\left(\frac{\partial q_{i}}{\partial\theta_v}\right)\right]\\
    = &\sum_{i=1}^M \sum_{k=1}^{n_i} V_{ik,uv}\nonumber, ~\text{say},
\end{align}
where $V_{ik,uv}=\left[\sum\limits_{j=1}^J\left(\frac{\delta_{ikj}} {q_{ij}}\right)\left(\frac{\partial^2 q_{ij}}{\partial\theta_u\partial\theta_v}\right)-\frac{(1-{\delta_{ik+}})}{(1-q_{i})}\left(\frac{\partial^2 q_{i}}{\partial\theta_u\partial\theta_v}\right)\right]-\left[\sum\limits_{j=1}^J\left(\frac{\delta_{ikj}} {q_{ij}^2}\right)\left(\frac{\partial q_{ij}}{\partial\theta_u}\right)\left(\frac{\partial q_{ij}}{\partial\theta_v}\right)+\frac{(1-{\delta_{ik+}})}{(1-q_{i})^2}\left(\frac{\partial q_{i}}{\partial\theta_u}\right)\left(\frac{\partial q_{i}}{\partial\theta_v}\right)\right].$
Now, proceeding in the same manner as in Lemma 1, we have
\begin{align*}
 \sum_{i=1}^M E_{N_i}\left[\sum_{k=1}^{N_i}   E\left[\sum_{j=1}^J\left(\frac{\delta_{ikj}} {q_{ij}}\right)\left(\frac{\partial^2 q_{ij}}{\partial\theta_u\partial\theta_v}\right)-\frac{(1-{\delta_{ik+}})}{(1-q_{i})}\left(\frac{\partial^2 q_{i}}{\partial\theta_u\partial\theta_v}\right)\right]\right]=0.
\end{align*}
Therefore, taking the expectation of (\ref{lemma_2_dl}), we have
\begin{align*}
    E\left[ \frac{\partial^2 l(\boldsymbol{\theta}\ | \ \mathcal{D}_M,\boldsymbol{\boldsymbol{\zeta}})}{\partial\theta_u\partial\theta_v}\right]=-\sum_{i=1}^M E_{N_{i}}\Bigg[\sum_{k=1}^{N_i}\Bigg[\sum_{j=1}^J\left(\frac{E[\delta_{ikj}]} {q_{ij}^2}\right)\left(\frac{\partial q_{ij}}{\partial\theta_u}\right)\left(\frac{\partial q_{ij}}{\partial\theta_v}\right)\nonumber\\+\left(\frac{E[1-\delta_{ik+}]}{(1-q_{i})^2}\right)\left(\frac{\partial q_{i}}{\partial\theta_u}\right)\left(\frac{\partial q_{i}}{\partial\theta_v}\right)\Bigg]\Bigg].
\end{align*}
Now, using $E[\delta_{ikj}]=q_{ij}$ and $E[1-\delta_{ik+}]=1-q_i$, we get
\begin{align}\label{fish}
    E\left[- \frac{\partial^2 l(\boldsymbol{\theta}\ | \ \mathcal{D}_M,\boldsymbol{\boldsymbol{\zeta}})}{\partial\theta_u\partial\theta_v}\right]=\sum_{i=1}^M \left[\sum_{j=1}^J\left(\frac{E[N_i]} {q_{ij}}\right)\left(\frac{\partial q_{ij}}{\partial\theta_u}\right)\left(\frac{\partial q_{ij}}{\partial\theta_v}\right)+\left(\frac{E[N_i]}{(1-q_{i})}\right)\left(\frac{\partial q_{i}}{\partial\theta_u}\right)\left(\frac{\partial q_{i}}{\partial\theta_v}\right)\right].
\end{align}
\indent Note that
\begin{align*}
    \left(\frac{\partial l(\boldsymbol{\theta}\ | \ \mathcal{D}_M,\boldsymbol{\boldsymbol{\zeta}})}{\partial\theta_u}\right)\left(\frac{\partial l(\boldsymbol{\theta}\ | \ \mathcal{D}_M,\boldsymbol{\boldsymbol{\zeta}})}{\partial\theta_v}\right)
=\sum_{i=1}^M\sum_{k=1}^{n_i}\sum_{i'=1}^M\sum_{k'=1}^{n_{i'}}U_{ik,u}U_{i'k',v},
\end{align*}
where $U_{ik,u}$ and $U_{i'k',v}$ are as defined above. Now, suppose $i<i'$. Then, for given $n_{i'}$, $n_i$ is fixed and hence $E[U_{ik,u}U_{i'k',v}\ | \ n_{i'}]=U_{ik,u}E[U_{i'k',v}\ | \ n_{i'}]=0.$ Similarly, when $i>i'$, $E[U_{ik,u}U_{i'k',v}\ | \ n_{i}]=U_{i'k',v}E[U_{ik,u}\ | \ n_{i}]=0.$
Also, when $i'=i$ and $k'\neq k$, we have
\begin{align*}
   & E[U_{ik,u}U_{ik',v}\ | \ n_{i}]\\=&E\left[\left\{\sum_{j=1}^J\left(\frac{\delta_{ikj}} {q_{ij}}\right)\left(\frac{\partial q_{ij}}{\partial\theta_u}\right)-\frac{(1-{\delta_{ik+}})}{(1-q_{i})}\left(\frac{\partial q_{i}}{\partial\theta_u}\right)\right\}\left\{\sum_{j'=1}^J\left(\frac{\delta_{ik'j'}} {q_{ij'}}\right)\left(\frac{\partial q_{ij}}{\partial\theta_v}\right)-\frac{(1-{\delta_{ik'+}})}{(1-q_{i})}\left(\frac{\partial q_{i}}{\partial\theta_v}\right)\right\}\right]\\
   =&E\left[\sum_{j=1}^J\sum_{j'=1}^J\left(\frac{\delta_{ikj}} {q_{ij}}\right)\left(\frac{\delta_{ik'j'}} {q_{ij'}}\right)\left(\frac{\partial q_{ij}}{\partial\theta_u}\right)\left(\frac{\partial q_{ij}}{\partial\theta_v}\right)-\sum_{j'=1}^J\left(\frac{\delta_{ik'j'}} {q_{ij'}}\right)\left(\frac{\partial q_{ij}}{\partial\theta_v}\right)\frac{(1-{\delta_{ik+}})}{(1-q_{i})}\left(\frac{\partial q_{i}}{\partial\theta_u}\right)\right.\\
   &\left.~~~~~~-\sum_{j=1}^J\left(\frac{\delta_{ikj}} {q_{ij}}\right)\left(\frac{\partial q_{ij}}{\partial\theta_u}\right)\frac{(1-{\delta_{ik'+}})}{(1-q_{i})}\left(\frac{\partial q_{i}}{\partial\theta_v}\right)+\frac{(1-{\delta_{ik+}})}{(1-q_{i})}\frac{(1-{\delta_{ik'+}})}{(1-q_{i})}\left(\frac{\partial q_{i}}{\partial\theta_u}\right)\left(\frac{\partial q_{i}}{\partial\theta_v}\right)\right]=0.
\end{align*}
Since $T_1,\ldots,T_n$ are i.i.d., we have, for any fixed $i$, 
$$E\left[\sum_{j=1}^J\delta_{ikj}\sum_{j'=1}^J\delta_{ik'j'}\right]=E\left[\sum_{j=1}^J\delta_{ikj}\right]E\left[\sum_{j'=1}^J\delta_{ik'j'}\right],$$ $$E\left[\sum_{j=1}^J\delta_{ikj}(1-\delta_{ik'+})\right]=E\left[\sum_{j=1}^J\delta_{ikj}\right]E\left[1-\delta_{ik'+})\right],$$
$$E\left[\sum_{j'=1}^J\delta_{ik'j'}(1-\delta_{ik+})\right]=E\left[\sum_{j'=1}^J\delta_{ik'j'}\right]E\left[1-\delta_{ik+})\right],$$
and
$$E\left[(1-\delta_{ik+})(1-\delta_{ik'+})\right]=E\left[(1-\delta_{ik+})\right]E\left[1-\delta_{ik'+})\right].$$
Now, it is easy to see that
$ E[U_{ik,u}U_{ik',v}\ | \ n_{i}]=0.$
Lastly, when $i'=i$ and $k'=k$, we have
\begin{align}\label{lemma_2_2}
   & E[U_{ik,u}U_{ik,v}\ | \ n_{i}] \nonumber\\
   =&E\left[\sum_{j=1}^J\sum_{j'=1}^J\left(\frac{\delta_{ikj}} {q_{ij}}\right)\left(\frac{\delta_{ikj'}} {q_{ij'}}\right)\left(\frac{\partial q_{ij}}{\partial\theta_u}\right)\left(\frac{\partial q_{ij}}{\partial\theta_v}\right)-\sum_{j'=1}^J\left(\frac{\delta_{ikj'}} {q_{ij'}}\right)\left(\frac{\partial q_{ij'}}{\partial\theta_v}\right)\frac{(1-{\delta_{ik+}})}{(1-q_{i})}\left(\frac{\partial q_{i}}{\partial\theta_u}\right)\right.\nonumber\\
   &\left.~~~~~~-\sum_{j=1}^J\left(\frac{\delta_{ikj}} {q_{ij}}\right)\left(\frac{\partial q_{ij}}{\partial\theta_u}\right)\frac{(1-{\delta_{ik+}})}{(1-q_{i})}\left(\frac{\partial q_{i}}{\partial\theta_v}\right)+\frac{(1-{\delta_{ik+}})^2}{(1-q_{i})^2}\left(\frac{\partial q_{i}}{\partial\theta_u}\right)\left(\frac{\partial q_{i}}{\partial\theta_v}\right)\right].
\end{align}
Note that $P(\delta_{ikj}\delta_{ikj'}=1)=0$ for $j\neq j'$, and $P(\delta_{ikl}(1-\delta_{ik+})=1)=0$ for $l=1,\ldots,J$.
Therefore, for $i=1,\ldots,M$, $k=1,\ldots,n_i$ and $j,j'=1,\ldots,J$, we get
\begin{align*}
    E[\delta_{ikj}\delta_{ikj'}\ | \ n_i]=\begin{cases}
        E[\delta_{ikj}^2\ | \ n_i]= E[\delta_{ikj}\ | \ n_i]=q_{ij} &j=j'\\
        0&j\neq j'
    \end{cases},
\end{align*}
\begin{align*}
    E[\delta_{ikl}(1-\delta_{ik+})\ | \ n_i]=0, &~~~~~\text{for }l=j,j',
\end{align*}
and
\begin{align*}
       E[(1-\delta_{ik+})^2\ | \ n_i]= E[(1-\delta_{ikj})\ | \ n_i]=1-q_{i}.
\end{align*}
Then, (\ref{lemma_2_2}) can be written as
\begin{align}\label{var}
    E[U_{ik,u}U_{ik,v}\ | \ n_{i}]=\sum_{j=1}^J\frac{1}{q_{ij}}\left(\frac{\partial q_{ij}}{\partial\theta_u}\right)\left(\frac{\partial q_{ij}}{\partial\theta_v}\right)+\frac{1}{(1-q_i)}\left(\frac{\partial q_{i}}{\partial\theta_u}\right)\left(\frac{\partial q_{i}}{\partial\theta_v}\right)=\sigma^2 _{i,uv}\,\,\text{(say).}
\end{align}

Thus, we have
\begin{align}\label{Inform_uv}
    &E\left[\left(\frac{\partial l(\boldsymbol{\theta}\ | \ \mathcal{D}_M,\boldsymbol{\boldsymbol{\zeta}})}{\partial \theta_u}\right)\left(\frac{\partial l(\boldsymbol{\theta}\ | \ \mathcal{D}_M,\boldsymbol{\boldsymbol{\zeta}})}{\partial \theta_v}\right)\right]\nonumber\\
    &=\sum_{i=1}^M E_{N_i}\left[U_{ik,u}U_{ik,v}\ | \ N_i\right]\nonumber\\
    &=\sum_{i=1}^M \left[\sum_{j=1}^J\frac{E[N_i]}{q_{ij}}\left(\frac{\partial q_{ij}}{\partial\theta_u}\right)\left(\frac{\partial q_{ij}}{\partial\theta_v}\right)+\frac{E[N_i]}{(1-q_i)}\left(\frac{\partial q_{i}}{\partial\theta_u}\right)\left(\frac{\partial q_{i}}{\partial\theta_v}\right)\right].
\end{align}
\end{proof}\\
\section{Proof of Theorem \ref{th2}}\label{th2app}

\indent We first present the following results, which are necessary to prove Theorem 2. For convenience, we use $F_T(t)$, $\overline{F}_T(t)$, $G(j,t)$, and $l(\boldsymbol{\theta})$ instead of $F_T(t\ | \ \boldsymbol{\theta})$, $\overline{F}_T(t|\boldsymbol{\theta})$, $G(j,t | \boldsymbol{\theta})$ and  $l(\boldsymbol{\theta}\ | \ \mathcal{D}_m,\boldsymbol{\boldsymbol{\zeta}})$, respectively.

\begin{lemma}\label{l3}
Under the regularity conditions (I)-(VII), the first-, second-, and third-order derivatives of $q_{ij}$ and $q_{i}$ are bounded.
\end{lemma}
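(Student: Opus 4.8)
The plan is to regard $q_{ij}$ and $q_i$ as smooth functions of $\boldsymbol{\theta}$ built from $G(j,\cdot\,|\,\boldsymbol{\theta})$ and $F_T(\cdot\,|\,\boldsymbol{\theta})$ by differencing at the fixed inspection times and dividing by $\overline{F}_T(L_{i-1}\,|\,\boldsymbol{\theta})$, and then to bound the resulting derivatives via the quotient rule. Write
$$q_{ij}=\frac{A_{ij}(\boldsymbol{\theta})}{B_i(\boldsymbol{\theta})},\qquad q_i=\frac{\widetilde{A}_i(\boldsymbol{\theta})}{B_i(\boldsymbol{\theta})},$$
where $A_{ij}(\boldsymbol{\theta})=G(j,L_i\,|\,\boldsymbol{\theta})-G(j,L_{i-1}\,|\,\boldsymbol{\theta})$, $\widetilde{A}_i(\boldsymbol{\theta})=F_T(L_i\,|\,\boldsymbol{\theta})-F_T(L_{i-1}\,|\,\boldsymbol{\theta})$, and $B_i(\boldsymbol{\theta})=\overline{F}_T(L_{i-1}\,|\,\boldsymbol{\theta})=1-F_T(L_{i-1}\,|\,\boldsymbol{\theta})$, with $L_0=0$.

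First I would invoke regularity condition (VI): for each fixed $t$ the functions $G(j,t\,|\,\boldsymbol{\theta})$ and $F_T(t\,|\,\boldsymbol{\theta})$ (hence $\overline{F}_T(t\,|\,\boldsymbol{\theta})$) admit all partial derivatives in $\boldsymbol{\theta}$ up to order three, and these are bounded on $\boldsymbol{\Theta}_0$. Specializing to the finitely many points $t=L_1,\dots,L_M$ shows that $A_{ij}$, $\widetilde{A}_i$, $B_i$ and all their partial derivatives up to order three are bounded on $\boldsymbol{\Theta}_0$, uniformly over $i$ and $j$ since $M$ and $J$ are finite. The values $q_{ij}$ and $q_i$ are themselves bounded by regularity condition VII(a).

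Next I would show that the common denominator is bounded away from zero on $\boldsymbol{\Theta}_0$. Telescoping the identity $\overline{F}_T(L_l\,|\,\boldsymbol{\theta})/\overline{F}_T(L_{l-1}\,|\,\boldsymbol{\theta})=1-q_l$ obtained from (\ref{qi}) gives $B_i(\boldsymbol{\theta})=\overline{F}_T(L_{i-1}\,|\,\boldsymbol{\theta})=\prod_{l=1}^{i-1}(1-q_l)$; by regularity condition VII(a) each factor $1-q_l$ is strictly positive, and since there are at most $M-1<\infty$ of them there is a constant $\epsilon_0>0$ with $B_i(\boldsymbol{\theta})\ge\epsilon_0$ for all $i$ and all $\boldsymbol{\theta}\in\boldsymbol{\Theta}_0$ (if necessary one replaces $\boldsymbol{\Theta}_0$ by a bounded neighbourhood of $\boldsymbol{\theta}^0$ so that the positivity in VII(a) holds uniformly, which does not affect the asymptotic arguments that later use this lemma).

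Finally I would apply the quotient rule. Writing $q_{ij}=A_{ij}\,B_i^{-1}$ and expanding by the Leibniz and chain rules for the derivatives of $x\mapsto x^{-1}$ composed with $B_i$, every partial derivative of order $r\le 3$ of $q_{ij}$ (and likewise of $q_i$) is a finite sum of terms, each a product of partial derivatives of $A_{ij}$ and of $B_i$ of total order at most $r$ divided by $B_i^{k}$ with $k\le r+1$. By the first step all numerators are bounded, and by the second step $|B_i|\ge\epsilon_0$, so each such term, and hence each partial derivative of $q_{ij}$ and of $q_i$ up to order three, is bounded on $\boldsymbol{\Theta}_0$. The only genuine obstacle is the uniform lower bound on $B_i(\boldsymbol{\theta})$; once that is secured, the remainder is routine differentiation bookkeeping.
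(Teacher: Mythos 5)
Your proof is correct and follows essentially the same route as the paper's: both express $q_{ij}$ as $\left[G(j,L_i\,|\,\boldsymbol{\theta})-G(j,L_{i-1}\,|\,\boldsymbol{\theta})\right]/\overline{F}_T(L_{i-1}\,|\,\boldsymbol{\theta})$, invoke regularity condition (VI) to bound all parameter-derivatives of $G$ and $F_T$ up to order three at the finitely many inspection times, bound the reciprocal of $\overline{F}_T(L_{i-1}\,|\,\boldsymbol{\theta})$, and then push these bounds through the differentiation (the paper by writing out the first-, second-, and third-order derivatives explicitly with named constants $C_u$, $C_{uv}$, $C_{uvw}$; you by quotient/Leibniz-rule bookkeeping), finishing with $q_i=\sum_{j}q_{ij}$. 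If anything, you are slightly more careful than the paper on the one non-routine point — the uniform lower bound on the denominator — since the paper simply defines $\kappa=\max_i\sup_{\boldsymbol{\theta}\in\boldsymbol{\Theta}_0}1/\overline{F}_T(L_i)$ and treats it as finite, whereas you flag that VII(a) gives only pointwise positivity and propose restricting to a suitable neighbourhood of $\boldsymbol{\theta}^0$.
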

\begin{proof}
\indent We have
  \begin{align}
        \frac{\partial q_{ij}}{\partial\theta_u}=\frac{1}{\overline{F}_T(L_{i-1})}\left[\frac{\partial\left[{G}(j,L_{i})-{G}(j,L_{i-1})\right]}{\partial\theta_u}+q_{ij}\left(\frac{\partial{F}_T(L_{i-1})}{\partial\theta_u}\right)\right], 
    \end{align}

    for $u=1,2\ldots,s$. Now, let us define the following bounds:
    \begin{align}\label{F}
       \kappa=\max_{1\leq i\leq M}\sup_{\boldsymbol{\theta}\in\boldsymbol{\Theta}_0}\frac{1}{\overline{F}_T(L_{i})}, 
    \end{align}
    \begin{align}\label{df}
        A_u=\max_{1\leq i\leq M}\sup_{\boldsymbol{\theta}\in\boldsymbol{\Theta}_0}\left|\frac{\partial F_T{(L_i)}}{\partial\theta_u}\right|
    \end{align}
    and
    \begin{align}\label{dg}
        B_{u}=\max_{1\leq i\leq M}\max_{1\leq j\leq J}\sup_{\boldsymbol{\theta}\in\boldsymbol{\Theta}_0}\left|\frac{\partial G{(j,L_i)}}{\partial\theta_u}\right|,
    \end{align}
    for $u=1,2,\ldots,s$. Then, by using (\ref{F}), (\ref{df}) and (\ref{dg}), we have
    \begin{align*}
        \left|\frac{\partial q_{ij}}{\partial\theta_u}\right|\leq\kappa[2B_{u}+A_u]=C_{u}, ~~\text{say. }
    \end{align*}
Thus, the first-order derivative of $q_{ij}$ is bounded. Now, differentiating with respect to $\theta_v$, for $v=1,2,\ldots,s$, we get
   \begin{align}\label{d2qij}
\frac{\partial^2 q_{ij}}{\partial\theta_u\partial\theta_v}&=\frac{1}{[\overline{F}_T(L_{i-1})]^2}\left[\frac{\partial \left[G(j,L_i)-G(j,L_{i-1})\right]}{\partial\theta_u}+q_{ij}\left(\frac{\partial F_T(L_i)}{\partial\theta_u}\right)\right]\left(\frac{\partial F_T(L_{i-1})}{\partial\theta_v}\right)\nonumber\\
&+\frac{1}{\overline{F}_T(L_{i-1})}\left[\frac{\partial^2 \left[G(j,L_i)-G(j,L_{i-1})\right]}{\partial\theta_u\partial\theta_v}+\left(\frac{\partial q_{ij}}{\partial\theta_v}\right)\left(\frac{\partial F_T(L_{i-1})}{\partial\theta_u}\right)+q_{ij}\left(\frac{\partial^2 F_T(L_{i-1})}{\partial\theta_u\partial\theta_v}\right)\right]
    \end{align}
    Under the regularity conditions, the second-order derivatives of $G(j, L_i)$ and $F_T(L_i)$ are assumed to be bounded. Therefore, we define
  \begin{align}\label{d2f}
        A_{uv}=\max_{1\leq i\leq M}\sup_{\boldsymbol{\theta}\in\boldsymbol{\Theta}_0}\left|\frac{\partial^2 F_T{(L_i)}}{\partial\theta_u\partial\theta_v}\right|
    \end{align}
    and
    \begin{align}\label{d2g}
        B_{uv}=\max_{1\leq i\leq M}\max_{1\leq j\leq J}\sup_{\boldsymbol{\theta}\in\boldsymbol{\Theta}_0}\left|\frac{\partial^2 G{(j,L_i)}}{\partial\theta_u\partial\theta_v}\right|,
    \end{align}
        for $u=1,2,\ldots,s$ and $v=1,2,\ldots,s$. Then, by using (\ref{F}), (\ref{df}),(\ref{dg}), (\ref{d2f}) and (\ref{d2g}) we have
          \begin{align*}
        \left|\frac{\partial^2 q_{ij}}{\partial\theta_u\partial\theta_v}\right|\leq\kappa^2[2B_{u}+A_u]A_v+\kappa[2B_{uv}+C_{u}A_u+A_{uv}]=C_{uv}, ~~\text{say.}
    \end{align*}
    Differentiating Equation (\ref{d2qij}) with respect to $\theta_w$, for $w=1,2,\ldots,s$, we get
 \small    \allowdisplaybreaks   \begin{align*}       
&\frac{\partial ^3q_{ij}}{\partial\theta_u\partial\theta_v\partial\theta_w}\\=&\frac{2}{[\overline{F}_T(L_{i-1})]^3}\left[\frac{\partial \left[G(j,L_i)-G(j,L_{i-1})\right]}{\partial\theta_u}+q_{ij}\left(\frac{\partial F_T(L_i)}{\partial\theta_u}\right)\right]\left(\frac{\partial F_T (L_{i-1})}{\partial\theta_v}\right)\left(\frac{\partial F_T(L_{i-1})}{\partial\theta_w}\right)\\
&+\frac{1}{[\overline{F}_T(L_{i-1})]^2}\left[\frac{\partial \left[G(j,L_i)-G(j,L_{i-1})\right]}{\partial\theta_u}+q_{ij}\left(\frac{\partial F_T(L_i)}{\partial\theta_u}\right)\right]\left(\frac{\partial^2 F_T(L_{i-1})}{\partial\theta_v\partial\theta_w}\right)\\
&+\frac{1}{[\overline{F}_T(L_{i-1})]^2}\left[\frac{\partial^2 \left[G(j,L_i)-G(j,L_{i-1})\right]}{\partial\theta_u\partial\theta_w}+q_{ij}\left(\frac{\partial^2  F_T(L_i)}{\partial\theta_u\partial\theta_w}\right)+\left(\frac{\partial q_{ij}}{\partial\theta_w}\right)\left(\frac{\partial  F_T(L_i)}{\partial\theta_u}\right)\right]\left(\frac{\partial F_T(L_{i-1})}{\partial\theta_v}\right)\\
&+\frac{1}{[\overline{F}_T(L_{i-1})]^2}\left[\frac{\partial^2 \left[G(j,L_i)-G(j,L_{i-1})\right]}{\partial\theta_u\partial\theta_v}+\left(\frac{\partial q_{ij}}{\partial\theta_v}\right)\left(\frac{\partial F_T(L_{i-1})}{\partial\theta_u}\right)+q_{ij}\left(\frac{\partial^2 F_T(L_{i-1})}{\partial\theta_u\partial\theta_v}\right)\right]\left(\frac{\partial F_T(L_{i-1})}{\partial\theta_w}\right)\\
&+\frac{1}{\overline{F}_T(L_{i-1})}\left[\frac{\partial^3 \left[G(j,L_i)-G(j,L_{i-1})\right]}{\partial\theta_u\partial\theta_v\partial\theta_w}+\left(\frac{\partial^2 q_{ij}}{\partial\theta_v\partial\theta_w}\right)\left(\frac{\partial F_T(L_{i-1})}{\partial\theta_u}\right)+\left(\frac{\partial q_{ij}}{\partial\theta_v}\right)\left(\frac{\partial^2 F_T(L_{i-1})}{\partial\theta_u\partial_w}\right)\right.\\
&~~~~~~~~~~~~~~~~~~\left.+q_{ij}\left(\frac{\partial^3 F_T(L_{i-1})}{\partial\theta_u\partial\theta_v\partial\theta_w}\right)+\left(\frac{\partial q_{ij}}{\partial\theta_w}\right)\left(\frac{\partial^2 F_T(L_{i-1})}{\partial\theta_u\partial\theta_v}\right)\right].
    \end{align*}
\normalsize      Under the regularity conditions, it is assumed that the third-order derivatives of $G(j, L_i)$ and $F_T(L_i)$ are bounded. Therefore, we define
  \begin{align}\label{d3f}
        A_{uvw}=\max_{1\leq i\leq M}\sup_{\boldsymbol{\theta}\in\boldsymbol{\Theta}_0}\left|\frac{\partial^3F_T{(L_i)}}{\partial\theta_u\partial\theta_v\partial\theta_w}\right|
    \end{align}
    and
    \begin{align}\label{d3g}
    B_{uvw}=\max_{1\leq i\leq M} \max_{1\leq j\leq J}\sup_{\boldsymbol{\theta}\in\boldsymbol{\Theta}_0}\left|\frac{\partial^3 G{(j,L_i)}}{\partial\theta_u\partial\theta_v\partial\theta_w}\right|,
    \end{align}
        for $u=1,2,\ldots,s$, $v=1,2,\ldots,s$ and $w=1,2,\ldots,s$. Then, by using (\ref{F}-\ref{d3g}), we have
          \begin{align*}
        \left|\frac{\partial^3 q_{ij}}{\partial\theta_u\partial\theta_v\partial\theta_w}\right|\leq&2\kappa^3[2B_{u}+A_u]A_vA_w+\kappa^2[2B_{u}+C_{u}A_u]A_{vw}+\boldsymbol{\zeta}^2[2B_{uw}+A_{uw}+C_{w}A_u]A_v\\
&+\kappa^2[2B_{uv}+C_{v}A_u+A_{uv}]A_w+\kappa[3B_{uvw}+C_{vw}A_u+C_{v}A_{uw}+A_{uvw}+C_{w}A_{uv}]\\
&=C_{uvw}, ~~\text{say}
    \end{align*}
    This shows that the third-order derivative of $q_{ij}$ is bounded. We know that $q_{i}=\sum_{j=1}^Jq_{ij}$. Therefore, we get
    $ \left|\frac{\partial q_{i}}{\partial\theta_u}\right|\leq JC_{u}$,
     $\left|\frac{\partial^2 q_{i}}{\partial\theta_u\partial\theta_v}\right|\leq JC_{uv}$
and
     $\left|\frac{\partial^3 q_{i}}{\partial\theta_u\partial\theta_v\partial\theta_w}\right|\leq JC_{uvw}$.
 This shows that the first, second, and third-order derivatives of $q_{i}$ are also bounded. 
\end{proof}


\begin{lemma}\label{l4}
    Under the regularity conditions (I)-(VII), the ratio $\frac{N_i}{n}$ converges in probability to a finite number $b_i=\prod_{l=1}^{j-1}(1-p_l)(1-q_l)$, as $n\rightarrow \infty$, for $i=1,\ldots, M$.
\end{lemma}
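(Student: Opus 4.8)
The plan is to proceed by induction on $i$, exploiting the recursion $N_{i+1}=N_i-D_{i+}-R_i$ (with $D_{i+}=\sum_{j=1}^J D_{ij}$) together with the fact that, conditionally on $(\boldsymbol{D}_{i-1},\boldsymbol{R}_{i-1})$ — equivalently, conditionally on $N_i$ — one has $D_{i+}\sim\mathrm{Binomial}(N_i,q_i)$, where the success probability $q_i$ is a deterministic function of $\boldsymbol{\theta}$ and $\boldsymbol{L}_M$. For the base case $i=1$ we have $N_1=n$, so $N_1/n=1=b_1$ (an empty product) and convergence is trivial. Throughout, regularity condition VII(a) guarantees $q_l\in(0,1)$, so each $b_i$ lies in $(0,1]$ and is in particular finite, and $b_{i+1}=b_i(1-q_i)(1-p_i)$.

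For the inductive step, assume $N_i/n\overset{P}{\longrightarrow}b_i$ and decompose
\[
\frac{D_{i+}}{n}=q_i\cdot\frac{N_i}{n}+\frac{D_{i+}-N_iq_i}{n}.
\]
The first term converges in probability to $q_ib_i$ by the induction hypothesis and the continuous mapping theorem. For the remainder $W_n=(D_{i+}-N_iq_i)/n$, the tower property gives $E[W_n]=E\big[E[D_{i+}-N_iq_i\mid N_i]\big]=0$, while the law of total variance together with the deterministic bound $0\le N_i\le n$ yields
\[
\mathrm{Var}(W_n)=\frac{1}{n^2}\,E\big[\mathrm{Var}(D_{i+}\mid N_i)\big]=\frac{q_i(1-q_i)}{n^2}\,E[N_i]\le\frac{q_i(1-q_i)}{n}\longrightarrow 0 .
\]
Hence $W_n\overset{P}{\longrightarrow}0$ by Chebyshev's inequality, so $D_{i+}/n\overset{P}{\longrightarrow}q_ib_i$.

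Next, since $R_i=\lfloor p_i(N_i-D_{i+})\rfloor$ differs from $p_i(N_i-D_{i+})$ by at most $1$, we have $R_i/n=p_i\big(N_i/n-D_{i+}/n\big)+O(n^{-1})$, which converges in probability to $p_ib_i(1-q_i)$. Combining the three limits by Slutsky's theorem,
\[
\frac{N_{i+1}}{n}=\frac{N_i}{n}-\frac{D_{i+}}{n}-\frac{R_i}{n}\overset{P}{\longrightarrow}b_i-q_ib_i-p_ib_i(1-q_i)=b_i(1-q_i)(1-p_i)=b_{i+1},
\]
which completes the induction.

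The main obstacle is that $D_{i+}$ is a sum of a \emph{random} number $N_i$ of Bernoulli$(q_i)$ summands, so the ordinary weak law of large numbers does not apply directly; the resolving device is to condition on $N_i$ and invoke the law of total variance, using the uniform deterministic bound $N_i\le n$ to control the conditional variance. The floor in the definition of $R_i$ is only an $O(1/n)$ perturbation, and continuous-mapping and Slutsky arguments splice the pieces together.
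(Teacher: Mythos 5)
Your proof is correct. Note that the paper itself supplies no argument for this lemma---it simply defers to the cited reference on cost-based PIC-I designs---so there is no internal proof to compare against; your induction fills that gap in a self-contained way. The structure is sound: the base case $N_1/n=1=b_1$ is immediate, the conditional binomial law $D_{i+}\mid N_i\sim\mathrm{Binomial}(N_i,q_i)$ follows from the multinomial specification in Section 2.1 (the conditional distribution given the full history $(\boldsymbol{D}_{i-1},\boldsymbol{R}_{i-1})$ depends only on $N_i$, so your tower-property and total-variance computations go through), the bound $\mathrm{Var}(W_n)\le q_i(1-q_i)/n$ correctly exploits the deterministic inequality $N_i\le n$, and the floor in $R_i=\lfloor p_i(N_i-D_{i+})\rfloor$ is indeed only an $O(1/n)$ perturbation. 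The recursion $b_{i+1}=b_i(1-q_i)(1-p_i)$ then closes the induction, consistent with the paper's expression $E[N_i]=n\prod_{l=0}^{i-1}(1-q_l)(1-p_l)$ with $p_0=q_0=0$. One cosmetic remark: the lemma as printed has an index typo, writing $\prod_{l=1}^{j-1}$ where $\prod_{l=1}^{i-1}$ is clearly intended (as the $E[N_i]$ formula and the use of $b_i$ in the proof of Theorem 2 confirm); your reading and your recursion adopt the correct convention.
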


A proof of the above lemma follows from \citep{sonal_17}.

\begin{lemma}\label{l_new}
    If the regularity conditions  (I)-(VII) hold, then
    \begin{enumerate}[(i)]
\item $I(\boldsymbol{\theta}\ | \ \boldsymbol{\boldsymbol{\zeta}})$ is finite for all $\boldsymbol{\theta}\in \boldsymbol{\Theta}_0$.
\item $I(\boldsymbol{\theta}\ | \ \boldsymbol{\boldsymbol{\zeta}})$ is positive definite for all $\boldsymbol{\theta}\in \boldsymbol{\Theta}_0$.
    \end{enumerate}
\end{lemma}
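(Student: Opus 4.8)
The plan is to work directly from the closed-form expression for the Fisher information matrix established in Theorem \ref{theorem1},
\begin{align*}
I(\boldsymbol{\theta}\ | \ \boldsymbol{\zeta})=\sum_{i=1}^M\left[\sum_{j=1}^J\frac{E[N_i]}{q_{ij}}\nabla_{\boldsymbol{\theta}}(q_{ij})\nabla_{\boldsymbol{\theta}}(q_{ij})^T+\frac{E[N_i]}{1-q_i}\nabla_{\boldsymbol{\theta}}(q_i)\nabla_{\boldsymbol{\theta}}(q_i)^T\right],
\end{align*}
and to verify the two assertions term by term, using Lemma \ref{l3} and the Regularity Conditions.

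For part (i), I would first observe that the outer sum has only $M$ terms and, by Regularity Condition II, the inner sum has only finitely many terms, so it suffices to bound each summand. The scalar weights are finite since $E[N_i]=n\prod_{l=0}^{i-1}(1-q_l)(1-p_l)\le n$. Regularity Condition VII(a) gives $0<q_{ij}<1$ and $0<q_i<1$, so $1/q_{ij}$ and $1/(1-q_i)$ are finite; and by Lemma \ref{l3} the entries of $\nabla_{\boldsymbol{\theta}}(q_{ij})$ and $\nabla_{\boldsymbol{\theta}}(q_i)$ are bounded on $\boldsymbol{\Theta}_0$, so each rank-one matrix $\nabla_{\boldsymbol{\theta}}(q_{ij})\nabla_{\boldsymbol{\theta}}(q_{ij})^T$ and $\nabla_{\boldsymbol{\theta}}(q_i)\nabla_{\boldsymbol{\theta}}(q_i)^T$ has bounded entries. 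A finite sum of finite multiples of matrices with bounded entries is finite, which gives (i).

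For part (ii), the key point is that each $\nabla_{\boldsymbol{\theta}}(q_{ij})\nabla_{\boldsymbol{\theta}}(q_{ij})^T$ and $\nabla_{\boldsymbol{\theta}}(q_i)\nabla_{\boldsymbol{\theta}}(q_i)^T$ is positive semi-definite and the weights $E[N_i]/q_{ij}$ and $E[N_i]/(1-q_i)$ are strictly positive (here I use $q_0=p_0=0$, Regularity Condition VII(a), and $0\le p_l<1$ to conclude $E[N_i]>0$). Hence $I(\boldsymbol{\theta}\ | \ \boldsymbol{\zeta})$ is positive semi-definite, and for any $\boldsymbol{a}\in\mathbb{R}^s$,
\begin{align*}
\boldsymbol{a}^T I(\boldsymbol{\theta}\ | \ \boldsymbol{\zeta})\boldsymbol{a}=\sum_{i=1}^M\left[\sum_{j=1}^J\frac{E[N_i]}{q_{ij}}\left(\boldsymbol{a}^T\nabla_{\boldsymbol{\theta}}(q_{ij})\right)^2+\frac{E[N_i]}{1-q_i}\left(\boldsymbol{a}^T\nabla_{\boldsymbol{\theta}}(q_i)\right)^2\right]\ge 0,
\end{align*}
with equality forcing $\boldsymbol{a}^T\nabla_{\boldsymbol{\theta}}(q_i)=0$ for every $i=1,\ldots,M$, i.e. $(\nabla_{\boldsymbol{\theta}}\boldsymbol{q})^T\boldsymbol{a}=\boldsymbol{0}$. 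Since Regularity Condition VII(b) states that $\nabla_{\boldsymbol{\theta}}\boldsymbol{q}$ has rank $s$, the $M\times s$ matrix $(\nabla_{\boldsymbol{\theta}}\boldsymbol{q})^T$ has trivial null space, so $\boldsymbol{a}=\boldsymbol{0}$; thus $\boldsymbol{a}^T I(\boldsymbol{\theta}\ | \ \boldsymbol{\zeta})\boldsymbol{a}>0$ for all $\boldsymbol{a}\ne\boldsymbol{0}$, proving (ii).

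The main obstacle is bookkeeping rather than genuine difficulty: one must invoke exactly the right ingredient at each step — Condition II for finiteness of the sums, VII(a) for the finiteness of the reciprocals and the strict positivity of $E[N_i]$, Lemma \ref{l3} for boundedness of the gradients, and VII(b) for the rank argument that upgrades positive semi-definiteness to positive definiteness. The only mild subtlety is establishing $E[N_i]>0$, which is precisely where the conventions $q_0=p_0=0$ together with $0<q_l<1$ and $0\le p_l<1$ enter.
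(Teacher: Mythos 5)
Your proof is correct and follows essentially the same route as the paper's: part (i) is established by bounding each summand via $E[N_i]\le n$, Condition VII(a), and Lemma \ref{l3}, and part (ii) by writing the quadratic form as a sum of non-negative terms and invoking the rank condition VII(b) to rule out a nonzero null vector (the paper phrases this last step as a proof by contradiction, but the content is identical). Your explicit remark on why $E[N_i]>0$ is a small tidying-up that the paper leaves implicit.
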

\begin{proof}
\textbf{Proof of part (i): } Using (\ref{Inform_uv}), we have 
    \begin{align*}
        I_{uv}(\boldsymbol{\theta}\ | \ \boldsymbol{\boldsymbol{\zeta}})=\sum_{i=1}^M\left[\sum_{j=1}^J\frac{E[N_{i}]} {q_{ij}}\frac{\partial q_{ij}}{\partial\theta_u}\frac{\partial q_{ij}}{\partial \theta_v}+\frac{E[N_{i}]}{(1-q_{i})}\frac{\partial q_i}{\partial\theta_u}\frac{\partial q_i}{\partial\theta_v}\right],
    \end{align*}
    for $u,v=1,2\cdots,s$. \\  
    \indent Note that for fixed n, $E[N_i]<n$, for $i=1,\ldots, M$. Since $0<q_{ij}<1$,  $0<\frac{1}{q_{ij}}<\infty$, $\forall \boldsymbol{\theta}\in \boldsymbol{\Theta}_0 $. Also, since $0<q_i<1$, $0<\frac{1}{q_i}< \infty$, $\forall \boldsymbol{\theta}\in \boldsymbol{\Theta}_0$. From Lemma \ref{l3}, we also have that the first-order derivatives of $q_{ij}$ and $q_i$ are bounded, for $i=1,2,\ldots,M$ and $j=1,2,\ldots,J$. Thus, $I_{uv}(\boldsymbol{\theta}\ | \ \boldsymbol{\boldsymbol{\zeta}})<\infty$. Hence, the proof. 

\textbf{Proof of part (ii): }  By definition of $I(\boldsymbol {\theta}|{\boldsymbol{\zeta}})$, it is a symmetric matrix. For any vector $a(\boldsymbol{\theta})\neq \boldsymbol{0}$, we get
\begin{align}\label{pa}
    a(\boldsymbol{\theta})^T I(\boldsymbol{\theta}\ | \ \boldsymbol{\boldsymbol{\zeta}})a(\boldsymbol{\theta})=\sum_{i=1}^M\left[\sum_{j=1}^J\frac{E[N_i]}{q_{ij}}\left\{a(\boldsymbol{\theta})^T\nabla_{\boldsymbol{\theta}}(q_{ij})\right\}^2+\frac{E[N_i]}{(1-q_i)}\left\{a(\boldsymbol{\theta})^T\nabla(q_i)\right\}^2\right].
\end{align}
Since $0<q_i<1$, $0<q_{ij}<1$ and $E[N_i]>0$, then from the equation (\ref{pa}), we can say that $I(\boldsymbol{\theta}\ | \ \boldsymbol{\boldsymbol{\zeta}})$ is non-negative definite matrix. We know prove that it is a positive definite matrix. We use the method of contradiction for this purpose. Let us consider that there exists a vector $a'(\boldsymbol{\theta})\neq 0$ such that 
\begin{align}\label{pa1}
    a'(\boldsymbol{\theta})^T I(\boldsymbol{\theta}\ | \ \boldsymbol{\boldsymbol{\zeta}})a'(\boldsymbol{\theta})=\sum_{i=1}^M\left[\sum_{j=1}^J\frac{E[N_i]}{q_{ij}}\left\{a'(\boldsymbol{\theta})^T\nabla_{\boldsymbol{\theta}}(q_{ij})\right\}^2+\frac{E[N_i]}{(1-q_i)}\left\{a'(\boldsymbol{\theta})^T\nabla(q_i)\right\}^2\right].
\end{align}
Now the equation \ref{pa1} holds if $a'(\boldsymbol{\theta})^T\nabla_{\boldsymbol{\theta}}(q_{ij})=0$ and $a'(\boldsymbol{\theta})^T\nabla(q_i)=0$, for all $i=1,\ldots,M$ and $j=1,\ldots,J$. However, this implies that $\nabla \boldsymbol{q}$ is a rank-deficient matrix, which contradicts the regularity conditions V(b). Hence $I(\boldsymbol{\theta}\ | \ \boldsymbol{\boldsymbol{\zeta}})$ is a positive definite matrix for all $\boldsymbol{\theta}\in\boldsymbol{\Theta}$.
\end{proof}
\begin{lemma}\label{l5}
   Suppose that the regularity conditions (I)-(VII) hold. Let $\boldsymbol{\Theta}_0$ be an open subset in the parameter space $\boldsymbol{\Theta}$ containing the true parameter $\boldsymbol{\theta}^0$. Then, all third derivatives $\frac{\partial^3 l(\boldsymbol{\theta})}{\partial\theta_u\partial\theta_v\partial\theta_w}$ exist. Furthermore, there exists a bound $K_{uvw}(\mathcal{D}_M)$ such that $\left|\frac{\partial^3 l(\boldsymbol{\theta})}{\partial\theta_u\partial\theta_v\partial\theta_w}\right|\leq K_{uvw}(\mathcal{D}_M)$, with $E[K_{uvw}]=nk_{uvw}$, where $k_{uvw}$ is finite for $u,v,w=1,2,\ldots,s$.
\end{lemma}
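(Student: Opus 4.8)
The plan is to obtain the third derivatives by differentiating the second-derivative expression of the log-likelihood once more, and then to dominate every resulting term by the bounds already furnished by Lemma \ref{l3}. Concretely, I would start from (\ref{lemma_2_dl}), which expresses $\partial^2 l(\boldsymbol{\theta})/\partial\theta_u\partial\theta_v=\sum_{i=1}^M\sum_{k=1}^{n_i}V_{ik,uv}$, and differentiate $V_{ik,uv}$ with respect to $\theta_w$. Since $V_{ik,uv}$ is assembled from the indicators $\delta_{ikj}$ and $1-\delta_{ik+}$, the quantities $q_{ij}$ and $q_i$ appearing in denominators up to the square, and the first- and second-order partial derivatives of $q_{ij}$ and $q_i$, one more differentiation produces a finite sum of terms, each of the schematic form
$$\frac{\delta_{ikj}}{q_{ij}^{a}}\bigl(\text{product of partial derivatives of }q_{ij}\text{ of orders summing to }3\bigr)\quad\text{or}\quad\frac{1-\delta_{ik+}}{(1-q_i)^{a}}\bigl(\text{product of partial derivatives of }q_i\text{ of orders summing to }3\bigr),$$
with $a\le 3$. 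Existence of all these third derivatives is then immediate: by Regularity Condition VI the maps $\boldsymbol{\theta}\mapsto F_T(L_i)$ and $\boldsymbol{\theta}\mapsto G(j,L_i)$ are three times differentiable on $\boldsymbol{\Theta}_0$, and $q_{ij}$ and $q_i$ are built from these by arithmetic operations whose denominator $\overline{F}_T(L_{i-1})$ never vanishes (Regularity Condition VII(a)); since moreover $q_{ij}>0$ and $1-q_i>0$ on $\boldsymbol{\Theta}_0$, the functions $\log q_{ij}$ and $\log(1-q_i)$, and hence $l(\boldsymbol{\theta})$, are of class $C^3$.

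Next I would bound each summand uniformly in $\boldsymbol{\theta}$. Fix a compact neighbourhood $\mathcal{B}$ of $\boldsymbol{\theta}^0$ with $\mathcal{B}\subset\boldsymbol{\Theta}_0$. By Lemma \ref{l3} the first-, second-, and third-order partial derivatives of $q_{ij}$ and $q_i$ are bounded on $\boldsymbol{\Theta}_0$ by finite constants, and on $\mathcal{B}$ we also have $q_{ij}\ge\varepsilon$ and $1-q_i\ge\varepsilon'$ for some $\varepsilon,\varepsilon'>0$ (using Regularity Condition VII(a) and the finiteness of the index set $\{(i,j)\}$), so $1/q_{ij}^{a}$ and $1/(1-q_i)^{a}$ are bounded for $a\le 3$. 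Since also $0\le\delta_{ikj}\le 1$ and $0\le 1-\delta_{ik+}\le 1$, each summand satisfies $\sup_{\boldsymbol{\theta}\in\mathcal{B}}\bigl|\partial V_{ik,uv}/\partial\theta_w\bigr|\le\widetilde K_{uvw}$ for a finite constant $\widetilde K_{uvw}$ that depends on $u,v,w$ only, not on $i$, $k$, or the data. Summing over $i$ and $k$ yields the data-dependent bound
$$\sup_{\boldsymbol{\theta}\in\mathcal{B}}\Bigl|\frac{\partial^3 l(\boldsymbol{\theta})}{\partial\theta_u\partial\theta_v\partial\theta_w}\Bigr|\;\le\;\widetilde K_{uvw}\sum_{i=1}^{M}n_i\;=:\;K_{uvw}(\mathcal{D}_M).$$
Taking expectations and using $E[N_i]=n\prod_{l=0}^{i-1}(1-q_l)(1-p_l)$ from \S \ref{mod}, we get $E[K_{uvw}]=\widetilde K_{uvw}\sum_{i=1}^{M}E[N_i]=n\,k_{uvw}$ with $k_{uvw}=\widetilde K_{uvw}\sum_{i=1}^{M}\prod_{l=0}^{i-1}(1-q_l)(1-p_l)<\infty$, since $M$ is finite and each factor lies in $(0,1]$. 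This delivers all assertions of the lemma.

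The main obstacle I anticipate is organizational rather than conceptual: one must carry out the somewhat lengthy third differentiation of $V_{ik,uv}$ and verify term by term that each summand in the resulting expansion is indeed dominated by one of the bounded building blocks supplied by Lemma \ref{l3} (the reciprocal powers $q_{ij}^{-a}$, $(1-q_i)^{-a}$ for $a\le3$, the derivative bounds $C_u,C_{uv},C_{uvw}$, and the trivial bounds on the indicators). A secondary subtlety is the passage from the open set $\boldsymbol{\Theta}_0$ to a compact sub-neighbourhood $\mathcal{B}$ of $\boldsymbol{\theta}^0$, which is what actually secures the uniform lower bounds $\inf_{\mathcal{B}}q_{ij}>0$ and $\inf_{\mathcal{B}}(1-q_i)>0$, and therefore the boundedness of the reciprocal powers entering $K_{uvw}(\mathcal{D}_M)$.
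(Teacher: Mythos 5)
Your proposal is correct and follows essentially the same route as the paper: differentiate the second-derivative expansion (\ref{lemma_2_dl}) once more, dominate every summand by the bounds $C_u$, $C_{uv}$, $C_{uvw}$ from Lemma \ref{l3} together with bounds on the reciprocals of $q_{ij}$ and $1-q_i$, sum over $i$ and $k$ to obtain $K_{uvw}(\mathcal{D}_M)$ proportional to $\sum_{i=1}^M n_i$, and conclude via $E[N_i]=nb_i$. Your additional step of passing to a compact sub-neighbourhood of $\boldsymbol{\theta}^0$ to secure uniform positive lower bounds on $q_{ij}$ and $1-q_i$ is a small but genuine tightening of the paper's argument, which only takes the maxima of $1/q_{ij}$ and $1/(1-q_i)$ over the finite index set.
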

\begin{proof}
    Differentiating both sides of (\ref{lemma_2_dl}) with respect to $\theta_w$, we get
\small\allowdisplaybreaks    \begin{align}
          \frac{\partial^3 l(\boldsymbol{\theta})}{\partial\theta_u\partial\theta_v\partial\theta_w}&=\sum_{i=1}^M\sum_{k=1}^{n_i}\left[\sum_{j=1}^J\left(\frac{\delta_{ikj}} {q_{ij}}\right)\left(\frac{\partial^3 q_{ij}}{\partial\theta_u\partial\theta_v\partial\theta_w}\right)-\frac{(1-{\delta_{ik+}})}{(1-q_{i})}\left(\frac{\partial^3 q_{i}}{\partial\theta_u\partial\theta_v\partial\theta_w}\right)\right]\nonumber\\
          &-\sum_{i=1}^M\sum_{k=1}^{n_i}\sum_{j=1}^J\left(\frac{\delta_{ikj}} {q_{ij}^2}\right)\left[\left(\frac{\partial^2 q_{ij}}{\partial\theta_u\partial\theta_v}\right)\left(\frac{\partial q_{ij}}{\partial\theta_w}\right)+\left(\frac{\partial^2 q_{ij}}{\partial\theta_u\partial\theta_w}\right)\left(\frac{\partial q_{ij}}{\partial\theta_v}\right)+\left(\frac{\partial^2 q_{ij}}{\partial\partial\theta_v\theta_w}\right)\left(\frac{\partial q_{ij}}{\partial\theta_u}\right)\right]\nonumber\\
          &-\sum_{i=1}^M\sum_{k=1}^{n_i}\frac{(1-{\delta_{ik+}})}{(1-q_{i})^2}\left[\left(\frac{\partial^2 q_{i}}{\partial\theta_u\partial\theta_v}\right)\left(\frac{\partial q_{i}}{\partial\theta_w}\right)+\left(\frac{\partial^2 q_{i}}{\partial\theta_u\partial\theta_w}\right)\left(\frac{\partial q_{i}}{\partial\theta_v}\right)+\left(\frac{\partial^2 q_{i}}{\partial\theta_v\partial\theta_w}\right)\left(\frac{\partial q_{i}}{\partial\theta_u}\right)\right]\nonumber\\
     &+2\sum_{i=1}^M\sum_{k=1}^{n_i}\left[\sum_{j=1}^J\left(\frac{\delta_{ikj}} {q_{ij}^3}\right)\left(\frac{\partial q_{ij}}{\partial\theta_u}\right)\left(\frac{\partial q_{ij}}{\partial\theta_v}\right)\left(\frac{\partial q_{ij}}{\partial\theta_w}\right)-\frac{(n_i-{\delta_{ik+}})}{(1-q_{i})^3}\left(\frac{\partial q_{i}}{\partial\theta_u}\right)\left(\frac{\partial q_{i}}{\partial\theta_v}\right)\left(\frac{\partial q_{i}}{\partial\theta_w}\right)\right].\nonumber
    \end{align}
\normalsize We now define $$\gamma_1=\max_{1\leq i\leq M}\max_{1\leq j\leq J} \left(\frac{1}{q_{ij}}\right)$$ and $$\gamma_2=\max_{1\leq i\leq M}\left(\frac{1}{1-q_i}\right).$$
Thus, we have
\begin{align*}
    \left|\frac{\partial^3 l(\boldsymbol{\theta})}{\partial\theta_u\partial\theta_v\partial\theta_w}\right|\leq \sum_{i=1}^M n_i&\left[(\gamma_1 +\gamma_2)JC_{uvw}+(\gamma_1^2 +\gamma_2^2)J^2(C_{uv}C_w+C_{vw}C_u+C_{uw}C_v)\right.\\
    &\left.+2(\gamma_1^3J +\gamma_2^3J^3)C_{u}C_vC_w\right]=\sum_{i=1}^M n_ib_{uvw},\,\, \text{say.}
\end{align*}
 Let $\sum_{i=1}^M n_ib_{uvw}=K_{uvw}$. Now $E\left[K_{uvw}\right]=nk_{uvw}$, where $k_{uvw}=b_{uvw}\sum_{i=1}^M b_i$. It is easy to see that $k_{uvw}<\infty$. Hence, the proof.
\end{proof}\\
\textbf{Proof of Theorem 2: }
\begin{proof}
\indent 
\textbf The second part of the theorem can be proved following the steps as in \citep[][Theorem 5.1, page 463]{lehmann_casella_statistical_inference}. For the first part, it is enough to prove the following results: 
\begin{enumerate}[(a)]
    \item $\frac{1}{n}\frac{\partial l(\boldsymbol{\theta})}{\partial\theta_u}\xrightarrow{ P}$ 0, as $n\rightarrow\infty$.
    \item $\frac{1}{n}\frac{\partial^2 l(\boldsymbol{\theta})}{\partial\theta_u\partial\theta_v}\xrightarrow{P}-I^1_{uv}(\boldsymbol{\theta}\ | \ \boldsymbol{\boldsymbol{\zeta}})=-\sum\limits_{i=1}^M b_i\sigma_{i,uv}^2$, as $n\rightarrow\infty$,\\ where $\sigma_{i,uv}^2=\left[\frac{1}{q_{ij}}\left(\frac{\partial q_{ij}}{\partial\theta_u}\right)\left(\frac{\partial q_{ij}}{\partial\theta_v}\right)+\left(\frac{1}{1-q_i}\right)\left(\frac{\partial q_{i}}{\partial\theta_u}\right)\left(\frac{\partial q_{i}}{\partial\theta_v}\right)\right]$
\end{enumerate}

Now, we have
\begin{align*}
  \frac{1}{n}  \frac{\partial l (\boldsymbol{\theta})}{\partial\theta_u}=\sum_{i=1}^M \frac{n_i}{n}\left[\frac{1}{n_i}\sum_{k=1}^{n_i}U_{ik,u}\right],
\end{align*}
where $U_{ik,u}=\sum\limits_{j=1}^J\left(\frac{\delta_{ikj}} {q_{ij}}\right)\left(\frac{\partial q_{ij}}{\partial\theta_u}\right)-\frac{(1-{\delta_{ik+}})}{(1-q_{i})}\left(\frac{\partial q_{i}}{\partial\theta_u}\right)$. For given $n_i$, $U_{ik,u}$s are i.i.d. and $E[U_{ik,u}\ | \ n_i]=0$, for $k=1,\ldots,n_i$. Now, by the Weak Law of Large Numbers (WLLN), for given $n_i$, $\sum\limits_{k=1}^{n_i}U_{ik,u}/n_i\xrightarrow{P}0$, as $n\rightarrow\infty $. Now, by Lemma \ref{l3}, it follows that 
$\frac{1}{n} \frac{\partial l (\boldsymbol{\theta})}{\partial\theta_u}\xrightarrow{P}0$ as $n\rightarrow\infty$.\\
\indent Similarly,  we have
\begin{align*}
  \frac{1}{n}  \frac{\partial l^2 (\boldsymbol{\theta})}{\partial\theta_u\partial\theta_v}=\sum_{i=1}^M \frac{n_i}{n}\left[\frac{1}{n_i}\sum_{k=1}^{n_i}V_{ik,uv}\right],
\end{align*}
where $V_{ik,uv}=\left[\sum\limits_{j=1}^J\left(\frac{\delta_{ikj}} {q_{ij}}\right)\left(\frac{\partial^2 q_{ij}}{\partial\theta_u\partial\theta_v}\right)-\frac{(1-{\delta_{ik+}})}{(1-q_{i})}\left(\frac{\partial^2 q_{i}}{\partial\theta_u\partial\theta_v}\right)\right]-\left[\sum\limits_{j=1}^J\left(\frac{\delta_{ikj}} {q_{ij}^2}\right)\left(\frac{\partial q_{ij}}{\partial\theta_u}\right)\left(\frac{\partial q_{ij}}{\partial\theta_v}\right)+\frac{(1-{\delta_{ik+}})}{(1-q_{i})^2}\right.$\\ 
$\left.\left(\frac{\partial q_{i}}{\partial\theta_u}\right)\left(\frac{\partial q_{i}}{\partial\theta_v}\right)\right].$ For given $n_i$, $V_{ik,uv}$s are i.i.d. and $E[V_{ik,uv}\ | \ n_i]=-\left[\frac{1}{q_{ij}}\left(\frac{\partial q_{ij}}{\partial\theta_u}\right)\left(\frac{\partial q_{ij}}{\partial\theta_v}\right)+\left(\frac{1}{1-q_i}\right)\right.$\\$\left.\left(\frac{\partial q_{i}}{\partial\theta_u}\right)\left(\frac{\partial q_{i}}{\partial\theta_v}\right)\right]$,
for $k=1,\ldots,n_i$. Now, by the WLLN, for a given $n_i$, $$\sum\limits_{k=1}^{n_i}\frac{V_{ik,uv}}{n_i}\xrightarrow{P}-\left[\frac{1}{q_{ij}}\left(\frac{\partial q_{ij}}{\partial\theta_u}\right)\left(\frac{\partial q_{ij}}{\partial\theta_v}\right)+\left(\frac{1}{1-q_i}\right)\left(\frac{\partial q_{i}}{\partial\theta_u}\right)\left(\frac{\partial q_{i}}{\partial\theta_v}\right)\right]=-\sigma^2_{i,uv},$$ as $n\rightarrow\infty $. Now, by Lemma \ref{l4}, we get the desired result.
\end{proof}\\
\section{Computation of $\nabla_{\boldsymbol{\theta}}(q_{ij})$ and $\nabla_{\boldsymbol{\theta}}(q_{i})$}\label{computation}
From (\ref{qij}) and (\ref{qi}), we have
\begin{align*}
  \log(q_{ij})&=\log\left(\overline{G}\left(j,L_{i-1}|\boldsymbol{\theta}\right)-\overline{G}\left(j,L_{i}|\boldsymbol{\theta}\right)\right)-\log\left(\overline{F}_T\left(L_{i-1}|\boldsymbol{\theta}\right)\right)
\end{align*} 
and
\begin{align*}
\log(1-q_{i})=\log\left(\overline{F}_T\left(L_{i}|\boldsymbol{\theta}\right)\right)-\log\left(\overline{F}_T\left(L_{i-1}|\boldsymbol{\theta}\right)\right),
\end{align*}
for $j=1,\ldots,J$ and $i=1,\ldots,M$. Now, using equations \ref{sub} and \ref{sub1}, the term $log(q_{ij})$ can be written as
   \begin{align*}
\log(q_{ij})=&\log\left(\overline{F}_T\left(L_{i-1}|\boldsymbol{\theta}\right)-\overline{F}_T\left(L_i|\boldsymbol{\theta}\right)\right)-\psi_j\left(\boldsymbol{\theta}\right)-\log\left(\overline{F}_T\left(L_{i-1}|\boldsymbol{\theta}\right)\right),
\end{align*}
where $\psi_j({\boldsymbol{\theta}})=\log\left(\sum_{j'=1}^J\left(\frac{\eta_j}{\eta_{j'}}\right)^\gamma\right)$. Thus, we get
\allowdisplaybreaks\bea
        \frac{1}{q_{ij}}\left(\frac{\partial q_{ij}}{\partial \theta_u}\right)&=&\frac{1}{\overline{F}_T\left(L_{i-1}|\boldsymbol{\theta}\right)-\overline{F}_T\left(L_i|\boldsymbol{\theta}\right)}\left[\left(\frac{\partial \overline{F}_T\left(L_{i-1}|\boldsymbol{\theta}\right)}{\partial\theta_u}\right)-\left(\frac{\partial \overline{F}_T\left(L_{i}|\boldsymbol{\theta}\right)}{\partial\theta_u}\right)\right]-\left(\frac{\partial\psi_j\left(\boldsymbol{\theta}\right)}{\partial \theta_u}\right)\nonumber\\
        &-&\frac{1}{ \overline{F}_T\left(L_{i-1}|\boldsymbol{\theta}\right)}\left(\frac{\partial \overline{F}_T\left(L_{i-1}|\boldsymbol{\theta}\right)}{\partial\theta_u}\right),\nonumber
    \eea
and 
\begin{align*}
        \left(\frac{1}{1-q_{i}}\right)\left(\frac{\partial q_{i}}{\partial \theta_u}\right)=\frac{1}{\overline{F}_T\left(L_{i-1}|\boldsymbol{\theta}\right)}\left(\frac{\partial \overline{F}_T\left(L_{i-1}|\boldsymbol{\theta}\right)}{\partial\theta_u}\right)-\frac{1}{\overline{F}_T(L_{i}|\boldsymbol{\theta})}\left(\frac{\partial \overline{F}_T(L_{i}|\boldsymbol{\theta})}{\partial\theta_u}\right),
\end{align*}
for $u=1,\ldots,s$.\\
\indent Thus, it is straightforward to obtain $S^2$, since we readily have
\begin{align*}
\frac{\partial{\overline{F}_T(t|\boldsymbol{\theta})}}{\partial \eta_j}=\frac{\gamma}{\eta_j}\left[1+\nu \sum_{j=1}^J\left(\frac{t}{\eta_j}\right)^\gamma\right]^{-\frac{1}{\nu}-1}\left(\frac{t}{\eta_j}\right)^{\gamma},   
\end{align*}
\begin{align*}
\frac{\partial{\overline{F}_T(t|\boldsymbol{\theta})}}{\partial \gamma}=-\left[1+\nu \sum_{j=1}^J\left(\frac{t}{\eta_j}\right)^\gamma\right]^{-\frac{1}{\nu}-1}\sum_{j=1}^J\left(\frac{t}{\eta_j}\right)^\gamma \log\left(\frac{t}{\eta_j}\right),    
\end{align*}
and 
\begin{align*}
\frac{\partial{\overline{F}_T(t|\boldsymbol{\theta})}}{\partial \nu}=\left[1+\nu\sum_{j=1}^J\left(\frac{t}{\eta_j}\right)^{\gamma}\right]^{-\frac{1}{\nu}}\left[\frac{1}{\nu^2}\log\left(1+\nu \sum_{j=1}^J\left(\frac{t}{\eta_j}\right)^\gamma\right)-\frac{1}{\nu}\left(\frac{\sum_{j=1}^J\left(\frac{t}{\eta_j}\right)^\gamma}{1+\nu \sum_{j=1}^J\left(\frac{t}{\eta_j}\right)^\gamma}\right)\right]
\end{align*}
for the dependent competing risks model in (\ref{sub}). Similarly, for the independent competing risks model in (\ref{sub1}), we have 
\begin{align*}
\frac{\partial{\overline{F}_T(t|\boldsymbol{\theta})}}{\partial \eta_j}=\frac{\gamma}{\eta_j}\left(\frac{t}{\eta_j}\right)^{\gamma}\exp\left(- \sum_{j=1}^J\left(\frac{t}{\eta_j}\right)^\gamma\right),
\end{align*}
and
\begin{align*}
\frac{\partial{\overline{F}_T(t|\boldsymbol{\theta})}}{\partial \gamma}=-\exp\left(-\sum_{j=1}^J\left(\frac{t}{\eta_j}\right)^\gamma\right)\sum_{j=1}^J\left(\frac{t}{\eta_j}\right)^\gamma \log\left(\frac{t}{\eta_j}\right).
\end{align*}
Furthermore, we also note that
\begin{align*}
\frac{\partial{\psi_j(\boldsymbol{\theta})}}{\partial \eta_j}=\frac{\left(\frac{\gamma}{\eta_j}\right)\sum_{j'\neq j}\left(\frac{\eta_j}{\eta_{j'}}\right)^\gamma}{\sum_{j'=1}^J\left(\frac{\eta_j}{\eta_{j'}}\right)^\gamma},
\end{align*}
\begin{align*}
\frac{\partial{\psi_j(\boldsymbol{\theta})}}{\partial \eta_{j'}}=-\frac{\left(\frac{\gamma}{\eta_{j'}}\right)\sum_{j'\neq j}\left(\frac{\eta_j}{\eta_{j'}}\right)^\gamma}{\sum_{j'=1}^J\left(\frac{\eta_j}{\eta_{j'}}\right)^\gamma},\,\, \text{for}\,\ j' \neq j, 
\end{align*}
and
\begin{align*}
\frac{\partial{\psi_j(\boldsymbol{\theta})}}{\partial \gamma}=\frac{\sum_{j=1}^j\left(\frac{\eta_j}{\eta_{j'}}\right)^\gamma \log\left(\frac{\eta_j}{\eta_{j'}}\right)}{\sum_{j'=1}^J\left(\frac{\eta_j}{\eta_{j'}}\right)^\gamma}.
\end{align*}

\section{Proof of Theorem \ref{theorem5}} \label{theoremapp}

\begin{proof}
\textbf{Proof of part (i): }  Let $\boldsymbol{\boldsymbol{\zeta}}_1=(M, h,p)$ and $\boldsymbol{\boldsymbol{\zeta}}_2=(M+1,h,p)$.  Then, it is enough to show that $\phi(\boldsymbol{\boldsymbol{\zeta}}_2)\leq \phi(\boldsymbol{\boldsymbol{\zeta}}_1)$.
    \begin{align*}
    &I(\boldsymbol{\theta}\ | \ \boldsymbol{\boldsymbol{\zeta}}_2)-I(\boldsymbol{\theta}\ | \ \boldsymbol{\boldsymbol{\zeta}}_1)\\
  =&\sum_{j=1}^J\frac{E[N_{M+1}]} {q_{(M+1)j}}\nabla_{\boldsymbol{\theta}}(q_{(M+1)j})\nabla_{\boldsymbol{\theta}}(q_{(M+1)j})^T+\frac{E[N_{M+1}]}{(1-q_{M+1})}\nabla_{\boldsymbol{\theta}}(q_{(M+1)})\nabla_{\boldsymbol{\theta}}(q_{(M+1)})^T.
\end{align*}
Now, for any  $\boldsymbol{a}(\boldsymbol{\theta})\neq 0$, we have
\begin{align*}
   & \boldsymbol{a}(\boldsymbol{\theta})^T[I(\boldsymbol{\theta}\ | \ \boldsymbol{\boldsymbol{\zeta}}_2)-I(\boldsymbol{\theta}\ | \ \boldsymbol{\boldsymbol{\zeta}}_1)]\boldsymbol{a}(\boldsymbol{\theta})\\
    =&\sum_{j=1}^J\frac{E[N_{M+1}]} {q_{(M+1)j}}\left[\boldsymbol{a}(\boldsymbol{\theta})^T\nabla_{\boldsymbol{\theta}}q_{(M+1)j}\right]^2+\frac{E[N_{M+1}]}{(1-q_{M+1})}\left[\boldsymbol{a}(\boldsymbol{\theta})^T\nabla_{\boldsymbol{\theta}}q_{(M+1)}\right]^2
\end{align*}
This shows that $I(\boldsymbol{\theta}\ | \ \boldsymbol{\boldsymbol{\zeta}}_2)-I(\boldsymbol{\theta}\ | \ \boldsymbol{\boldsymbol{\zeta}}_1)$ is a non-negative definite matrix. This implies that $I^{-1}(\boldsymbol{\theta}\ | \ \boldsymbol{\boldsymbol{\zeta}}_1)-I^{-1}(\boldsymbol{\theta}\ | \ \boldsymbol{\boldsymbol{\zeta}}_2)$ is also non-negative definite matrix. Therefore, we have
\begin{align*}
    &\boldsymbol{a}(\boldsymbol{\theta})^T\left[I^{-1}(\boldsymbol{\theta}\ | \ \boldsymbol{\boldsymbol{\zeta}}_1)-I^{-1}(\boldsymbol{\theta}\ | \ \boldsymbol{\boldsymbol{\zeta}}_2)\right]\boldsymbol{a}(\boldsymbol{\theta})\geq 0\\
    \implies& \boldsymbol{a}(\boldsymbol{\theta})^TI^{-1}(\boldsymbol{\theta}\ | \ \boldsymbol{\boldsymbol{\zeta}}_1)\boldsymbol{a}(\boldsymbol{\theta})-\boldsymbol{a}(\boldsymbol{\theta})^TI^{-1}(\boldsymbol{\theta}\ | \ \boldsymbol{\boldsymbol{\zeta}}_2)\boldsymbol{a}(\boldsymbol{\theta})\geq 0\\
    \implies&\phi( \boldsymbol{\boldsymbol{\zeta}}_2)\leq \phi(\boldsymbol{\boldsymbol{\zeta}}_1).
\end{align*}
\textbf{Proof of part (ii):} Let  $\boldsymbol{\boldsymbol{\zeta}}_1=(M, h,p_1)$ and $\boldsymbol{\boldsymbol{\zeta}}_2=(M,h,p_2)$, where $p_1<p_2$.  Then it is enough to prove that $\phi(\boldsymbol{\boldsymbol{\zeta}}_1)\leq \phi(\boldsymbol{\boldsymbol{\zeta}}_2)$.
\begin{align*}
   & I(\boldsymbol{\theta}\ | \ \boldsymbol{\boldsymbol{\zeta}}_1)-I(\boldsymbol{\theta}\ |\ \boldsymbol{\boldsymbol{\zeta}_2})\\
    =&\sum_{i=1}^M\left[\sum_{j=1}^J\frac{\left(E[N_i^1]-E[N_{i}^2]\right)} {q_{ij}}\nabla_{\boldsymbol{\theta}}(q_{ij})\nabla_{\boldsymbol{\theta}}(q_{ij})^T+\frac{\left(E[N_i^1]-E[N_{i}^2]\right)}{(1-q_{i})}\nabla_{\boldsymbol{\theta}}(q_{i})\nabla_{\boldsymbol{\theta}}(q_{i})^T\right],
\end{align*}
where $E[N_i^k]=n (1-p_k)^{i-1}\overline{F}_T(L_{i-1}|\boldsymbol{\theta})$, for $i=1,\ldots,M$ and $k=1,2$.
Now, for any $\boldsymbol{a}(\boldsymbol{\theta})\neq 0$, we have
\begin{align*}
   & \boldsymbol{a}(\boldsymbol{\theta})^T[I(\boldsymbol{\theta}\ | \ \boldsymbol{\boldsymbol{\zeta}}_1)-I(\boldsymbol{\theta}\ | \ \boldsymbol{\boldsymbol{\zeta}_2})]\boldsymbol{a}(\boldsymbol{\theta})\\
    =&\sum_{i=1}^M\left[\sum_{j=1}^J\frac{\left(E[N_i^1]-E[N_{i}^2]\right)} {q_{ij}}\left[\boldsymbol{a}(\boldsymbol{\theta})^T\nabla_{\boldsymbol{\theta}}(q_{ij})\right]^2+\frac{\left(E[N_i^1]-E[N_{i}^2]\right)}{(1-q_{i})}\left[\boldsymbol{a}(\boldsymbol{\theta})^T\nabla_{\boldsymbol{\theta}}(q_{i})\right]^2\right]
\end{align*}
Since 
\begin{align*}
    E[N_i^1]-E[N_i^2]&=n (1-p_1)^{i-1} \overline{F}_T(L_{i-1}|\boldsymbol{\theta})-n(1-p_2)^{i-1} \overline{F}_T(L_{i-1}|\boldsymbol{\theta})\\
    &=n \overline{F}_T(L_{i-1}|\boldsymbol{\theta})\left[(1-p_1)^{i-1}-(1-p_2)^{i-1}\right]\geq0
\end{align*}
This shows that $I(\boldsymbol{\theta}\ | \ \boldsymbol{\boldsymbol{\zeta}}_1)-I(\boldsymbol{\theta}\ | \ \boldsymbol{\boldsymbol{\zeta}}_2)$ is a non-negative definite matrix. This implies that $I^{-1}(\boldsymbol{\theta}\ | \ \boldsymbol{\boldsymbol{\zeta}}_2)-I^{-1}(\boldsymbol{\theta}\ | \ \boldsymbol{\boldsymbol{\zeta}}_1)$ is also non-negative definite matrix. Therefore, we have
\begin{align*}
    &\boldsymbol{a}(\boldsymbol{\theta})^T\left[I^{-1}(\boldsymbol{\theta}\ | \ \boldsymbol{\boldsymbol{\zeta}}_2)-I^{-1}(\boldsymbol{\theta}\ | \ \boldsymbol{\boldsymbol{\zeta}}_1)\right]\boldsymbol{a}(\boldsymbol{\theta})\geq 0\\
    \implies& \boldsymbol{a}(\boldsymbol{\theta})^TI^{-1}(\boldsymbol{\theta}\ | \ \boldsymbol{\boldsymbol{\zeta}}_2)\boldsymbol{a}(\boldsymbol{\theta})-\boldsymbol{a}(\boldsymbol{\theta})^TI^{-1}(\boldsymbol{\theta}\ | \ \boldsymbol{\boldsymbol{\zeta}}_1)\boldsymbol{a}(\boldsymbol{\theta})\geq 0\\
    \implies&\phi( \boldsymbol{\boldsymbol{\zeta}}_1)\leq \phi(\boldsymbol{\boldsymbol{\zeta}}_2).
\end{align*}
\end{proof}

\end{document}